\theoremstyle{definition}
\newtheorem{thm}{Theorem}
\newtheorem{lem}{Lemma}
\newtheorem{cor}{Corollary}
\newtheorem{rem}{Remark}
\newtheorem{pro}{Proposition}
\theoremstyle{definition}
\newtheorem{defn}{Definition}
\newtheorem{exm}{Example}
\newcommand{\Ac}{\mathcal{A}}
\newcommand{\Nc}{\mathcal{N}}
\newcommand{\Z}{\mathbb{Z}}
\newcommand{\Nn}{\mathbb{N}}
\newcommand{\Ff}{\mathbb{F}}
\newcommand{\CC}{\mathcal{C}}
\title[Group Theoretic Construction of Batch Codes]
      {A group theoretic construction of batch codes}
\author[E. K. Thomas]{}
\subjclass{Primary: 20D60; Secondary: 11T71 }
 \keywords{finite group, batch code, quasi-uniform, subspace, locally repairable}
\email{fr.dr.eldhokthomas@newmancollege.ac.in}
\begin{document}
\maketitle

\centerline{\scshape Eldho K. Thomas}
 \centerline{Department of Mathematics}
  \centerline{Newman College, Thodupuzha}
   \centerline{(Mahatma Gandhi University, Kottyam, India)}





\begin{abstract}
Batch codes serve as critical tools for load balancing in distributed storage systems. While numerous constructions exist for specific batch sizes 
$t$, current methodologies predominantly rely on code dimension parameters, limiting their adaptability. Practical implementations, however, demand versatile batch code designs capable of accommodating arbitrary batch sizes—a challenge that remains understudied in the literature. This paper introduces a novel framework for constructing batch codes through finite groups and their subgroup structures, building on the quasi-uniform group code framework proposed by Chan et al. By leveraging algebraic properties of groups, the proposed method enables systematic code construction, streamlined decoding procedures, and efficient reconstruction of information symbols. Unlike traditional linear codes, quasi-uniform codes exhibit broader applicability due to their inherent structural flexibility.

Focusing on abelian 2-groups, the work investigates their subgroup lattices and demonstrates their utility in code design—a contribution of independent theoretical interest. The resulting batch codes achieve near-optimal code lengths and exhibit potential for dual application as locally repairable codes (LRCs), addressing redundancy and fault tolerance in distributed systems. This study not only advances batch code construction but also establishes group-theoretic techniques as a promising paradigm for future research in coded storage systems. By bridging algebraic structures with practical coding demands, the approach opens new directions for optimizing distributed storage architectures.
\end{abstract}

%
%

\section{Introduction}
Batch codes were initially introduced by Ishai et al. \cite{Ishai} as a mechanism to address load balancing challenges in distributed storage systems that involve multiple servers. In the same work \cite{Ishai}, the authors also proposed the application of batch codes for private information retrieval (PIR). Various constructions of these codes were detailed in their study. The practical utility of batch codes for load balancing in distributed storage systems was further emphasized when the authors suggested the use of a specific variant of batch codes, referred to as switch codes, to enhance data routing in network switches (see also  \cite{Chee}, \cite{kiah}, \cite{cassuto}). Among the specialized classes of batch codes, combinatorial batch codes have been extensively examined in \cite{roy} - \cite{wei}. Another notable subclass, which forms the central focus of our research, is linear (or computational) batch codes (see \cite{lipmaa} - \cite{hui}). In this framework, data is represented as elements of a finite field, structured as a vector, and encoded through a linear transformation of that vector.

In the context of private information retrieval (PIR), Fazeli et al. \cite{fazeli} introduced coding schemes that utilize PIR codes, a relaxed variant of batch codes. These codes were demonstrated to be effective in classical linear PIR schemes, reducing the redundancy of information stored across distributed server systems. The authors proposed the use of a specialized layer (or code) to mediate between user requests and the data stored in the database, effectively emulating standard PIR protocols. Both batch codes and PIR codes are typically applied in distributed data storage systems, where coded words are distributed across multiple disks (or servers), with each disk storing a single symbol or a group of symbols. Data retrieval involves accessing a limited number of disks. Mathematically, this can be modeled by assuming that each information symbol depends on a small subset of other symbols. However, the nature of queries differs between the two code models: PIR codes handle requests for multiple copies of the same information symbol, whereas batch codes accommodate queries for any combination of distinct information symbols. Specifically, PIR codes of dimension $k$ support queries of the form $(x_i,\cdots,x_i), t$ times , while batch codes support queries of the form $(x_{i_1},\cdots, x_{i_t})$, where the indices may differ.

Linear batch codes and PIR codes share significant similarities with locally repairable codes (LRCs) \cite{vitali}, which are employed for data recovery in distributed storage systems (See \cite{cadam}-\cite{dimakis}). The primary distinction lies in their objectives: LRCs focus on repairing coded symbols, whereas batch codes and PIR codes aim to reconstruct information symbols. Some  bounds on the redundancy of batch codes are given in \cite{vardy} and the discussion on batch codes with restricted sizes of recovery sets are discussed in \cite{hui} and \cite{skachek}.

Recent advancements in batch code constructions have focused on optimizing parameters such as dimension \( k \) and restricted request sizes \( t \). However, existing frameworks struggle to efficiently support large or flexible query sizes, a critical limitation in applications requiring scalable data retrieval. While prior works achieve optimality for fixed \( t \), the challenge of balancing code length, recovery efficiency, and adaptability to diverse query patterns remains unresolved. This gap underscores the need for novel approaches that decouple code design from strict dependencies on \( k \) while enabling robust handling of varied request sizes.

In this work, we address these limitations by introducing a group-theoretic framework for constructing batch codes, leveraging the structural properties of quasi-uniform codes~\cite{chan,eldho}. Unlike traditional combinatorial or algebraic methods, our approach exploits subgroup lattices of finite groups—particularly abelian groups—to systematically generate codes with near-optimal length bounds independent of \( k \). Central to our contribution is the integration of homomorphism theorems, which streamline symbol reconstruction by eliminating the need for \textit{ad hoc} recovery set selection. This shift from direct codeword combinations to group-based mappings not only simplifies encoding and decoding but also broadens the scope of supported query sizes.

Theoretical implications of our method extend to characterizing subgroup architectures, offering insights into the interplay between group structure and code efficiency. Practically, we demonstrate that codes derived from groups such as \( (\mathbb{Z}_2)^k \) achieve scalable performance for large \( t \), with explicit bounds on redundancy. Furthermore, we highlight how these quasi-uniform codes, though non-traditional in their vector space organization, retain compatibility with localized repair mechanisms, positioning them as candidates for locally repairable codes (LRCs).

The paper is structured as follows: Section II reviews batch codes, PIR codes, and quasi-uniform frameworks. Section III and IV develops quasi-uniform constructions via finite abelian groups and  formalizes their adaptation to batch codes, including encoding/decoding protocols. Sections V and VI present concrete examples using $(\Z_2)^3$ and $(\Z_2)^4$, and generalized $(\Z_2)^k, k\in \Nn$ structures, analyzing their subgroup lattices and scalability. Finally, we explore applications of these codes in LRC design, emphasizing their versatility in distributed storage systems.

\section{Batch codes, PIR codes and quasi-unifrom codes: Preliminaries and definitions}

\subsection{Batch and PIR codes}

We denote  the set of natural numbers by $\Nn$, a finite field by $\Ff$ and  the first $k$ natural numbers $\{1, 2, \cdots, k\}$ by $[k]$.
\begin{defn}
An $(n,k,t)$-batch code $\CC$ over a finite alphabet $\Sigma$ is defined by an encoding mapping $C: \Sigma^k \rightarrow \Sigma^n$, and a decoding mapping $D: \Sigma^n \times [k]^t\rightarrow \Sigma^t$, such that \\
1) For any $x=(x_1,\cdots, x_k) \in \Sigma^k$ and $i_1, i_2, \cdots, i_t \in [k]$,
$$D(y=C(x), i_1,\cdots, i_t)=(x_{i_1}, \cdots x_{i_t}) .$$
2) The symbols in the query $(x_{i_1} , \cdots , x_{i_t} )$ can be reconstructed from $t$ respective pairwise disjoint recovery sets of symbols of 
$y = (y_1, y_2, \cdots , y_n) \in \Sigma^n$ (the symbol $x_{i_l}$ is reconstructed from the $l$-th recovery set for each $l=1,2 \cdots, t$).
\end{defn}
In this work, we consider, the primitive multiset batch codes as defined in \cite{Ishai}. Here,
multiset means that the same symbol can be queried by several users (i.e., the queried symbols form a multiset), and primitive means that each symbol is accessed separately of other symbols (i.e., the symbols are not arranged into groups).

Let $\Ff = \Ff_q$ be a finite field with $q$ elements, where $q$ is a prime power, and $\CC$ be a linear $[n, k]$ code over $\Ff$. Denote the
redundancy $\rho = n-k$.

\begin{defn}\cite{lipmaa}
A linear batch code is a batch code where the encoding of $\CC$ is given as a multiplication by a $k \times n$ generator matrix 
$G$ over $\Ff$ of an information vector $x \in \Ff^k$,
$$y =x.G; y \in \Ff^n.$$
A linear batch code with the parameters $n, k$ and $t$ over $\Ff_q$, where $t$ is a number of queried symbols, is denoted
as an $[n, k, t]_q$-batch code. Sometimes we simply write $[n, k, t]$-batch code if the value of $q$ is clear from the context.
\end{defn}

\begin{defn}
An $[n,k,t]$ batch code is also called an $[n,k,t,r]$ batch code if the recovery sets of each symbols are of size atmost $r$.
\end{defn}

\begin{defn}
Linear PIR codes are defined similarly to linear primitive multiset batch codes, with a difference that the supported queries are of the form $(x_i, x_i,\cdots , x_i)$,
$i \in [k]$, (and not $(x_{i_1} , \cdots , x_{i_t} ):~ i_1, i_2, \cdots , i_t \in [k]$ as in batch codes).
\end{defn}

\subsection{Quasi-uniform codes}
In this work, we introduce a non-traditional framework for constructing batch and PIR codes by leveraging abelian groups and their non-trivial subgroups (either partially or fully). Our methodology draws inspiration from the development of quasi-uniform codes derived from finite groups as in \cite{chan, eldho}. Below, we outline the foundational principles underlying this approach to code construction.

Let $X_1, \ldots, X_n$ be a collection of $n$ jointly distributed discrete random variables over some alphabet of size $N$, $\Ac$ be a non-empty subset of $[n]= \{1,\ldots ,n\}$,  $X_\Ac=\{X_i,~i\in\Ac\}$ and  $\lambda(X_\Ac)=\{x_\Ac: P(X_\Ac=x_\Ac)>0\}$, the support of $X_\Ac$.

\begin{defn}
A set of  $n$ random variables $X_1, \ldots, X_n$ is said to be {\it quasi-uniform}, if for any $\Ac\subseteq [n]$, $X_\Ac$ is uniformly distributed over its support .
\end{defn}
In the sequel, we consider codes in a generic way as defined below, unless specified otherwise.
\begin{defn}
\label{code}
A \textit{code $\CC$ of length $n$} is an arbitrary non-empty subset of $\mathcal{X}_1\times \cdots \times \mathcal{X}_n$ where
$\mathcal{X}_i$ is the alphabet for the $i$th codeword symbol, and each $\mathcal{X}_i$ might be different.
\end{defn}
Observe that this definition is much more general than the definition of a linear code, which is defined as:
\begin{defn}
\label{code2}
A \text{linear code} of length $n$ and dimension $k$ is a linear subspace $\CC$ with dimension $k$ of the vector space $\Ff_q^n$ where $\Ff_q$ is the finite field with $q$ elements. Such a code is called a $q$-\textit{ary code}. If $q = 2$ or $q = 3$, the code is described as a binary code, or a ternary code respectively. The vectors in $\CC$ are called codewords and the size of $\CC$, $|\CC|=q^k$. 
\end{defn}

We can associate to every code $\CC$ a set of random variables~\cite{chan} by treating each codeword $(X_1,\ldots,X_n)\in \CC$ as a random vector with probability
\[
P(X_\Nc=x_\Nc)=
\left\{
\begin{array}{ll}
1/|C| & \mbox{if }x_\Nc\in \CC, \\
0 & \mbox{otherwise}.
\end{array}
\right.
\]
To the $i$th codeword symbol then corresponds a {\it codeword symbol random variable} $X_i$ induced by $C$.

\begin{defn}~\cite{chan}
\label{def:code2}
A code $C$ is said to be \textit{quasi-uniform} if
the induced codeword symbol random variables are quasi-uniform.
\end{defn}
Given a code, we explained above how to associate a set of random variables, which might or not end up being quasi-uniform. Conversely, given a set of quasi-uniform random variables, a quasi-uniform code is obtained as follows.

Let $X_1, \ldots, X_n$ be a set of quasi-uniform random variables with probabilities $$P(X_\Ac=x_\Ac)= 1/|\lambda(X_\Ac)| \mbox{ for all } \Ac \subseteq \Nc.$$ The corresponding quasi-uniform code $C$ of length $n$ is  given by $\CC = \lambda(X_\Nc)=\{x_\Nc= P(X_\Nc=x_\Nc)>0\}$.

Quasi-uniform codes were defined and some of its roperties were studied in~\cite{chan} whereas an explicit construction of quasi-uniform codes from finite groups were discussed in \cite{eldho},  which is fundamental for the construction of batch codes in this paper.

%
%

\subsection{Quasi-uniform codes from groups}
\label{sec:group}
We recall the construction of quasi-uniform codes from groups in \cite{eldho} here. 

Let $G$ be a finite group of order $|G|$ with $n$ subgroups $G_1,\ldots,G_n$, and $G_\Ac=\cap_{i\in\Ac}G_i$.

Recall that the number of (left) cosets of $G_i$ in $G$ is called the index of $G_i$ in $G$, denoted by
$[G:G_i]$ and Lagrange's Theorem \cite{hungerford} states that $[G:G_i]=|G|/|G_i|$.
If $G_i$ is normal, the sets of cosets $G/G_i =\{gG_i :~g\in G\}$ are themselves groups, called quotient groups.

It is clear from Theorem 2 in \cite{eldho} that we can obtain quasi-uniform random variables $X_1,\ldots,X_n$ corresponding to $G_1,\ldots,G_n$ such that for all non-empty subsets $\Ac$ of $\Nc$, $$P(X_\Ac=x_\Ac) = |G_\Ac| / |G |.$$

Recall that the random variable $X_i=XG_i$ has for support the $[G:G_i]$ cosets of $G_i$ in $G$ where $X$ is a random variable uniformly distributed over $G$ with probability $P(X=g)=1/|G|$ for any $g\in G$.

This shows that quasi-uniform random variables may be obtained from finite groups.

Quasi-uniform codes are obtained from these quasi-uniform distributions by taking the support $\lambda(X_\Nc)$,
as explained before. Codewords (of length $n$) can then be described explicitly by letting the random variable $X$ take every possible values in the group $G$, and by computing the corresponding cosets as follows:
\begin{table}[htbp]
\begin{center}
\begin{tabular}{c|c|c|c|}
           & $G_1$        & $\hdots$ &  $G_n$ \\
\hline
$g_1$      & $g_1G_1$     &          & $g_1G_n$ \\
$g_2$      &  $g_2G_1$    &          & $g_2G_n$ \\
$\vdots$   &   $\vdots$        &          &  $\vdots$       \\
$g_{|G|}$  & $g_{|G|}G_1$ &$\hdots$ & $g_{|G|}G_n$ \\
\hline
\end{tabular}
\end{center}
\end{table}

Each row corresponds to one codeword of length $n$.
The cardinality $|\CC|$ of the code obtained seems to be $|G|$, but in fact, it depends on the subgroups $G_1,\ldots,G_n$. Indeed, it could be that the above table yields several copies of the same code.

\begin{lem}\label{lem:sizeC}\cite{eldho}
Let $\CC$ be a quasi-uniform code obtained from a group $G$ and its subgroups $G_1,\ldots,G_n$.
Then $|\CC|=|G|/|G_\Nc|$. In particular, if $|G_\Nc|=1$, then $|\CC|=|G|$.
\end{lem}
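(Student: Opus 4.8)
The plan is to count the number of distinct rows produced by the construction table, since by definition each row $(gG_1,\ldots,gG_n)$ associated to an element $g\in G$ is a codeword of $\CC$, and $|\CC|$ is the number of \emph{distinct} such rows. Concretely, I would introduce the map $\phi\colon G\to \mathcal{X}_1\times\cdots\times\mathcal{X}_n$ defined by $\phi(g)=(gG_1,\ldots,gG_n)$, whose image is exactly $\CC$; the task then reduces to computing $|\phi(G)|$ by understanding the fibers of $\phi$.

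The key step is to characterize when two elements give the same codeword. For $g,h\in G$ we have $\phi(g)=\phi(h)$ if and only if $gG_i=hG_i$ for every $i\in\Nc$. Since each $G_i$ is a subgroup, the coset identity $gG_i=hG_i$ holds precisely when $g^{-1}h\in G_i$. Intersecting these conditions over all $i$ yields $\phi(g)=\phi(h)$ if and only if $g^{-1}h\in\bigcap_{i\in\Nc}G_i=G_\Nc$, equivalently $h\in gG_\Nc$. Thus the fiber of $\phi$ over the codeword $\phi(g)$ is exactly the left coset $gG_\Nc$.

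It remains to count. Because $G_\Nc$ is an intersection of subgroups it is itself a subgroup of $G$, so its left cosets partition $G$ into blocks each of cardinality $|G_\Nc|$. The preceding step shows these cosets are precisely the fibers of $\phi$, so distinct codewords are in bijection with the left cosets of $G_\Nc$. By Lagrange's Theorem their number is the index $[G:G_\Nc]=|G|/|G_\Nc|$, giving $|\CC|=|G|/|G_\Nc|$. Setting $|G_\Nc|=1$ immediately yields $|\CC|=|G|$, which is the special case asserted.

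I expect no serious obstacle here; the argument is a clean fiber-counting (orbit--stabilizer style) computation. The only points requiring care are purely formal: verifying that $G_\Nc$ is genuinely a subgroup so that Lagrange applies, and handling left cosets correctly without assuming normality of the $G_i$---the characterization $gG_i=hG_i\iff g^{-1}h\in G_i$ is valid for arbitrary subgroups, so normality is never needed.
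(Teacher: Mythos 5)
Your proof is correct and complete: the fiber characterization $\phi(g)=\phi(h)\iff g^{-1}h\in G_\Nc$ is exactly right, and the count via Lagrange follows since $G_\Nc$, being an intersection of subgroups, is a subgroup. The paper itself states this lemma without proof (it is quoted from the cited reference on quasi-uniform codes from groups), and your fiber-counting argument is the standard one that underlies that reference, so there is nothing to reconcile; your care in avoiding any normality assumption on the $G_i$ is appropriate, since the lemma is stated for arbitrary finite groups and the left-coset criterion $gG_i=hG_i\iff g^{-1}h\in G_i$ indeed needs no normality.
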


Quasi-uniform codes obtained from groups are not necessarily linear. However, if we take abelian groups and their subgroups, we have a vectorspace structure \cite{eldho}. Therfore, in this paper,we focus on abelain groups only.

\subsubsection{Quasi-Uniform Codes from Abelian Groups}
\label{sec:ab}
Suppose that $G$ is an abelian group, with subgroups $G_1,\ldots,G_n$. The procedure mentioned above explains how to obtain a quasi-uniform distribution of $n$ random variables, and thus a quasi-uniform code of length $n$ from $G$. To avoid getting several copies of the same code, as explained  in Lemma \ref{lem:sizeC}, notice that since $G$ is abelian, all the subgroups $G_1,\ldots,G_n$ are normal, and thus so is $G_\Nc$. If $|G_\Nc|>1$, we consider instead of $G$ the quotient group $G/G_\Nc$, and we can thus assume wlog that $|G_\Nc|=1$.

\begin{lem} \cite{eldho}
Let $\CC$ be a quasi-uniform code obtained from a group $G$ and its subgroups $G_1,\ldots, G_n$. Then the alphabet size of $\CC$ is $\sum_{i=1}^n[G:G_i]$.
\end{lem}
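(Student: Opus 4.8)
The plan is to read the alphabet at each coordinate directly off the group construction recalled in Section~\ref{sec:group}, and then sum over the coordinates. First I would recall that the $i$th codeword symbol random variable is $X_i = XG_i$, where $X$ is uniformly distributed over $G$; as noted there, its support is exactly the set of left cosets of $G_i$ in $G$. Hence the alphabet $\mathcal{X}_i$ attached to the $i$th coordinate of $\CC$ (in the sense of Definition~\ref{code}) is the coset space $G/G_i$, whose cardinality is by definition the index $[G:G_i] = |G|/|G_i|$, the last equality being Lagrange's Theorem.

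Next I would make precise what ``alphabet size of $\CC$'' should mean. Since $\CC \subseteq \mathcal{X}_1 \times \cdots \times \mathcal{X}_n$ carries a possibly distinct symbol set $\mathcal{X}_i$ at each position, the total alphabet of the code is the disjoint union $\bigsqcup_{i=1}^n \mathcal{X}_i$ of the per-coordinate alphabets, and its size is
$$\sum_{i=1}^n |\mathcal{X}_i| = \sum_{i=1}^n [G:G_i],$$
which is the asserted formula. This is the natural companion to the preceding lemma, which counted the \emph{codewords} as $|G|/|G_\Nc|$ rather than the total number of \emph{symbols}.

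The one point needing a line of care is that each coordinate actually realises all $[G:G_i]$ cosets, so that $|\mathcal{X}_i|$ equals the index and not merely some smaller number. This is immediate: as $X$ runs over all of $G$ -- equivalently, as $g$ ranges over the rows of the construction table -- the value $gG_i$ exhausts every coset of $G_i$, because the map $g \mapsto gG_i$ is a surjection of $G$ onto $G/G_i$. I therefore expect no genuine obstacle here; the substance of the statement is entirely the identification of the per-coordinate alphabet with the coset space $G/G_i$, after which the summation is forced by each index being $[G:G_i]$.
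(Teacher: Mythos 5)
Your argument is correct and follows exactly the identification the paper itself sets up: the $i$th codeword symbol takes values in the coset space $G/G_i$, which has $[G:G_i]$ elements by Lagrange's Theorem, and the total alphabet is the (automatically disjoint, since cosets of distinct subgroups are distinct subsets of $G$) union of these, giving $\sum_{i=1}^n [G:G_i]$. The paper states this lemma as a citation without proof, but your reasoning is the standard argument behind it and raises no issues.
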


The size of the alphabet can often be reduced, as explained next. Recall the fundamental homomorphism theorem \cite{hungerford}.
\begin{thm}
Let $\phi : G \rightarrow G'$ be a group homomorphism with kernel $H$. Then the range of $G$, $\phi[G]$ is a group, and $\mu : G/H \rightarrow \phi[G]$ given by
$\mu(gH) = \phi(g)$ is an isomorphism. If $\gamma : G \rightarrow G/H$ is the homomorphism given by $\gamma(g) = gH$, then $\phi(g) = \mu \circ \gamma (g)$ for each $g \in G$.
\end{thm}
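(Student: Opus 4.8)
The plan is to establish the statement in the standard stages of the first isomorphism theorem, with the well-definedness of $\mu$ as the decisive point. First I would verify that the image $\phi[G]$ is a subgroup of $G'$: it contains the identity since $\phi(e_G) = e_{G'}$, it is closed because $\phi(a)\phi(b) = \phi(ab) \in \phi[G]$, and it is closed under inverses because $\phi(a)^{-1} = \phi(a^{-1}) \in \phi[G]$. Hence $\phi[G]$ is a group under the operation inherited from $G'$, which gives the first assertion.

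Next I would confirm that the quotient $G/H$ is itself a group, which requires $H = \ker\phi$ to be a normal subgroup of $G$. Normality is checked directly: for $h \in H$ and $g \in G$ one computes $\phi(ghg^{-1}) = \phi(g)\phi(h)\phi(g)^{-1} = \phi(g)\,e_{G'}\,\phi(g)^{-1} = e_{G'}$, so $ghg^{-1} \in H$ and therefore $gHg^{-1} \subseteq H$. Thus $G/H$ carries the usual coset multiplication $(aH)(bH) = (ab)H$, and the canonical projection $\gamma(g) = gH$ is a homomorphism by construction.

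The crux of the argument is to show that $\mu(gH) = \phi(g)$ is well defined, i.e. independent of the chosen coset representative; this is precisely where the kernel hypothesis is used, and I expect it to be the main (though mild) obstacle. If $g_1 H = g_2 H$, then $g_2^{-1}g_1 \in H = \ker\phi$, so $\phi(g_2^{-1}g_1) = e_{G'}$, which rearranges to $\phi(g_1) = \phi(g_2)$; hence $\mu$ assigns the same value to every representative. Once $\mu$ is known to be well defined, the homomorphism property is immediate from $\mu\bigl((aH)(bH)\bigr) = \mu\bigl((ab)H\bigr) = \phi(ab) = \phi(a)\phi(b) = \mu(aH)\mu(bH)$.

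Finally I would verify that $\mu$ is an isomorphism onto $\phi[G]$ and record the factorization. Surjectivity onto $\phi[G]$ is clear, since every element of the image has the form $\phi(g) = \mu(gH)$ for some $g \in G$. Injectivity reduces to showing the kernel of $\mu$ is trivial: if $\mu(gH) = e_{G'}$ then $\phi(g) = e_{G'}$, so $g \in H$ and $gH = H$ is the identity coset of $G/H$. Combining surjectivity and injectivity shows $\mu$ is an isomorphism. The factorization $\phi = \mu \circ \gamma$ is then just the identity $\mu(\gamma(g)) = \mu(gH) = \phi(g)$ for every $g \in G$. The only genuinely load-bearing step is the well-definedness of $\mu$; all remaining verifications are routine once that is in place.
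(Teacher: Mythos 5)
Your proof is correct and complete: it is the standard argument for the fundamental homomorphism (first isomorphism) theorem, with the well-definedness of $\mu$ correctly identified as the one step where the kernel hypothesis does real work. The paper does not prove this statement at all---it simply recalls it as a classical result with a citation to Hungerford---so there is nothing to compare against; your argument is the textbook one and is sound.
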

Here the mapping $\mu$ is called the canonional isomorphism. 

Using this theorem, for normal subgroups $G_i $ of G, let us cosider the canonical projection $\gamma_i :~G\rightarrow G/G_i$.
Let $\phi_i: G \rightarrow \phi_i[G]=H_i$ be a homomorphism, where $H_i$ is an abelain group with kernel of $\phi_i$ being $G_i$. Then $\mu_i: G/G_i \rightarrow H_i$  is an isomorphism given by $\mu_i(gG_i)=\phi_i(g)=h$, $i=1,\cdots,n$, and $ h\in H_i$.
Then $\gamma_i(g)=gG_i \mapsto \phi_i(g)=h \in H_i$ via $\mu_i(gG_i)=h$, $i=1,\cdots,n$.

Using this idea, each codeword symbol  $G/G_i$ can be considered as elements of an abelian group $H_i$. 

\begin{pro}\cite{eldho}
\label{lem:two}
Let $G$ be an abelian group with subgroups $G_1,\ldots,G_n$. Then its corresponding quasi-uniform code is defined
over $H_1 \times \cdots \times H_n$.
\end{pro}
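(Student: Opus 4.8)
The plan is to realise the proposition as a coordinate-wise application of the fundamental homomorphism theorem, turning the abstract coset alphabets into the concrete abelian groups $H_1,\ldots,H_n$. First I would make the code explicit: by the construction recalled in Section~\ref{sec:group}, each codeword of $\CC$ is a row $(gG_1,\ldots,gG_n)$ obtained by letting $X$ range over $G$, so the $i$th codeword symbol lives in the coset alphabet $G/G_i$ and $\CC\subseteq (G/G_1)\times\cdots\times(G/G_n)$. Having reduced to the case $|G_\Nc|=1$ as in Section~\ref{sec:ab}, Lemma~\ref{lem:sizeC} gives $|\CC|=|G|$, so distinct group elements yield distinct codewords.

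Next, for each $i$ I would invoke the fundamental homomorphism theorem for a homomorphism $\phi_i\colon G\to H_i$ with $\ker\phi_i=G_i$, which exists because $G$ is abelian: every subgroup is normal, so one may always take $\phi_i=\gamma_i$ and $H_i=G/G_i$, or any realisation of this quotient as a concrete abelian group. The theorem then supplies the canonical isomorphism $\mu_i\colon G/G_i\to H_i$, $\mu_i(gG_i)=\phi_i(g)$, identifying the $i$th coordinate alphabet $G/G_i$ with $H_i$.

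Then I would assemble these coordinate isomorphisms into the product map
\[
\Phi=\mu_1\times\cdots\times\mu_n\colon (G/G_1)\times\cdots\times(G/G_n)\longrightarrow H_1\times\cdots\times H_n ,
\]
which is a bijection since each $\mu_i$ is. Restricting $\Phi$ to $\CC$ carries the codeword $(gG_1,\ldots,gG_n)$ to $(\phi_1(g),\ldots,\phi_n(g))\in H_1\times\cdots\times H_n$, exhibiting $\CC$, after this relabelling of symbols, as a subset of $H_1\times\cdots\times H_n$; this is precisely the assertion that the code is defined over $H_1\times\cdots\times H_n$. Because $\Phi$ is a bijection of alphabets it preserves the cardinality $|\CC|=|G|$ and transports the quasi-uniform distribution to an identical one, so no coding-theoretic structure is lost.

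The argument is largely bookkeeping once the correct objects are in place; the step I would treat as the main obstacle is verifying that $\Phi$ is a well-defined bijection at the level of \emph{alphabets} rather than merely on the code, so that the $i$th symbol may be relabelled independently of the other coordinates. This reduces to the well-definedness of each $\mu_i$ on cosets, which is exactly what the fundamental homomorphism theorem guarantees, together with the elementary fact that a finite product of bijections is a bijection.
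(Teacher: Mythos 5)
Your proposal is correct and follows essentially the same route as the paper: the paper's justification is exactly the coordinate-wise application of the fundamental homomorphism theorem, using the canonical isomorphisms $\mu_i\colon G/G_i\to H_i$ to relabel each coset-valued codeword symbol as an element of the abelian group $H_i$. Your additional care in assembling the $\mu_i$ into a product bijection on the full alphabet, rather than just on the code, is a harmless and slightly more explicit version of the same argument.
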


In other words, we get a labeling of the cosets which respects the group structures componentwise. The next result then follows naturally.

\begin{cor}\cite{eldho}
\label{cor:abelian}
A quasi-uniform code $\CC$ obtained from an abelian group is itself an abelian group.
\end{cor}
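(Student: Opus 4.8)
The plan is to exhibit the code $\CC$ as the homomorphic image of $G$ inside the product group $H_1 \times \cdots \times H_n$, and then invoke the fact that a subgroup of an abelian group is abelian. By Proposition \ref{lem:two} we already know $\CC$ lives in $H_1 \times \cdots \times H_n$, where each $H_i = \phi_i[G]$ is abelian with $\ker \phi_i = G_i$; since a finite product of abelian groups is abelian, it suffices to realize $\CC$ as a subgroup of this product.

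First I would define the diagonal map $\psi : G \to H_1 \times \cdots \times H_n$ by $\psi(g) = (\phi_1(g), \ldots, \phi_n(g))$. Because each $\phi_i$ is a group homomorphism and the operation in the codomain is componentwise, $\psi$ is a homomorphism, so its image $\psi[G]$ is a subgroup of $H_1 \times \cdots \times H_n$. Next I would identify this image with the code: under the canonical isomorphisms $\mu_i : G/G_i \to H_i$, the codeword indexed by $g$, namely $(gG_1, \ldots, gG_n)$, corresponds precisely to $(\phi_1(g), \ldots, \phi_n(g)) = \psi(g)$, so $\CC = \psi[G]$ as a set.

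Then I would compute the kernel, $\ker \psi = \{ g \in G : g \in G_i \text{ for all } i \} = \bigcap_{i=1}^n G_i = G_\Nc$, so the fundamental homomorphism theorem gives $\CC = \psi[G] \cong G/G_\Nc$. This matches Lemma \ref{lem:sizeC}, which asserts $|\CC| = |G|/|G_\Nc|$, confirming that $\psi[G]$ accounts for every codeword rather than a proper subset. Being a subgroup of the abelian group $H_1 \times \cdots \times H_n$, the image $\CC$ is itself an abelian group, as claimed.

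The step requiring the most care is verifying that $\CC$ equals the full image $\psi[G]$ and not merely a subset of it—equivalently, that the set of codewords is closed under the componentwise operation. This closure is exactly the homomorphism property of $\psi$, since the product of the codewords indexed by $g$ and $g'$ is the codeword indexed by $gg' \in G$; once this and the cardinality count via $G_\Nc$ are in place, the remaining assertions follow immediately.
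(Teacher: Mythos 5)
Your proof is correct and is essentially the argument the paper intends: the paper gives no explicit proof, merely asserting that the corollary ``follows naturally'' from Proposition~\ref{lem:two}, and your diagonal homomorphism $\psi(g)=(\phi_1(g),\ldots,\phi_n(g))$ with $\ker\psi=G_\Nc$ is the standard way to make that precise. The identification $\CC=\psi[G]\cong G/G_\Nc$, cross-checked against Lemma~\ref{lem:sizeC}, is exactly right and fills in the detail the paper leaves implicit.
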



The classification of abelian groups tells us that each $H_i$ can
be expressed as the direct product of cyclic subgroups of order a prime power.


If all the subgroups $G_1,\ldots,G_n$ have index $p$, then we get an $[n,k]$ linear code over $\Ff_p$
(see Example \ref{exm:almost}).



As an illustration, here is a simple family of abelian groups that generate $[n,k]$ linear codes over $\Ff_p$ of length $n$ and dimension $k$.
\begin{lem}
The elementary abelian group $\Z_p\times \Z_p$ generates a $[p+1,2]$ linear code over $\Ff_p$ with minimum distance $p$.
\end{lem}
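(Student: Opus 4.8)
The plan is to realize $G=\Z_p\times\Z_p$ as a two-dimensional vector space over $\Ff_p$ and to identify concretely the subgroups that enter the quasi-uniform construction. First I would observe that the nontrivial proper subgroups of $G$ are exactly the one-dimensional subspaces, i.e.\ the lines through the origin, and that by the standard count there are $(p^2-1)/(p-1)=p+1$ of them. Taking these as $G_1,\ldots,G_n$ gives $n=p+1$, and since each line has order $p$ every subgroup has index $[G:G_i]=p$. The remark preceding the statement then guarantees that the resulting code $\CC$ is an $[n,k]$ linear code over $\Ff_p$, so the length is already $p+1$.

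Next I would pin down the dimension. The key point is that any two distinct lines meet only at the origin, so $G_\Nc=\bigcap_{i=1}^{n}G_i=\{0\}$ and hence $|G_\Nc|=1$. Lemma \ref{lem:sizeC} then yields $|\CC|=|G|/|G_\Nc|=p^2$, and since $\CC$ is linear over $\Ff_p$ this forces $p^k=p^2$, i.e.\ $k=2$.

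For the minimum distance I would use that, for a linear code, $d$ equals the least weight among nonzero codewords. Recall that the codeword associated with $g\in G$ has its $i$-th coordinate equal to the trivial coset (the zero symbol) precisely when $g\in G_i$; hence the Hamming weight of that codeword is the number of subgroups $G_i$ that do \emph{not} contain $g$. The decisive combinatorial fact is that every nonzero $g$ lies on exactly one line through the origin, namely its own span, so exactly one of the $p+1$ subgroups contains it. Therefore every nonzero codeword has weight exactly $(p+1)-1=p$, and the minimum distance is $p$.

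The construction is essentially bookkeeping once the subgroup structure is understood; the only real obstacle is the initial identification that the index-$p$ subgroups of $\Z_p\times\Z_p$ are exactly the $p+1$ lines through the origin, together with the observation that each nonzero element is covered by precisely one of them. This last fact is what makes the code constant-weight and hence immediately determines $d$. As a sanity check, $d=p=n-k+1$ meets the Singleton bound, so the code is in fact MDS.
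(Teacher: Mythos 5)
Your proof is correct and follows the same route as the paper: identify the $p+1$ lines through the origin as the index-$p$ subgroups with trivial pairwise intersection, giving length $p+1$ and $p^2$ codewords, hence dimension $2$. In fact you go a step further than the paper's own proof, which asserts the minimum distance without argument, whereas your observation that each nonzero $g$ lies in exactly one of the $p+1$ subgroups (so every nonzero codeword has weight exactly $p$) supplies the missing justification and the MDS remark is a nice sanity check.
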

\begin{proof}
The group $G=Z_p\times Z_p$ contains $p+1$ non-trivial subgroups, of the form $\langle (1,i)\rangle$ where $i=0,1,\ldots p-1$ and $\langle (0,1) \rangle$. They all have index $p$ and trivial pairwise intersection. We thus get a code of length $n=p+1$, containing $p^2$ codewords, which is linear over $\Ff_p$ (by using that $Z_p$ is isomorphic to the integers mod $p$). 
\end{proof}

\begin{exm}
\label{exm:almost}
Consider the elementary abelian group $G=Z_3 \times Z_3 \simeq \{0,1,2\} \times \{0,1,2\}$ and the
four non-trivial  subgroups:
\begin{itemize}
\item $G_1=\langle (1,0) \rangle = \{ (0,0), (1,0), (2,0) \}$,\\
\item $G_2=\langle (0,1) \rangle = \{ (0,0), (0,1), (0,2) \}$,\\
\item $G_3=\langle (1,1) \rangle = \{ (0,0), (1,1), (2,2) \}$, and\\
\item  $G_4=\langle (1,2) \rangle = \{ (0,0), (1,2), (2,1) \}$.
\end{itemize}
Using the construction method in Section \ref{sec:group}, we get the corresponding quasi-uniform code as in the following tables.
We write $ij$ instead of $(i,j)$ for brevity.
\begin{table}[H]
\label{table:one}
\begin{center}
\begin{tabular}{c|c|c|c|c|}
     & $\langle (10) \rangle$ & $\langle (01) \rangle$ & $\langle (11) \rangle$ & $\langle (12) \rangle$ \\
\hline
$\!\!\!(00)\!$ & $\langle (10) \rangle$ &  $\langle (01) \rangle$ & $\langle (11) \rangle$ & $\langle (12) \rangle$ \\
$\!\!\!(01)\!$ & $\!(01)(11)(21)\!$   &  $\langle (01) \rangle$ & $\!(01)(12)(20)\!$   & $\!(01)(10)(22)\!$  \\
$\!\!(02)\!$ & $\!(02)(12)(22)\!$   &  $\langle (01) \rangle$ & $\!(10)(21)(02)\!$   & $\!(02)(11)(20)\!$  \\
$\!\!\!(10)\!$ & $\langle (10) \rangle$  & $\!(10)(11)(12)\!$ & $\!(10)(21)(02)\!$   & $\!(01)(10)(22)\!$  \\
$\!\!\!(11)\!$ & $\!(01)(11)(21)\!$   & $\!(10)(11)(12)\!$  &$\langle (11) \rangle$ & $\!(02)(11)(20)\!$  \\
$\!\!\!(12)\!$ & $\!(02)(12)(22)\!$   &  $\!(10)(11)(12)\!$ & $\!(01)(12)(20)\!$  & $\langle (12) \rangle$ \\
$\!\!\!(20)\!$ &$\langle (10) \rangle$ &$\!(20)(21)(22)\!$ &$\!(01)(12)(20)\!$  & $\!(02)(11)(20)\!$ \\
$\!\!\!(21)\!$ &$\!(01)(11)(21)\!$ &$\!(20)(21)(22)\!$ &$\!(10)(21)(02)\!$ & $\langle (12) \rangle$ \\
$\!\!\!(22)\!$ &$\!(02)(12)(22)\!$ & $\!(20)(21)(22)\!$&$\langle (11) \rangle$  & $\!(01)(10)(22)\!$  \\
\end{tabular}
\end{center}
\vspace{.4cm}
\caption{Quasi-uniform code constructed from $Z_3 \times Z_3$}
\end{table}
\vspace{-5mm}

Now let $H_1=G/\langle (10) \rangle, ~H_2 =G/\langle (01) \rangle,~ H_3= G/\langle (11) \rangle$ and $H_4= G/\langle (12) \rangle$. Note that $H_i \simeq \Z_3=\{0,1,2\}$ for all $i$.
If we replace the subgroups by their quotients in the above table, we get the following code using Proposition~\ref{lem:two}:
\begin{table}[H]
\begin{center}
\begin{tabular}{c|c|c|c|c|}
     & $\langle (10) \rangle$ & $\langle (01) \rangle$ & $\langle (11) \rangle$ & $\langle (12) \rangle$ \\
\hline
$\!\!\!(00)\!$ & $0$ & $0  $ & $0    $ & $0  $ \\
$\!\!\!(01)\!$ & $1$ & $0  $ & $1$ & $1$ \\
$\!\!(02)\!$   & $2$ & $0  $ & $2$ & $2$ \\
$\!\!\!(10)\!$ & $0$ & $1  $ & $2$ & $1$ \\
$\!\!\!(11)\!$ & $1$ & $1  $ & $0   $ & $2$ \\
$\!\!\!(12)\!$ & $2$ & $1 $ & $1$ & $0      $ \\
$\!\!\!(20)\!$ & $0$ & $2 $ & $1$ & $2$ \\
$\!\!\!(21)\!$ & $1$ & $2 $ & $2$ & $0   $ \\
$\!\!\!(22)\!$ & $2$ & $2 $ & $0 $ & $1$ \\
\end{tabular}
\end{center}
\vspace{.4cm}
\caption{Quasi-uniform code constructed from $Z_3 \times Z_3 $}
\end{table}
\vspace{-5mm}
It is a ternary linear code of length $p+1=4$ and minimum distance $p=3$ with generator matrix
$ \left( \small{\begin{array}{cccc}
1 & 0 & 1 & 1\\
0 & 1 & 2 &1
\end{array} }\right)$.
\end{exm}

\section{Encoding and decoding of quasi-unifrom codes from abelian groups}
\label{sec:nilpotent}

In this paper, we exclusively focus on quasi-unifrom codes constructed using abelian groups and their non-trivial subgroups in order to have a vector space structure. All finite groups can be expressed as the direct product of $\Z_q$ where $q=p^r$, a prime power. In particular, consider $G=(\Z_p)^k$, the direct product of $k$ copies of $\Z_p$. Within this framework, the information vecors are defined as  $p$-ary sequences of length $k$. Our objective is to  encode these $p^k$ vectors into codewords using the non-trivial subgroups of $G$.

The following result computes the number of non-trivial subgroups of $(\Z_p)^k$.
\begin{pro}
The total number of non-trivial subgroups of $(\Z_p)^k$ is $$\sum_{m=1}^{k-1}\prod_{i=1}^{m} \frac{p^{k - i + 1} - 1}{p^{i} - 1}.$$
\end{pro}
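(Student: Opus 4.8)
The plan is to exploit the fact that $(\mathbb{Z}_p)^k$ is nothing but the vector space $\mathbb{F}_p^k$ over the prime field $\mathbb{F}_p$, so that its subgroups coincide exactly with its $\mathbb{F}_p$-linear subspaces. Under this identification, counting subgroups reduces to counting subspaces of each possible dimension, and the \emph{non-trivial} subgroups are precisely the proper nonzero subspaces, i.e. those of dimension $m$ with $1\le m\le k-1$. Thus it suffices to show that the number $N_m$ of $m$-dimensional subspaces equals the single product $\prod_{i=1}^{m}(p^{k-i+1}-1)/(p^i-1)$ (the Gaussian binomial coefficient $\binom{k}{m}_p$) and then to sum over $m$.

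First I would count the ordered linearly independent $m$-tuples of vectors in $\mathbb{F}_p^k$. Choosing the vectors one at a time, the $j$-th vector must avoid the span of the previously chosen $j-1$ vectors (a subspace of size $p^{j-1}$), giving $p^k-p^{j-1}$ admissible choices; hence there are $\prod_{j=0}^{m-1}(p^k-p^j)$ such tuples. The same count applied \emph{inside} a fixed $m$-dimensional subspace shows each such subspace possesses exactly $\prod_{j=0}^{m-1}(p^m-p^j)$ ordered bases. Dividing the first count by the second, the number of $m$-dimensional subspaces is $N_m=\prod_{j=0}^{m-1}(p^k-p^j)\big/\prod_{j=0}^{m-1}(p^m-p^j)$.

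The one computational step is the simplification of this ratio. Factoring $p^k-p^j=p^j(p^{k-j}-1)$ and $p^m-p^j=p^j(p^{m-j}-1)$, the common factor $\prod_{j=0}^{m-1}p^j$ cancels, leaving $N_m=\prod_{j=0}^{m-1}(p^{k-j}-1)\big/\prod_{j=0}^{m-1}(p^{m-j}-1)$. Reindexing by $i=j+1$ turns the numerator into $\prod_{i=1}^{m}(p^{k-i+1}-1)$ and the denominator into $\prod_{i=1}^{m}(p^i-1)$, which is exactly the claimed product for the $m$-dimensional piece.

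Finally, summing $N_m$ over $m=1,\ldots,k-1$ — excluding the zero subspace ($m=0$) and the whole group ($m=k$), both of which are trivial — yields the stated total. I expect no serious obstacle here; the only places demanding care are the bookkeeping of indices in the cancellation step and confirming that the \textbf{non-trivial} convention correctly excludes both extreme subspaces. The latter is borne out by the $\mathbb{Z}_3\times\mathbb{Z}_3$ instance ($p=3$, $k=2$), where the formula returns $\binom{2}{1}_3=(3^2-1)/(3-1)=4$, matching the four one-dimensional subgroups listed in Example~\ref{exm:almost}.
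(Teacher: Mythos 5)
Your proposal is correct and follows essentially the same route as the paper: identify subgroups of $(\mathbb{Z}_p)^k$ with $\mathbb{F}_p$-subspaces, count the $m$-dimensional ones by the Gaussian binomial coefficient $\binom{k}{m}_p$, and sum over $m=1,\dots,k-1$ to exclude the two trivial subgroups. The only difference is that you derive the Gaussian binomial formula from first principles (counting ordered independent tuples and dividing by the number of ordered bases), whereas the paper simply cites it; this is a welcome elaboration but not a distinct approach.
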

\begin{proof}
The subgroup structure of \((\mathbb{Z}_p)^k\) can be understood by recognizing that \((\mathbb{Z}_p)^k\) is an elementary abelian group, which is also a vector space over the finite field \(\mathbb{Z}_p\). In this context, the subgroups of \((\mathbb{Z}_p)^k\) correspond exactly to the vector subspaces of the \(k\)-dimensional vector space over \(\mathbb{Z}_P\).

Notice that each subgroup of \((\mathbb{Z}_p)^k\) is a vector subspace, and vice versa. This is because the group operation (addition) and scalar multiplication (repeated addition) are compatible. Then the number of \(m\)-dimensional subspaces (and hence subgroups) in a \(k\)-dimensional vector space over \(\mathbb{Z}_p\) is given by the Gaussian binomial coefficient \(\binom{k}{m}_p\) \cite{knuth}, which is given by
\begin{equation}
\label{eqn:one}
\binom{k}{m}_p = \prod_{i=1}^{m} \frac{p^{k - i + 1} - 1}{p^{i} - 1}
\end{equation}
Also note that each subgroup is itself an elementary abelian group isomorphic to \((\mathbb{Z}_p)^m\) for some \(0 \leq m \leq k\).

Thus, the subgroup structure of \((\mathbb{Z}_p)^k\) is fully characterized by the subspaces of a \(k\)-dimensional vector space over \(\mathbb{Z}_p\), with each subgroup being elementary abelian.
Therefore, the total number of non-trivial subgroups of $(\Z_p)^k$ is $$\sum_{m=1}^{k-1}\prod_{i=1}^{m} \frac{p^{k - i + 1} - 1}{p^{i} - 1}~.$$
\end{proof}
For example when $k=3$ and $m=1$ the number of 1-dimensional subgroups is
\[\frac{p^{3} - 1}{p - 1} = p^2+p+1.\]

When $p=3$, the above number is 13. That is there are 13 non-trivial 1-dimensional subgroups for $(\Z_3)^3$.

In order to construct a quasi-uniform code of length $n$, consider a finite group $G$ and $n$ non-trivial subgroups  whose intersection is just the identity element. Notice that, if we take subgroups of higher dimensions, then the corresponding index is smaller and hence the codeword symbols live in smaller alphabets and viceversa.

\textbf {\textit{Encoding}:}
Given an information vector $g \in G$ the corresponding codeword of length $n$  in $H_1 \times \cdots \times H_n$; $H_i \simeq   \frac{G}{G_i}$ is obtained as  $$\psi(g)= (\gamma_i \circ \mu_i(g); i=1,2,\cdots,n)$$ where $\mu_i, \gamma_i$ are defined already.

The encoding mapping is uniquely defined since the intersection of all subgroups taken  is trivial.
That is, if $g_1 \ne g_2 \in G$, then $\psi(g_1) \ne \psi(g_2)$.

Also notice that the cardinality of  $H_i$ may not be the same  for different indices $i$. If all subgroups taken are of same order, then $H_i$'s too are of the same order and hence all codeword symbols have same alphabet. 

\textbf{\textit{Decoding}:}
Let $h= (h_1, \cdots, h_n) \in H_1 \times \cdots \times H_n$ be a coded vector. Its decoding can be done as follows:

Take $h'=(\gamma_i^{-1}(h_i): i=1,\cdots,n)$. The $i$th component of $h'$ is an element of the 
quotient group $G/G_i$, which is a coset of $G_i$ in $G$. Now the intersection of all cosets from all components gives the information vector $g \in G$. This is because $G_\Nc =G_1 \cap \cdots \cap G_n =\{e\}$.   

\begin{exm}
Consider  the code in  \ref{exm:almost}. Let the received codeword is $ (1, 2,2,0)$. Now consider 
$$ h' = (\gamma_1^{-1}(1), \gamma_2^{-1}(2), \gamma_3^{-1}(2), \gamma_4^{-1}(0))$$
$$= \left (\{(01),(11),(21)\}, \{(20),(21),(22)\}, \{(10),(21),(02)\}, \{(12),(21),(00)\}\right ).$$ Then the intresection of all the above sets gives the vector $(21)$, which the information vector as we can verify from Table \ref{table:one}.
\end{exm}

Also note that, in this exmaple, it is enough to take the pairwise intersection of any two codeword symbols to get the information vector, because all subgroups used in the code construction are pairwise disjoint. This idea can be used to explore the batch proerties of quasi-uniform codes. 
\section{Quasi-uniform codes as batch codes}
\label{sec:construction}
We begin by examining the group \( G = (\mathbb{Z}_p)^k \) and its subgroups to analyze the batch properties of quasi-uniform codes. Specifically, we focus on one-dimensional subgroups of \( G \) of order \( p \) (where \( p \) is prime) that pairwise intersect trivially. As established by Equation \ref{eqn:one}, the total number of such one-dimensional subgroups in \( G \) is \( \frac{p^k - 1}{p - 1} \).  

A quasi-uniform code can then be constructed by incorporating all these subgroups. The first \( k \) subgroups are generated by the standard basis vectors of \( (\mathbb{Z}_p)^k \), which take the form \( (0, 0, \ldots, 1, \ldots, 0) \), where the \( i \)-th component is 1 (for \( i = 1, 2, \ldots, k \)) and all other components are 0. The codewords corresponding to the basis vectors directly form the generator matrix of the code. Given that each subgroup has index \( p^{k-1} \), the resulting code is a \( p^{k-1} \) - ary linear code of length \( \frac{p^k - 1}{p - 1} \).

\begin{exm}
\label{exm:3}
Consider $G= (\Z_2)^3$ and its subgroups of order $2$. \\
Let $G_1=<(100)>, G_2=<(010)>, G_3=<(001)> , G_4 =<(101)>, G_5= <(110)>, G_6=<(011)>$ and $G_7=<(111)>$.

The corresponding quasi-uniform code $\CC$ can be written in terms of cosets as follows:
{\small \begin{table}[H]
\begin{center}
\begin{tabular}{c|c|c|c|c|c|c|c|}
     & $G_1$ &  $G_2$ & $G_3$ & $G_4$ & $G_5$ & $G_6$ & $G_7$ \\
\hline
$\!\!\!(000)\!$ & $G_1$ &  $G_2$ & $G_3$ & $G_4$ & $G_5$ & $G_6$ & $G_7$ \\
$\!\!\!(100)\!$ & $\!(100)(000)\!$   &  $\!(100)(110)\!$& $\!(100)(101)\!$   & $\!(100)(001)\!$ & $\!(100)(010)\!$ & $\!(100)(111)\!$ & $\!(100)(011)\!$  \\
$\!\!(010)\!$ & $\!(010)(110)\!$   &  $\!(010)(000)\!$ & $\!(010)(011)\!$   & $\!(010)(111)\!$ & $\!(010)(100)\!$ & $\!(010)(001)\!$ & $\!(010)(101)\!$ \\
$\!\!\!(001)\!$ & $\!(001)(101)\!$  & $\!(001)(011)\!$ & $\!(001)(000)\!$   & $\!(001)(100)\!$ & $\!(001)(111)\!$ & $\!(001)(010)\!$ & $\!(001)(110)\!$ \\
$\!\!\!(110)\!$ & $\!(110)(010)\!$   & $\!(110)(100)\!$  &$\!(110)(111)\!$ & $\!(110)(011)\!$ & $\!(110)(000)\!$ & $\!(110)(101)\!$ & $\!(110)(001)\!$ \\
$\!\!\!(101)\!$ & $\!(101)(001)\!$   &  $\!(101)(111)\!$ & $\!(101)(100)\!$  & $\!(101)(000)\!$ & $\!(101)(011)\!$ & $\!(101)(110)\!$ & $\!(101)(010)\!$\\
$\!\!\!(011)\!$ &$\!(011)(111)\!$ &$\!(011)(001)\!$ &$\!(011)(010)\!$  & $\!(011)(110)\!$ & $\!(011)(101)\!$ & $\!(011)(000)\!$ & $\!(011)(100)\!$\\
$\!\!\!(111)\!$ &$\!(111)(011)\!$ &$\!(111)(101)\!$ &$\!(111)(110)\!$ & $\!(111)(010)\!$ & $\!(111)(001)\!$ & $\!(111)(100)\!$ & $\!(111)(000)\!$\\
\end{tabular}
\end{center}
\vspace{.4cm}
\caption{Quasi-uniform code constructed from $Z_2 \times Z_2 \times Z_2$.}
\end{table}}
\vspace{-5mm}

After applying the isomorphism in the encoding mapping, we get the final code as follows:

\begin{table}[H]
\begin{center}
\begin{tabular}{c|c|c|c|c|c|c|c|}
     & $G_1$ &  $G_2$ & $G_3$ & $G_4$ & $G_5$ & $G_6$ & $G_7$ \\
\hline
$\!\!\!(000)\!$ & $00$ &  $00$    & $00$ & $00$ & $00$ & $00$ & $00$ \\
$\!\!\!(100)\!$ & $00$   &  $10$  & $01$   & $10$ & $10$ & $01$ & $10$  \\
$\!\!(010)\!$ & $01$   &  $00$    & $10$   & $01$ & $10$ & $10$ & $01$ \\
$\!\!\!(001)\!$ & $10$  & $01$    & $00$   & $10$ & $01$ & $10$ & $11$ \\
$\!\!\!(110)\!$ & $01$   & $10$   & $11$ & $11$ & $00$ & $11$ & $11$ \\
$\!\!\!(101)\!$ & $10$   &  $11$  &  $01$  & $00$ & $11$ & $11$ & $01$\\
$\!\!\!(011)\!$ &$11$ &   $01$   & $10$  & $11$ & $11$ & $00$ & $10$\\
$\!\!\!(111)\!$ &$11$ &   $11$   & $11$ & $01$ & $01$ & $01$ & $00$\\
\end{tabular}
\end{center}
\vspace{.4cm}
\caption{Quasi-uniform code constructed from $Z_2 \times Z_2 \times Z_2$.}
\end{table}
\vspace{-5mm}
Note that the rows  corresponding to the basis vectors of $(\Z_2)^3$ forms the generator matrix of the code: 
\begin{table}[H]
\begin{center}
\begin{tabular}{|c|c|c|c|c|c|c|}
$00$   &  $10$  & $01$   & $10$ & $10$ & $01$ & $10$  \\
$01$   &  $00$    & $10$   & $01$ & $10$ & $10$ & $01$ \\
$10$  & $01$    & $00$   & $10$ & $01$ & $10$ & $11$ \\
\end{tabular}
\end{center}
\vspace{.4cm}
\caption{Generator matrix of the code}
\end{table}
\vspace{-5mm}
Here each codeword symbol is a block of two bits which corresponds to the cosets in the orginal encoding. However, we consider the length of the code as the total number of codeword symbols and  not the total number of bits.  



Some of the properties of the above code is mentioned below:
\begin{itemize}
\item The code is  linear of length 7.
\item The operation between codewords is defined component wise in $\Z_2 \times \Z_2.$
\item Although there is a generator matrix for the code, the codewords do not have an information part.
\item The cosets in different codeword symbols intersect trivially.
\end{itemize}

The decoding of a received vector is as follows:
\begin{itemize}
\item Suppose that the received codeword is $11|01|10|11|11|00|10$.
\item Compute the inverse image of the isomorphism $\gamma_i : G/G_i \rightarrow H_i$ (which is known for the user) for each codeword symbol $i=1,2,\cdots,7$ .
\item The inverse image obtained is $$(011)(111) |(011)(001) | (011)(010) | (011)(110) | (011)(101) | (011)(000) | (011)(100).$$
\item  Finally, the intersection of all cosets gives the information vector $011$.
\end{itemize}
This procedure can be simplified further, since the subgroups used for constructing the code have trivial pairwise intersections. Then the intersection of cosets from any two components  gives back the information vector. \\
That is; \textit{if the subgroups used have trivial pairwise intersections, then any two codeword symbols is enough to decode the whole information vector}.
\end{exm}

The concept implies that the code discussed earlier can also be interpreted as a batch code.  

Consider a request for \( t \) information symbols \((x_{i_1}, \ldots, x_{i_t})\). As noted in the code’s properties, information symbols are not stored directly in their original form within codewords. To address such a request, the entire information vector \((x_1, \ldots, x_k)\) of length \( k \) must first be decoded, after which the required symbol \( x_{i_1} \) (and others) can be retrieved. Since reconstructing the full information vector requires only two codeword symbols, the presence of \( t \) mutually disjoint recovery sets allows the independent retrieval of all \( t \) requested symbols \((x_{i_1}, \ldots, x_{i_t})\).  

Thus, the code in Example~\ref{exm:3} functions as a \((n, k, t) = (7, 3, 3)\) batch code. More specifically, it is a \((n, k, t, r) = (7, 3, 3, 2)\) batch code, where each recovery set has a maximum size $r=2$. Notably, the codeword symbol \( G_7 \) is unnecessary for handling \( t = 3 \) queries. Removing \( G_7 \) to shorten the code results in a \((6, 3, 3)\) batch code.  

\begin{defn}
A batch code constructed from subgroups of a finite group using quasi-unifrom construction method is called a \textit{quasi-uniform batch code}.
\end{defn}

A natural follow-up question is whether extending the code length could enable support for request sizes \( t > 3 \). This is indeed feasible but requires analyzing additional non-trivial subgroups of \((\mathbb{Z}_2)^3\) to structure the codeword symbols appropriately.

\begin{pro}
The group $(\Z_2)^3$ has exactly 7 subgroups of order 4.
\end{pro}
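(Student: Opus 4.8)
The plan is to translate the group-theoretic count into a subspace count and then read off the answer from Equation~\ref{eqn:one}. First I would observe that $(\Z_2)^3$ is an elementary abelian $2$-group, hence a three-dimensional vector space over $\Z_2$, and that (as recorded in the proof of the preceding proposition) its subgroups coincide exactly with its $\Z_2$-subspaces. A subgroup of order $4$ is then a subspace of cardinality $4=2^2$, which forces its dimension to be $2$ and its isomorphism type to be $(\Z_2)^2$.

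Having made this identification, the enumeration reduces to evaluating the Gaussian binomial coefficient $\binom{3}{2}_2$ from Equation~\ref{eqn:one} with $k=3$, $m=2$, $p=2$. Substituting yields the product $\frac{2^{3}-1}{2^{1}-1}\cdot\frac{2^{2}-1}{2^{2}-1}$, whose factors are $7$ and $1$; their product is $7$, giving the claimed count.

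As a cross-check---and an alternative that sidesteps the formula---I would use the duality $\binom{3}{2}_2=\binom{3}{1}_2$: a two-dimensional subspace of a three-dimensional space is precisely a hyperplane, and each hyperplane is the kernel of a nonzero linear functional. Since the only nonzero scalar in $\Z_2$ is $1$, distinct nonzero functionals determine distinct hyperplanes, and there are $\frac{2^{3}-1}{2-1}=7$ of them; this matches the number of one-dimensional subgroups already noted in the text immediately before Example~\ref{exm:3}.

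No serious obstacle arises: the entire argument is the identification of order-$4$ subgroups with two-dimensional subspaces, after which the arithmetic is immediate. The single point meriting care is excluding a cyclic subgroup of order $4$; this cannot occur because every nonidentity element of $(\Z_2)^3$ has order $2$, so order $4$ forces dimension exactly $2$ and the count is unambiguous. If a fully self-contained proof were wanted, one could instead list the seven hyperplanes explicitly as the kernels of the seven nonzero functionals, which also recovers the generators used for $G_4,\ldots,G_7$ in Example~\ref{exm:3}.
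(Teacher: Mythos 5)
Your proof is correct and follows essentially the same route as the paper: identify order-$4$ subgroups with $2$-dimensional subspaces of $(\Z_2)^3$ and evaluate the Gaussian binomial coefficient $\binom{3}{2}_2=7$. The paper additionally lists the seven subgroups explicitly, while you instead add a duality/hyperplane cross-check and a (welcome but easy) remark ruling out cyclic subgroups of order $4$; these are cosmetic differences only.
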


\begin{proof}
The group \((\mathbb{Z}_2)^3 = \{(a,b,c) \mid a,b,c \in \mathbb{Z}_2\}\) has subgroups of order \(4\) corresponding to \(2\)-dimensional subspaces. The number of such subspaces is calculated via the Gaussian binomial coefficient \cite{knuth}:
\[
\binom{3}{2}_2 = \frac{(2^3 - 1)(2^2 - 1)}{(2^2 - 1)(2^1 - 1)} = \frac{7 \cdot 3}{3 \cdot 1} = 7.
\]

The explicit subgroups are generated by pairs of linearly independent vectors:
\begin{enumerate}
    \item \(\langle (1,0,0), (0,1,0) \rangle = \{(0,0,0), (1,0,0), (0,1,0), (1,1,0)\}\)
    \item \(\langle (1,0,0), (0,0,1) \rangle = \{(0,0,0), (1,0,0), (0,0,1), (1,0,1)\}\)
    \item \(\langle (1,0,0), (0,1,1) \rangle = \{(0,0,0), (1,0,0), (0,1,1), (1,1,1)\}\)
    \item \(\langle (0,1,0), (0,0,1) \rangle = \{(0,0,0), (0,1,0), (0,0,1), (0,1,1)\}\)
    \item \(\langle (0,1,0), (1,0,1) \rangle = \{(0,0,0), (0,1,0), (1,0,1), (1,1,1)\}\)
    \item \(\langle (0,0,1), (1,1,0) \rangle = \{(0,0,0), (0,0,1), (1,1,0), (1,1,1)\}\)
    \item \(\langle (1,1,0), (0,1,1) \rangle = \{(0,0,0), (1,1,0), (0,1,1), (1,0,1)\}\)
\end{enumerate}

It can be verified easily  that the subgroups are distinct.
\end{proof}
The intersection of any two subgroups from the above set yields a subgroup of order 2. To enhance request capacity, a quasi-uniform code can be formulated by incorporating all subgroups of orders 2 and 4. For request recovery, it suffices to identify two subgroups whose intersection is trivial. As demonstrated in the subsequent analysis, 7 unique pairs (out of the 14 total subgroups) exhibit trivial pairwise intersections.

\begin{thm}
In \((\mathbb{Z}_2)^3\), there exist \(7\) distinct pairs \((H, S)\), where \(H\) is a subgroup of order \(4\) and \(S\) is a subgroup of order \(2\), such that \(H \cap S = \{(0,0,0)\}\). 
\end{thm}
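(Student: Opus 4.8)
The plan is to read the statement in the sense forced by the surrounding discussion, namely as the existence of a \emph{perfect matching} between the seven order-$4$ subgroups and the seven order-$2$ subgroups: a family of seven pairs $(H,S)$ that uses each of the $14$ subgroups exactly once and satisfies $H\cap S=\{(0,0,0)\}$ in every pair. (A literal reading is vacuous, since there are $7\cdot 4=28$ order-$4$/order-$2$ pairs meeting trivially; what is needed for the batch-code application is that the seven recovery pairs be pairwise disjoint, i.e.\ that they exhaust the $14$ codeword symbols.)

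First I would recast the intersection condition. Identifying each order-$2$ subgroup with its unique nonzero vector $v$, so $S=\{(0,0,0),v\}$, one has $H\cap S=\{(0,0,0)\}$ exactly when $v\notin H$. Since each order-$4$ subgroup $H$ has index $2$ in $(\Z_2)^3$, the vectors outside $H$ form the nontrivial coset $(\Z_2)^3\setminus H$, which has four (necessarily nonzero) elements; hence each such $H$ meets trivially exactly four of the seven order-$2$ subgroups.

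Next I would build the bipartite ``trivial-intersection'' graph $\Gamma$ with the seven planes on one side and the seven lines on the other, joining $H$ to $S=\{(0,0,0),v\}$ whenever $v\notin H$. The previous step gives every plane degree $4$. For a fixed nonzero $v$, the order-$4$ subgroups containing $\langle v\rangle$ correspond to the $1$-dimensional subspaces of the quotient $(\Z_2)^3/\langle v\rangle\cong(\Z_2)^2$, of which there are $\binom{2}{1}_2=3$ by Equation \ref{eqn:one}; so $v$ lies outside $7-3=4$ planes, and every line also has degree $4$. Thus $\Gamma$ is $4$-regular.

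Finally I would conclude via the standard fact that a regular bipartite graph with equal parts has a perfect matching: Hall's condition holds because any set $A$ of planes emits $4|A|$ edges, all incident to $N(A)$, whose vertices each receive at most $4$ edges, forcing $4|A|\le 4|N(A)|$ and hence $|N(A)|\ge|A|$. The resulting perfect matching is precisely the required family of seven disjoint pairs. I expect the main (and only real) obstacle to be producing a concrete matching consistent with the explicit subgroup list in the preceding proposition, rather than merely asserting existence; I would settle this by a direct greedy assignment and a short check, for instance pairing subgroups $(1)$--$(7)$ of that list with the lines $\langle(0,0,1)\rangle,\langle(0,1,0)\rangle,\langle(1,1,0)\rangle,\langle(1,0,0)\rangle,\langle(0,1,1)\rangle,\langle(1,0,1)\rangle,\langle(1,1,1)\rangle$ respectively and verifying $v\notin H$ in each of the seven cases.
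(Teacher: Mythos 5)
Your proposal is correct and follows essentially the same route as the paper: both interpret the statement as asserting a perfect matching, build the bipartite trivial-intersection graph between the seven order-$4$ and seven order-$2$ subgroups, establish $4$-regularity by counting, and invoke Hall's Marriage Theorem via the same edge-counting verification of Hall's condition. Your explicit matching (pairing planes $(1)$--$(7)$ with the lines $\langle(0,0,1)\rangle,\langle(0,1,0)\rangle,\langle(1,1,0)\rangle,\langle(1,0,0)\rangle,\langle(0,1,1)\rangle,\langle(1,0,1)\rangle,\langle(1,1,1)\rangle$) checks out and is a nice constructive addition the paper does not supply.
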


\begin{proof}
We prove this result using Hall’s Marriage Theorem applied to a bipartite graph of trivially intersecting subgroups. Hall's Marriage Theorem states that
if $\mathcal{B} = (X \cup Y, E)$ is a bipartite graph with partitions \(X\) and \(Y\).  Then there exists a perfect matching in $\mathcal{B}$ if and only if for every subset \(A \subseteq X\), $|N(A)| \geq |A|$,
where \(N(A)\) denotes the set of neighbors of \(A\) in \(Y\).

Let \(G = (\mathbb{Z}_2)^3\). Consider the nontrivial subgroups of \(G\) of order 2 and 4. There are \(7\) subgroups of order \(2\) (1-dimensional subspaces), and \(7\) subgroups of order \(4\) (2-dimensional subspaces) in $G$.

Construct a bipartite graph \(\mathcal{G}\) with:
\begin{itemize}
    \item \textbf{Partition \(X\)}: The \(7\) order \(4\) subgroups.
    \item \textbf{Partition \(Y\)}: The \(7\) order \(2\) subgroups.
\end{itemize}

An edge connects \(H \in X\) to \(S \in Y\) if \(H \cap S = \{(0,0,0)\}\). 

We verify Hall’s condition for \(\mathcal{G}\):

\begin{itemize}
    \item Each \(H \in X\) contains \(3\) order \(2\) subgroups, so it trivially intersects \(7 - 3 = 4\) subgroups in \(Y\). Thus, \(\deg(H) = 4\).
    \item Each \(S \in Y\) is contained in \(3\) order \(4\) subgroups, so it trivially intersects \(7 - 3 = 4\) subgroups in \(X\). Thus, \(\deg(S) = 4\).
\end{itemize}
Therfore \(\mathcal{G}\) is a \(4\)-regular bipartite graph.

For any subset \(A \subseteq X\) with \(|A| = k\), the total edges incident to \(A\) are \(4k\). Since each \(S \in Y\) has degree \(4\), the number of neighbours \(|N(A)| \geq \frac{4k}{4} = k\). Thus:
\[
|N(A)| \geq |A| \quad \forall A \subseteq X.
\]

Then by Hall’s Marriage Theorem, \(\mathcal{G}\) contains a perfect matching. This guarantees \(7\) distinct pairs \((H, S)\) where \(H \cap S = \{(0,0,0)\}\).
\end{proof}
The theorem establishes that seven mutually disjoint recovery sets can be constructed by leveraging all non-trivial subgroups of orders 2 and 4 within the group $G=  (\Z_2)^3$. Crucially, these subgroups represent the complete set of non-trivial subgroups in $G$. By systematically incorporating these subgroups, we derive a quasi-uniform batch code capable of supporting query sizes as large as $t=7$. These insights are formalized in the following conclusion:

\begin{pro}
A $(14, 3, 7)$ linear quasi-uniform batch codes can be constructed using all  non-trivial subgroups of $(\Z_2)^3$.   
\end{pro}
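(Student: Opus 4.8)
The plan is to realize the code as the quasi-uniform code built from \emph{all} fourteen non-trivial subgroups of $G = (\Z_2)^3$ and then to extract seven pairwise disjoint recovery sets directly from the perfect matching produced in the theorem above. First I would assemble the code. By the preceding propositions, $G$ has exactly seven subgroups of order $2$ and seven of order $4$, so taking all of them gives $n = 14$ codeword symbols $G_1, \ldots, G_{14}$. Their total intersection is trivial---already the seven order-$2$ subgroups meet only in $\{(0,0,0)\}$---so the encoding map $\psi$ of Section~\ref{sec:nilpotent} is injective and, by Lemma~\ref{lem:sizeC}, the code has $|G| = 2^3 = 8$ codewords. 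Hence it is a quasi-uniform code of length $n = 14$ with $k = 3$ information symbols over $\Z_2$.

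Next I would record the single fact about recovery that is needed. The decoding analysis following Example~\ref{exm:3} shows that whenever two subgroups $G_i, G_j$ intersect trivially, the intersection of the two corresponding cosets is a single group element, namely the stored information vector $g$. Thus any pair of codeword symbols indexed by a trivially-intersecting pair of subgroups constitutes a recovery set from which the \emph{entire} information vector---and therefore any requested symbol $x_{i_\ell}$---can be reconstructed. This reduces the batch property for $t = 7$ to producing seven such pairs that use the fourteen subgroups without overlap.

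This is exactly what the theorem above delivers. Its perfect matching of the bipartite graph on the order-$4$ and order-$2$ subgroups yields seven pairs $(H_1, S_1), \ldots, (H_7, S_7)$ with $H_j \cap S_j = \{(0,0,0)\}$ and with each of the fourteen subgroups appearing in exactly one pair. Because the matching is perfect, these seven pairs are pairwise disjoint as index sets, so they form seven mutually disjoint recovery sets. Given any multiset query $(x_{i_1}, \ldots, x_{i_7})$, I would route the $\ell$-th request to the recovery set $\{H_\ell, S_\ell\}$, decode $g$ by intersecting its two cosets, and read off $x_{i_\ell}$; the seven decodings act on disjoint codeword symbols and hence run independently. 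This certifies a $(14, 3, 7)$ quasi-uniform batch code.

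I expect no genuine obstacle in the symbol-level reconstruction, which is immediate from the established decoding procedure; the entire content lies in the disjointness bookkeeping. The subtle point is that supporting $t = 7$ simultaneous requests forces seven recovery sets that are pairwise disjoint, and since each recovery set consumes two subgroups, this is only possible if the fourteen subgroups can be split into seven trivially-intersecting pairs---precisely a perfect matching. Converting the matching theorem into disjoint recovery sets is therefore the heart of the argument, and the proposition is in effect a corollary of the theorem above.
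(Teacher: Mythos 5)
Your proposal is correct and follows essentially the same route as the paper: the paper states this proposition as a direct consequence of the preceding Hall's-marriage-theorem result, taking all fourteen non-trivial subgroups as the fourteen codeword symbols and using the perfect matching of order-$4$ with order-$2$ subgroups to obtain seven pairwise disjoint two-symbol recovery sets, each of which reconstructs the full information vector by intersecting two cosets. Your writeup is in fact more explicit than the paper's (which offers no separate proof beyond the surrounding discussion), but the underlying argument is identical.
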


Also note that the codeword symbols correponding to the 7  subgroups of size 4 live in $\Z_2$, not in $\Z_2 \times \Z_2$.

We extend the above construction of batch codes to $(\Z_2)^4$ in the next section.

\section{Quasi-uniform batch codes using $(\Z_2)^4$}
In order to construct a quasi-uniform batch code that accommodates the maximum number of queries utilizing the group \( G = (\Z_2)^4\), it is imperative to meticulously analyze the subgroup structure of \( G \).

\begin{pro}
\label{pro:2}
The number of distinct subgroups of order 2 (dimension 1) of $G=(\Z_2)^4$ is 15.
\end{pro}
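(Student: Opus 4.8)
The plan is to treat $G = (\Z_2)^4$ as a $4$-dimensional vector space over the field $\Ff_2 = \Z_2$, exactly as in the proof of the earlier proposition counting subgroups of $(\Z_p)^k$, and then to read off the desired count from the Gaussian binomial coefficient. Since the group operation is addition and scalar multiplication over $\Ff_2$ amounts to repeated addition, every subgroup of $G$ is an $\Ff_2$-subspace and conversely; in particular the subgroups of order $2$ are precisely the $1$-dimensional subspaces. First I would invoke \eqref{eqn:one} with $k=4$, $m=1$, $p=2$, which gives
\[
\binom{4}{1}_2 = \frac{2^{4} - 1}{2 - 1} = \frac{15}{1} = 15,
\]
so there are $15$ one-dimensional subspaces, hence $15$ subgroups of order $2$.

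As an independent cross-check, and to keep the argument self-contained, I would also give a direct counting argument. A subgroup of order $2$ has the form $\{(0,0,0,0), v\}$ for some nonzero $v \in G$, and since every nonzero element of $(\Z_2)^4$ has order $2$, each such $v$ generates exactly $\langle v \rangle = \{(0,0,0,0), v\}$. There are $2^{4} - 1 = 15$ nonzero elements in $G$. The assignment $v \mapsto \langle v \rangle$ is surjective onto the set of order-$2$ subgroups, and it is injective because an order-$2$ subgroup contains exactly one nonzero element; thus distinct nonzero vectors yield distinct subgroups. Counting nonzero elements therefore gives exactly $15$, matching the Gaussian binomial computation.

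There is essentially no obstacle here: the statement is merely the $m=1$ specialization of the already-established subgroup count, and the only point requiring any care is the injectivity of $v \mapsto \langle v \rangle$, namely confirming that no two distinct nonzero vectors span the same order-$2$ subgroup. Over $\Ff_2$ this is immediate, since a line through the origin contains a single nonzero point and no scaling identifications occur (unlike the case $p > 2$, where each $1$-dimensional subspace contains $p-1$ nonzero generators). I would close by remarking that this is precisely why the denominator $p-1$ in \eqref{eqn:one} equals $1$ when $p=2$, which is exactly what makes the count of order-$2$ subgroups coincide with the count of nonzero elements.
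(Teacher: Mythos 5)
Your proposal is correct and follows essentially the same route as the paper: identify $(\Z_2)^4$ as a $4$-dimensional $\Ff_2$-vector space, apply the Gaussian binomial coefficient $\binom{4}{1}_2 = 15$, and cross-check against the $2^4-1=15$ nonzero vectors, each generating a distinct order-$2$ subgroup. Your added remark on injectivity of $v \mapsto \langle v\rangle$ over $\Ff_2$ only makes explicit what the paper states in passing.
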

\begin{proof}
We know that \((\mathbb{Z}_2)^4\) is a 4-dimensional vector space over the finite field \(\mathbb{Z}_2\) and its subgroups of order 2 correspond to 1-dimensional subspaces.

The number of \(k\)-dimensional subspaces in an \(n\)-dimensional vector space over \(\mathbb{Z}_q\) is given by the Guassian-binomial coefficient \cite{knuth}:
\begin{equation}
\label{eqn:guass}
\binom{n}{k}_q = \frac{(q^n - 1)(q^{n-1} - 1) \cdots (q^{n-k+1} - 1)}{(q^k - 1)(q^{k-1} - 1) \cdots (q - 1)}
\end{equation}

For our case: $q = 2, ~n = 4 $ and $ k = 1$.
Then $$\binom{4}{1}_2 = \frac{2^4 - 1}{2^1 - 1} = \frac{16 - 1}{2 - 1} = \frac{15}{1} = 15.$$

This matches the count of non-zero vectors in $(\mathbb{Z}_2)^4: \quad 2^4 - 1 = 15 $ and each non-zero vector generates a unique 1-dimensional subspace. That is, the number of distinct subgroups of order 2 in \((\mathbb{Z}_2)^4\) is 15.
\end{proof}

Similarly as the above result, subgroups of order 4 and order 8 of $G$ can be computed.

\begin{pro}
\label{pro:4}
The number of distinct subgroups of order 4 in \(G= (\mathbb{Z}_2)^4\) is 35.
\end{pro}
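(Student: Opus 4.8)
The plan is to mirror exactly the argument used in Proposition~\ref{pro:2}, since subgroups of order $4$ in $G=(\mathbb{Z}_2)^4$ are precisely the $2$-dimensional $\mathbb{Z}_2$-subspaces of the $4$-dimensional vector space. First I would invoke the same identification established earlier: because the group operation is vector addition and scalar multiplication over $\mathbb{Z}_2$ is repeated addition, the subgroup lattice of $(\mathbb{Z}_2)^4$ coincides with its subspace lattice, and a subgroup has order $4$ if and only if the corresponding subspace has dimension $2$. Thus counting order-$4$ subgroups reduces to counting $2$-dimensional subspaces.

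Next I would apply the Gaussian binomial coefficient formula, Equation~\ref{eqn:guass}, with the parameters $q=2$, $n=4$, and $k=2$, and evaluate the resulting expression:
\[
\binom{4}{2}_2 = \frac{(2^4 - 1)(2^3 - 1)}{(2^2 - 1)(2^1 - 1)} = \frac{15 \cdot 7}{3 \cdot 1} = 35.
\]
This yields the claimed count of $35$ directly.

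As an independent verification I would supply the standard double-counting argument. The number of ordered pairs of linearly independent vectors in $(\mathbb{Z}_2)^4$ is $(2^4 - 1)(2^4 - 2) = 15 \cdot 14 = 210$, since the first vector is any nonzero vector and the second is any vector outside the line it spans. Each $2$-dimensional subspace admits $(2^2 - 1)(2^2 - 2) = 3 \cdot 2 = 6$ such ordered bases, so the number of distinct $2$-dimensional subspaces is $210/6 = 35$, confirming the Gaussian-coefficient computation.

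There is no genuine obstacle here; the proof is a routine specialization of the formula already recorded in the excerpt. The only point requiring mild care is the justification that no overcounting occurs, which is handled cleanly by the bijection between subgroups and subspaces and confirmed by the basis-counting cross-check. Accordingly I expect the write-up to be a short, direct computation essentially identical in structure to the proof of Proposition~\ref{pro:2}.
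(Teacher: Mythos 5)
Your proposal is correct and follows essentially the same route as the paper: identify order-$4$ subgroups with $2$-dimensional subspaces and evaluate the Gaussian binomial coefficient $\binom{4}{2}_2 = 35$. The extra double-counting verification via ordered bases is a nice sanity check but not part of the paper's argument.
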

\begin{proof}
Subgroups of order 4 in $G$ correspond to 2-dimensional subspaces. Therefore, using Gaussian binomial coefficient (Equation \ref{eqn:guass}) for \(n = 4\), \(k = 2\), and \(q = 2\), we have the number of subgroups is given by
\[
\binom{4}{2}_2 = \frac{(2^4 - 1)(2^3 - 1)}{(2^2 - 1)(2^1 - 1)} = \frac{(16 - 1)(8 - 1)}{(4 - 1)(2 - 1)} = \frac{15 \cdot 7}{3 \cdot 1} = \frac{105}{3} = 35.
\]
\end{proof}

\begin{pro}
\label{pro:8}
The number of distinct subgroups of order $8$ in \(G= (\mathbb{Z}_2)^4\) is 15.
\end{pro}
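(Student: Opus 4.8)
The plan is to apply the Gaussian binomial coefficient formula (Equation \ref{eqn:guass}) exactly as in Propositions \ref{pro:2} and \ref{pro:4}, since subgroups of order $8$ in $(\Z_2)^4$ correspond precisely to $3$-dimensional subspaces of the $4$-dimensional vector space over $\Z_2$. Thus I would set $q=2$, $n=4$, and $k=3$ and evaluate $\binom{4}{3}_2$.

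First I would write out the numerator and denominator directly from the formula:
\[
\binom{4}{3}_2 = \frac{(2^4-1)(2^3-1)(2^2-1)}{(2^3-1)(2^2-1)(2^1-1)}.
\]
The key observation is that the factors $(2^3-1)$ and $(2^2-1)$ appear in both numerator and denominator and cancel, leaving
\[
\binom{4}{3}_2 = \frac{2^4-1}{2^1-1} = \frac{15}{1} = 15,
\]
which gives the desired count.

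The main conceptual point — which is really a \emph{duality} observation rather than an obstacle — is to note the symmetry $\binom{n}{k}_q = \binom{n}{n-k}_q$, so that $\binom{4}{3}_2 = \binom{4}{1}_2 = 15$ matches Proposition \ref{pro:2}. This reflects the bijection between $3$-dimensional subspaces (the order-$8$ subgroups) and $1$-dimensional subspaces (the order-$2$ subgroups), realized for instance by sending a subspace to its annihilator under a fixed nondegenerate bilinear form on $(\Z_2)^4$. Since the computation is a routine specialization of an already-stated formula, there is no genuine difficulty; the only care needed is to confirm the cancellation of repeated factors, after which the result follows immediately.
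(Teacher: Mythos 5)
Your proposal is correct and follows essentially the same route as the paper: identify the order-$8$ subgroups with the $3$-dimensional subspaces of $(\mathbb{Z}_2)^4$ and evaluate $\binom{4}{3}_2 = 15$ via the Gaussian binomial coefficient of Equation~\ref{eqn:guass}. Your added remark on the symmetry $\binom{4}{3}_2 = \binom{4}{1}_2$ is a pleasant observation (the paper notes this duality only afterwards, in the remark following the proposition), but it does not change the argument.
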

\begin{proof}
Subgroups of order $8$ in $G$ correspond to 3-dimensional subspaces of $(\Z_2)^4$. Again, using Guassian binomial coeffiecient  (Equation \ref{eqn:guass}) for \(n = 4\), \(k = 3\), and \(q = 2\):
\begin{align*}
\binom{4}{3}_2 
&= \frac{(2^4 - 1)(2^3 - 1)(2^2 - 1)}{(2^3 - 1)(2^2 - 1)(2^1 - 1)} \\
&= \frac{(16 - 1)(8 - 1)(4 - 1)}{(8 - 1)(4 - 1)(2 - 1)} \\
&= \frac{15 \cdot 7 \cdot 3}{7 \cdot 3 \cdot 1} \\
&= \frac{315}{21} = 15.
\end{align*}

That is, the number of distinct subgroups of order 8 in \((\mathbb{Z}_2)^4\) is $15$.
\end{proof}
From Propositions \ref{pro:2} and \ref{pro:8}, we have the following conclusion:
\begin{rem}
There are 15 non-trivial subgroups each of order 2 and order 8 in $G=(\Z_2)^4$.
\end{rem}
Next Proposition shows that itis possible to pair up an order $2$ subgroup with an order $8$ subgroup such that their intersection is trivial. This idea is pivotel in determining disjoint recovery sets for batch code construction.
For proving the result, the following lemma is necessary.

\begin{lem}\label{lem:subgroups}
Let $G = (\mathbb{Z}_2)^n$ and let $K \subset G$ be a $k$-dimensional subspace. For any integer $m$ satisfying $k \leq m \leq n$, the number of $m$-dimensional subspaces of $G$ containing $K$ is given by the Gaussian binomial coefficient:
\[
\boxed{\binom{n-k}{m-k}_2}
\]
\end{lem}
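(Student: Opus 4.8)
The plan is to reduce the problem to a subspace count in a smaller space, where Equation~\ref{eqn:guass} applies verbatim. The natural device is the quotient space $V = G/K$. Since $G = (\Z_2)^n$ is an $n$-dimensional vector space over $\Z_2$ and $K$ is a $k$-dimensional subspace, $V$ carries the structure of a $\Z_2$-vector space of dimension $n - k$. I would open by establishing this and fixing notation for the canonical projection $\pi : G \rightarrow V$, $\pi(g) = g + K$.

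The heart of the argument is the lattice correspondence theorem for vector spaces: the subspaces $W$ of $G$ satisfying $K \subseteq W$ are in inclusion-preserving bijection with the subspaces of $V = G/K$, via $W \mapsto \pi(W) = W/K$, with inverse $\overline{W} \mapsto \pi^{-1}(\overline{W})$. I would then verify that this bijection shifts dimension by exactly $k$: for any $W$ with $K \subseteq W$, one has $\dim_{\Z_2}(W/K) = \dim_{\Z_2} W - \dim_{\Z_2} K$. Consequently $W$ is $m$-dimensional if and only if its image $W/K$ is an $(m-k)$-dimensional subspace of $V$. This is the key step, and it is where I would be most careful, since the whole count hinges on the bijection being well defined and on the additivity of dimension under the quotient.

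With the correspondence in hand, counting $m$-dimensional subspaces $W$ of $G$ containing $K$ is identical to counting $(m-k)$-dimensional subspaces of the $(n-k)$-dimensional space $V$. Applying the Gaussian binomial formula of Equation~\ref{eqn:guass} with ambient dimension $n-k$, subspace dimension $m-k$, and $q = 2$ yields
\[
\binom{n-k}{m-k}_2 = \frac{(2^{\,n-k} - 1)(2^{\,n-k-1} - 1)\cdots(2^{\,n-m+1} - 1)}{(2^{\,m-k} - 1)(2^{\,m-k-1} - 1)\cdots(2 - 1)},
\]
which is exactly the claimed expression, completing the count.

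I expect the only genuine subtlety to be the dimension-shift verification; everything else is bookkeeping. An alternative route would be a direct flag-counting argument: count ordered linearly independent $(m-k)$-tuples extending a fixed basis of $K$ to a basis of an $m$-dimensional $W$, then divide by the number of such tuples that span the same $W$ over $K$. This avoids explicit mention of quotients but reintroduces the same overcounting ratios that the Gaussian binomial already packages, so the quotient-space proof is cleaner and I would present that one.
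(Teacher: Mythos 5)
Your proposal is correct and follows essentially the same route as the paper's proof: both pass to the quotient space $G/K$, invoke the dimension-shifting correspondence between subspaces of $G$ containing $K$ and subspaces of $G/K$, and then apply the Gaussian binomial count in the $(n-k)$-dimensional quotient. The only difference is that you spell out the verification of the correspondence and the dimension additivity more explicitly than the paper does.
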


\begin{proof}
Consider the quotient space $G/K$, which forms an $(n-k)$-dimensional vector space over $\mathbb{Z}_2$ by the dimension formula for quotient spaces \cite{Roman}. Then there exists a dimension-preserving bijection:
\[
\{\, M \subset G \mid K \subset M \,\} \longleftrightarrow \{\, \widetilde{M} \subset G/K \,\}
\]
where $\dim M = \dim \widetilde{M} + \dim K$ \cite{Roman}.

An $m$-dimensional subspace $M \subset G$ containing $K$ corresponds to an $(m-k)$-dimensional subspace $\widetilde{M} \subset G/K$. The number of such subspaces is given by the Gaussian binomial coefficient \cite{knuth}:
\[
\binom{n-k}{m-k}_2 = \prod_{i=0}^{m-k-1} \frac{2^{n-k} - 2^i}{2^{m-k} - 2^i} = \frac{(2^{n-k} - 1)(2^{n-k-1} - 1)\cdots(2^{n-m+1} - 1)}{(2^{m-k} - 1)(2^{m-k-1} - 1)\cdots(2^1 - 1)},
\] and the proof follows.
\end{proof}

\begin{pro}
\label{order28}
Let $H$ be an order 2 subgroup and let $S$ be an order 8 subgroup of $(\Z_2)^4$. There  exists 15 disjoint pairs of  $(H,S)$ where $H \cap S= \{(0,0,0,0)\}.$
\end{pro}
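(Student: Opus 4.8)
The plan is to mirror the structure of the earlier $(\Z_2)^3$ theorem and again invoke Hall's Marriage Theorem on a bipartite graph, but with the counts appropriate to $(\Z_2)^4$. First I would set $X$ to be the $15$ subgroups of order $8$ (the $3$-dimensional subspaces) and $Y$ to be the $15$ subgroups of order $2$ (the $1$-dimensional subspaces), which are equinumerous by the Remark following Proposition~\ref{pro:8}. I draw an edge between $H \in Y$ and $S \in X$ exactly when $H \cap S = \{(0,0,0,0)\}$, i.e. when the order-$2$ subgroup is \emph{not} contained in the order-$8$ subgroup (since any two nontrivial subspaces of a vector space either intersect trivially or share a common nonzero vector, and a $1$-dimensional space meeting an $8$-element subspace nontrivially is simply contained in it).

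The crux is to compute the degrees and verify regularity so that Hall's condition becomes automatic. For a fixed order-$2$ subgroup $H$ (a line through a nonzero vector $v$), the order-$8$ subgroups containing $H$ are exactly the $3$-dimensional subspaces containing the line $\langle v\rangle$; by Lemma~\ref{lem:subgroups} with $n=4$, $k=1$, $m=3$ this count is $\binom{3}{2}_2 = 7$. Hence $H$ is disjoint from $15 - 7 = 8$ of the order-$8$ subgroups, so $\deg(H) = 8$. Symmetrically, for a fixed order-$8$ subgroup $S$, the number of order-$2$ subgroups it contains equals the number of lines in a $3$-dimensional space, namely $\binom{3}{1}_2 = 7$, so $S$ is disjoint from $15 - 7 = 8$ of the order-$2$ subgroups and $\deg(S) = 8$. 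Thus $\mathcal{G}$ is an $8$-regular bipartite graph on two parts of size $15$.

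Once regularity is in hand, Hall's condition follows by the standard counting argument used in the previous theorem: for any $A \subseteq X$ the number of edges leaving $A$ is $8|A|$, and since each vertex of $Y$ absorbs at most $8$ of these edges, $|N(A)| \geq 8|A|/8 = |A|$. Hall's Marriage Theorem then yields a perfect matching, i.e. $15$ distinct pairs $(H,S)$ with $H \cap S = \{(0,0,0,0)\}$, which is exactly the claim. I would take care to state the result in the orientation of the Proposition (ordered pairs with $H$ of order $2$ and $S$ of order $8$); since the matching is a bijection between the two equal-sized parts, the direction is immaterial. The only genuine step requiring attention is the degree computation, and the main subtlety there is justifying that a line and a $3$-space intersect trivially precisely when the line is not a subspace of the $3$-space---which reduces to the elementary fact that a $1$-dimensional subspace meets another subspace either trivially or in the whole line---so Lemma~\ref{lem:subgroups} can be applied cleanly to count the containing $3$-spaces.
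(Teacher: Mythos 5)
Your proposal is correct and follows essentially the same route as the paper: both construct the bipartite graph on the $15$ order-$2$ and $15$ order-$8$ subgroups with adjacency given by trivial intersection, use Lemma~\ref{lem:subgroups} to show the graph is $8$-regular, and invoke Hall's Marriage Theorem via the standard edge-counting verification of Hall's condition. The only differences (which side you call $X$, and using the lemma for both degree counts rather than counting lines in a $3$-space directly for one of them) are immaterial.
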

\begin{proof}
Let \(G = (\mathbb{Z}_2)^4\) be the additive group of 4-dimensional vectors over \(\mathbb{Z}_2\). We prove there exist 15 disjoint pairs \((H,S)\); where
\(H\) is a subgroup of order 2 (1-dimensional subspace),
\(S\) is a subgroup of order 8 (3-dimensional subspace),
 and \(H \cap S = \{(0,0,0,0)\}\) (i.e., \(H \not\subseteq S\)).

Define a bipartite graph \(\mathcal{G} = (X \cup Y, E)\) where:
\begin{itemize}
    \item \(X = \{\text{All order 2 subgroups } H \}\) (size \(|X| = 15\)),
    \item \(Y = \{\text{All order 8 subgroups } S \}\) (size \(|Y| = 15\)),
    \item An edge \(H \sim S\) exists if and only if \(H \cap S = \{(0,0,0,0)\}\).
\end{itemize}

Now for any \(H \in X\), the number of 3-dimensional subspaces \(S\) containing \(H\) is \(\binom{3}{2}_2 = 7\) from Lemma \ref{lem:subgroups}.
 Thus, \(H\) connects to \(15 - 7 = 8\) subspaces \(S \in Y\).
 
For any \(S \in Y\), the number of 1-dimensional subspaces \(H \subseteq S\) is \(7\) (since \(|S| = 8\) and each non-zero vector generates a 1-dimensional subspace). Thus, \(S\) connects to \(15 - 7 = 8\) subgroups \(H \in X\).
Hence, \(\mathcal{G}\) is an \(8\)-regular bipartite graph.

For any subset \(A \subseteq X\) with \(|A| = k\), the total edges incident to \(A\) are \(8k\). Since each \(S \in Y\) has degree \(8\), the number of neighbours \(|N(A)| \geq \frac{8k}{8} = k\). Thus:
\[
|N(A)| \geq |A| \quad \forall A \subseteq X.
\]
Therefore, by Hall’s marriage theorem, \(\mathcal{G}\) contains a perfect matching. That is, there exists  \(15\) disjoint pairs \((H,S)\) with \(H \cap S = \{(0,0,0,0)\}\). 
\end{proof}

To construct quasi-uniform codes supporting maximum number of queries using $G=(\Z_2)^4$ and it non-trivial subgroups, it remains to analyze the order 4 subgroups of $G$. For that, we need some additional framework.

\begin{defn}
A subspace $K$ intersects $ H$ trivially ($ H \cap K = \{\mathbf{0}\}$) if and only if $ K$ is a \textbf{complement} of $ H $ where $H$ and $K$ are subspaces of a vector space $V$.
\end{defn}


\begin{thm}\label{thm:complement}
Let \( V \) be an \( n \)-dimensional vector space over the finite field \( \mathbb{F}_q \), and let \( W \) be a \( k \)-dimensional subspace of \( V \). The number of complementary subspaces \( U \) of \( W \) (i.e., subspaces \( U \) such that \( V = W \oplus U \)) is given by:
\[
\boxed{q^{k(n - k)}}
\]
\end{thm}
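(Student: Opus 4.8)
The plan is to set up an explicit bijection between the complements of $W$ and the linear maps $\mathrm{Hom}(U_0,W)$ for a single fixed reference complement $U_0$, and then merely count those linear maps. First I would anchor the argument by producing one complement: extend a basis $w_1,\dots,w_k$ of $W$ to a basis $w_1,\dots,w_k,e_1,\dots,e_{n-k}$ of $V$ (possible over any field), and put $U_0=\langle e_1,\dots,e_{n-k}\rangle$, so that $V=W\oplus U_0$. This $U_0$ will serve purely as a bookkeeping device.

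Next I would realize every complement as a \emph{graph}. Given $f\in\mathrm{Hom}(U_0,W)$, set
\[
U_f=\{\,x+f(x):x\in U_0\,\}.
\]
I would check that $U_f$ is a subspace, that $x\mapsto x+f(x)$ is injective (if $x+f(x)=0$ then $x=-f(x)\in W\cap U_0=\{0\}$), so $\dim U_f=n-k$, and that $U_f\cap W=\{0\}$ (if $x+f(x)\in W$ then $x\in W$, forcing $x\in W\cap U_0=\{0\}$); hence $U_f$ is a complement. Conversely, let $\pi\colon V\to U_0$ be the projection along $W$. For any complement $U$, the restriction $\pi|_U\colon U\to U_0$ is injective because $U\cap W=\{0\}$, and therefore an isomorphism by equality of dimensions. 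Writing $\rho\colon V\to W$ for the projection onto $W$ along $U_0$, I would set $f_U=\rho\circ(\pi|_U)^{-1}\in\mathrm{Hom}(U_0,W)$ and verify that $U=U_{f_U}$. The assignments $f\mapsto U_f$ and $U\mapsto f_U$ are then mutually inverse, establishing the bijection.

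With the bijection in hand the count is immediate: $\mathrm{Hom}(U_0,W)$ is an $\mathbb{F}_q$-vector space of dimension $\dim U_0\cdot\dim W=(n-k)k$, so
\[
\#\{\text{complements of }W\}=|\mathrm{Hom}(U_0,W)|=q^{(n-k)k}=q^{k(n-k)},
\]
as claimed.

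The main obstacle is purely in the bijection step, specifically in confirming that $U\mapsto f_U$ is well defined and that the two maps invert each other; the counting at the end is routine. As a safeguard I would keep a second, self-contained argument in reserve: count ordered extensions $(u_1,\dots,u_{n-k})$ of the fixed basis of $W$ to a basis of $V$, obtaining $\prod_{i=0}^{n-k-1}(q^{n}-q^{k+i})$, and divide by the number $\prod_{i=0}^{n-k-1}(q^{n-k}-q^{i})$ of ordered bases of a fixed $(n-k)$-dimensional complement. The powers of $q$ left over after the $(q^{j}-1)$ factors cancel sum to exactly $k(n-k)$, recovering the same formula; this doubles as an independence check that the answer does not depend on the chosen $U_0$.
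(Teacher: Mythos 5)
Your proposal is correct and is essentially the paper's own argument in coordinate-free dress: the paper parametrizes complements as column spaces of block matrices $\bigl(\begin{smallmatrix} A \\ I_{n-k}\end{smallmatrix}\bigr)$, which is exactly your graph construction $U_f=\{x+f(x)\}$ with $A$ the matrix of $f\in\mathrm{Hom}(U_0,W)$ in the extended basis. Your explicit inverse via the two projections is a clean way to package the paper's injectivity/surjectivity checks, but the underlying bijection and the final count $q^{k(n-k)}$ are the same.
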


\begin{proof}
Let \( \mathcal{B}_W = \{w_1, \dots, w_k\} \) be a basis for \( W \). Extend this to a basis \( \mathcal{B}_V = \{w_1, \dots, w_k, v_{k+1}, \dots, v_n\} \) for \( V \). We will show that there is a bijection between the set of complementary subspaces of \( W \) and the set of \( k \times (n - k) \) matrices over \( \mathbb{F}_q \), of which there are \( q^{k(n - k)} \).

\subsubsection*{Construction of complementary subspaces}
Every complementary subspace \( U \) must intersect \( W \) trivially and satisfy \( \dim U = n - k \). Define a candidate complementary subspace \( U_A \) for each \( k \times (n - k) \) matrix \( A = [a_{ij}] \) over \( \mathbb{F}_q \) as follows:
\[
U_A = \operatorname{span}\left\{\, \sum_{i=1}^k a_{i1}w_i + v_{k+1},\ \dots,\ \sum_{i=1}^k a_{i(n-k)}w_i + v_n \,\right\}.
\]
Equivalently, \( U_A \) is the column space of the block matrix:
\[
M_A = \begin{pmatrix} 
A \\ 
I_{n-k} 
\end{pmatrix},
\]
where \( I_{n-k} \) is the \( (n - k) \times (n - k) \) identity matrix. The columns of \( M_A \) are linearly independent because the lower block \( I_{n-k} \) ensures no nontrivial linear combination can vanish. Since \( \dim U_A = n - k \) and \( W \cap U_A = \{0\} \), \( U_A \) is indeed complementary to \( W \).

\subsubsection*{Injectivity}
Suppose \( U_A = U_{A'} \) for two matrices \( A \) and \( A' \). Then the columns of \( M_A \) and \( M_{A'} \) span the same subspace. This implies there exists an invertible matrix \( C \in \operatorname{GL}_{n-k}(\mathbb{F}_q) \) such that \( M_A C = M_{A'} \). Examining the lower block gives \( I_{n-k}C = I_{n-k} \), forcing \( C = I_{n-k} \). Consequently, \( A = A' \), proving injectivity.

\subsubsection*{Surjectivity}
Let \( U \) be an arbitrary complementary subspace of \( W \). Choose a basis \( \{u_1, \dots, u_{n-k}\} \) for \( U \). Since \( V = W \oplus U \), each \( u_j \) can be uniquely expressed as:
\[
u_j = \underbrace{\sum_{i=1}^k a_{ij}w_i}_{\text{component in } W} + \underbrace{v_{k+j}}_{\text{component in } \operatorname{span}\{v_{k+1}, \dots, v_n\}}.
\]
The coefficients \( a_{ij} \) define a matrix \( A \), and \( U = U_A \). Thus, every complementary subspace corresponds to some \( A \).

\subsubsection*{Counting}
The number of \( k \times (n - k) \) matrices over \( \mathbb{F}_q \) is \( q^{k(n - k)} \). By the bijection established above, this is also the number of complementary subspaces.
\end{proof}

\begin{exm}[Case \( n = 2, k = 1 \)]
Let \( V = \mathbb{F}_q^2 \) and \( W = \operatorname{span}\{(1, 0)\} \). Complementary subspaces are 1-dimensional subspaces not equal to \( W \). These are of the form \( \operatorname{span}\{(a, 1)\} \) for \( a \in \mathbb{F}_q \), yielding exactly \( q \) complements, consistent with \( q^{1(2-1)} = q \).
\end{exm}

\begin{exm}[Case \( n = 3, k = 1 \)]
Let \( V = \mathbb{F}_q^3 \) and \( W = \operatorname{span}\{(1, 0, 0)\} \). A complementary subspace \( U \) has a basis of the form:
\[
(a, 1, 0),\ (b, 0, 1),
\]
for \( a, b \in \mathbb{F}_q \). There are \( q^2 \) choices, matching \( q^{1(3-1)} = q^2 \).
\end{exm}




From the above theorem we have the following corollary:
\begin{cor}
\label{cor:2}
For a fixed 2-dimensional subspace $K$ of $ (\mathbb{Z}_2)^4  $, there are 16,  2-dimensional subspaces \( H \) such that \( K \cap H = \{0,0,0,0\} \). 
\end{cor}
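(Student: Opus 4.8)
The plan is to deduce this corollary directly from Theorem~\ref{thm:complement}, after observing that in this particular dimension regime the trivial-intersection hypothesis is equivalent to $H$ being a complement of $K$. So the real work is conceptual rather than computational: I need to recognize that the set being counted is exactly a set already enumerated by the preceding theorem.

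First I would fix the setting $V = (\mathbb{Z}_2)^4$, so that $n = 4$ and $q = 2$, and regard $K$ as the given $2$-dimensional subspace, so $k = 2$. The central observation is that any $2$-dimensional subspace $H$ with $K \cap H = \{0\}$ is automatically a complement of $K$. Indeed, the Grassmann dimension formula gives $\dim(K + H) = \dim K + \dim H - \dim(K \cap H) = 2 + 2 - 0 = 4 = \dim V$, so $K + H = V$ and hence $V = K \oplus H$. Conversely, every complement of $K$ has dimension $n - k = 2$ and meets $K$ only in $\{0\}$. Thus the subspaces $H$ enumerated in the statement are precisely the complements of $K$, and counting one set is the same as counting the other.

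Next I would invoke Theorem~\ref{thm:complement}, which asserts that the number of complements of a $k$-dimensional subspace of an $n$-dimensional space over $\mathbb{F}_q$ is $q^{k(n-k)}$. Substituting $n = 4$, $k = 2$, $q = 2$ yields $2^{2(4-2)} = 2^{4} = 16$, which is the claimed count.

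The only genuine step is the equivalence between the trivial-intersection condition and the complement property, and this is settled at once by the dimension formula precisely because $\dim K + \dim H = \dim V$ in this case; it is exactly the coincidence $k = n - k$ for $(n,k) = (4,2)$ that makes a trivially-intersecting $H$ of the right dimension force a direct sum decomposition. The remainder is a direct substitution into the already-established count, so I do not expect any real obstacle here.
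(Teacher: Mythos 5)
Your proof is correct and follows essentially the same route as the paper: both reduce the count to the number of complements of $K$ and invoke Theorem~\ref{thm:complement} with $n=4$, $k=2$, $q=2$ to obtain $2^{2(4-2)}=16$. Your explicit verification via the dimension formula that trivial intersection of two $2$-dimensional subspaces in a $4$-dimensional space is equivalent to being a complement is a welcome bit of extra care that the paper leaves implicit in its (somewhat loosely stated) definition of complement.
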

\begin{proof}
Let \( K \subset G \) with \( \dim(K) = 2 \) and \( \dim(G) = 4 \). Here $q=2$ as well.  From Theorem \ref{thm:complement},  the number of complementary subspaces is given by
$
2^{2 \cdot (4 - 2)} = 2^4 = 16.
$
\end{proof}

The above result says that, for any subgroup $K$ of $G=(\mathbb{Z}_2)^4 $ of order 4, there exists16 other subgroups of order 4, which are trivially intersecting with $K$. Therefore, if  we construct a graph with whole 35 order 4 subgroups as vertices and trivially intersecting subgroups forms edges, the resulting graph is a simple 16-regular graph.

Let us analyze the edge-connectivity of  such graphs.

\begin{defn}
    The \textbf{edge connectivity} $\lambda(G)$ of $G$ is the minimum number of edges whose removal disconnects $G$.
\end{defn}

The following results can be used to compute the number of pairs on non-intersecting subgroups of order 4.

\begin{thm}\label{thm:aut}
The automorphism group \(\text{Aut}(G)\) of \(G=(\mathbb{Z}_2)^4\) is isomorphic to \(\text{GL}(4, 2)\).
\end{thm}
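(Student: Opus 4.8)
The plan is to exploit the identification of $G = (\mathbb{Z}_2)^4$ with the four-dimensional vector space $\mathbb{F}_2^4$ over the field $\mathbb{F}_2 = \mathbb{Z}_2$, and to argue that the group automorphisms of $G$ coincide exactly with the invertible $\mathbb{F}_2$-linear maps of this space. First I would fix the standard basis $e_1,\dots,e_4$ of $G$ and observe that every element of $G$ is a unique $\mathbb{F}_2$-linear combination of these basis vectors, since the group operation is vector addition and the only available scalars are $0$ and $1$.

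The central step is to show that any group homomorphism $\phi : G \to G$ is automatically $\mathbb{F}_2$-linear. Additivity $\phi(x+y)=\phi(x)+\phi(y)$ holds by the definition of a homomorphism, and homogeneity $\phi(cx)=c\,\phi(x)$ for $c \in \{0,1\}$ is immediate: $\phi(0 \cdot x)=\phi(0)=0=0 \cdot \phi(x)$ and $\phi(1 \cdot x)=\phi(x)=1 \cdot \phi(x)$. Hence the notions of group endomorphism and $\mathbb{F}_2$-linear endomorphism of $G$ coincide. This is the point to emphasize: over the prime field $\mathbb{F}_2$, no separate scalar-compatibility condition is required, because scalar multiplication is recovered from repeated addition.

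Next I would restrict attention to automorphisms. A group homomorphism $\phi$ is bijective precisely when the corresponding linear map is invertible, that is, when its matrix relative to $e_1,\dots,e_4$ lies in $\mathrm{GL}(4,2)$. This yields a well-defined bijection $\Phi : \mathrm{Aut}(G) \to \mathrm{GL}(4,2)$ sending each automorphism to its matrix. Finally I would verify that $\Phi$ respects the group operations: composition of automorphisms corresponds to matrix multiplication, so $\Phi(\phi \circ \psi) = \Phi(\phi)\,\Phi(\psi)$, and the identity automorphism maps to the identity matrix. Since $\Phi$ is a bijective homomorphism, it is an isomorphism, giving $\mathrm{Aut}(G) \cong \mathrm{GL}(4,2)$.

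I do not expect a genuine obstacle here, as the result is a standard instance of $\mathrm{Aut}((\mathbb{Z}_p)^n) \cong \mathrm{GL}(n,p)$; the only subtlety worth stating carefully is the automatic $\mathbb{F}_2$-linearity of additive maps, which is exactly what makes the correspondence with $\mathrm{GL}(4,2)$ an isomorphism rather than merely an embedding.
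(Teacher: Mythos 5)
Your proposal is correct and follows essentially the same route as the paper: identify $(\mathbb{Z}_2)^4$ with $\mathbb{F}_2^4$, note that additive maps are automatically $\mathbb{F}_2$-linear since the only scalars are $0$ and $1$, and conclude that bijective endomorphisms correspond to matrices in $\mathrm{GL}(4,2)$. If anything, you are slightly more complete than the paper in explicitly checking that composition of automorphisms corresponds to matrix multiplication, so the correspondence is a group isomorphism and not merely a bijection.
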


\begin{proof}
A group automorphism of \(G\) is a bijective map \(T: G \to G\) that preserves the group operation:
\[
T(\mathbf{u} + \mathbf{v}) = T(\mathbf{u}) + T(\mathbf{v}), \quad \forall \mathbf{u}, \mathbf{v} \in G.
\]
Since \(G\) is a vector space over \(\mathbb{Z}_2\), automorphisms must also preserve scalar multiplication:
\[
T(c\mathbf{u}) = cT(\mathbf{u}), \quad \forall c \in \mathbb{Z}_2, \, \mathbf{u} \in G.
\]
Thus, every automorphism of \(G\) is a \textit{linear transformation}.

Let \(\mathcal{B} = \{\mathbf{e}_1, \mathbf{e}_2, \mathbf{e}_3, \mathbf{e}_4\}\) be the standard basis for \(G\). Any linear transformation \(T\) can be represented by a \(4 \times 4\) matrix \(A_T\) over \(\mathbb{Z}_2\), where:
\[
T(\mathbf{e}_j) = \sum_{i=1}^4 a_{ij} \mathbf{e}_i, \quad a_{ij} \in \mathbb{Z}_2, ~j=1,\cdots,4.
\]
For \(T\) to be invertible, \(A_T\) must have a non-zero determinant in \(\mathbb{Z}_2\). The set of all such matrices forms \(\text{GL}(4, 2)\). 

To show there are no non-linear automorphisms, observe that any additive bijection preserving \(\mathbb{Z}_2\)-scalar multiplication must be linear (see \cite{Roman}). Thus, \(\text{Aut}(G) \cong \text{GL}(4, 2)\).
\end{proof}

\begin{defn}[Vertex transitivity]
A graph \(\Gamma = (V, E)\) is \textbf{vertex-transitive} if its automorphism group \(\text{Aut}(\Gamma)\) acts \textit{transitively} on its vertex set \(V\). Formally, for any two vertices \(u, v \in V\), there exists an automorphism \(\sigma \in \text{Aut}(\Gamma)\) such that:
\[
\sigma(u) = v.
\]
Equivalently, the automorphism group \(\text{Aut}(\Gamma)\) induces a single orbit on \(V\) under its action.
\end{defn}

\noindent \textbf{Implications:}
\begin{itemize}
    \item All vertices in a vertex-transitive graph are \textit{structurally indistinguishable}; the graph ``looks the same" from any vertex.
    \item Vertex-transitive graphs are necessarily \textit{regular} (all vertices have the same degree).
\end{itemize}

\begin{thm}\label{thm:vt}
Let \(\Gamma\) be the graph whose vertices are 2-dimensional subspaces of \(G=(\Z_2)^4\), with two subspaces adjacent if they intersect trivially. Then \(\Gamma\) is vertex-transitive.
\end{thm}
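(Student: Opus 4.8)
The plan is to exhibit a subgroup of graph automorphisms of \(\Gamma\) that already acts transitively on the vertex set, namely the group induced by the linear automorphisms of \(G\). First I would recall from Theorem~\ref{thm:aut} that \(\text{Aut}(G) \cong \text{GL}(4,2)\), so every additive bijection \(T : G \to G\) is in fact a linear isomorphism. Since \(T\) is linear and bijective, it carries each \(2\)-dimensional subspace to a \(2\)-dimensional subspace, and hence permutes the vertex set of \(\Gamma\).

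Next I would verify that this permutation is a graph automorphism, i.e.\ that \(T\) preserves the adjacency relation. If \(H\) and \(K\) are adjacent vertices, then \(H \cap K = \{0\}\); applying \(T\) and using that \(T\) is a linear bijection gives \(T(H) \cap T(K) = T(H \cap K) = \{0\}\), so \(T(H)\) and \(T(K)\) are again adjacent. Applying the same reasoning to \(T^{-1}\) shows that non-adjacency is preserved as well, so \(T\) induces a genuine element of \(\text{Aut}(\Gamma)\). This produces a homomorphism \(\text{GL}(4,2) \to \text{Aut}(\Gamma)\).

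The key remaining step, and the real content of the argument, is to show that \(\text{GL}(4,2)\) acts transitively on the set of \(2\)-dimensional subspaces of \(G\). Given any two such subspaces \(H\) and \(K\), I would choose a basis \(\{h_1, h_2\}\) of \(H\) and a basis \(\{k_1, k_2\}\) of \(K\), extend each to a full basis of \(G\) over \(\mathbb{F}_2\), and let \(T\) be the linear map sending the first basis to the second in order. Then \(T \in \text{GL}(4,2)\) and \(T(H) = K\). Consequently the image of \(\text{GL}(4,2)\) inside \(\text{Aut}(\Gamma)\) already acts transitively on the vertices, which is precisely the assertion that \(\Gamma\) is vertex-transitive.

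I expect the only mild subtlety to be the bookkeeping in the basis-extension step: one must note that any linearly independent set in \(G\) extends to a basis (always possible over \(\mathbb{F}_2\)) and that prescribing the images of a basis determines a unique invertible linear map. Both facts are standard, so once the homomorphism \(\text{GL}(4,2) \to \text{Aut}(\Gamma)\) is established, the conclusion follows with no further obstacle.
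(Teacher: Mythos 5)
Your proposal is correct and follows essentially the same route as the paper: both arguments let $\mathrm{GL}(4,2)$ act on the $2$-dimensional subspaces, observe that a linear bijection preserves trivial intersections (hence adjacency), and invoke transitivity of $\mathrm{GL}(4,2)$ on subspaces of a fixed dimension. The only difference is one of detail — you spell out the basis-extension construction of the transporting map (and check that $T^{-1}$ preserves non-adjacency), whereas the paper simply cites the standard transitivity property; both are fine.
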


\begin{proof}
The automorphism group \(\text{GL}(4, 2)\) acts on the set of 2-dimensional subspaces via:
\[
T \cdot H = \{T(\mathbf{v}) \mid \mathbf{v} \in H\}, \quad T \in \text{GL}(4, 2), \, H \leq G.
\]
By the fundamental transitivity property of \(\text{GL}(n, q)\) on $k$-dimensional subspaces, for any two 2-dimensional subspaces \(H_1, H_2\), there exists \(T \in \text{GL}(4, 2)\) such that \(T(H_1) = H_2\). 

To see that adjacency is preserved, suppose \(H_1 \cap H_2 = \{(0,0,0,0)\}\). Then:
\[
T(H_1) \cap T(H_2) = T(H_1 \cap H_2) = T(\{(0,0,0,0)\}) = \{(0,0,0,0)\}.
\]
Thus, \(T\) maps adjacent vertices to adjacent vertices. By Theorem \ref{thm:aut}, \(\text{GL}(4, 2)\) acts transitively on the vertices of \(\Gamma\), making \(\Gamma\) vertex-transitive.
\end{proof}

\begin{thm}[Edge connectivity theorem]
\label{watkin}
Let \(\Gamma = (V, E)\) be a connected, vertex-transitive, non-bipartite graph with regularity \(k\). Then, the edge connectivity \(\lambda(\Gamma)\) satisfies \(\lambda(\Gamma) = k\).
\end{thm}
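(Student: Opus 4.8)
The plan is to prove the nontrivial inequality $\lambda(\Gamma)\ge k$; the reverse bound $\lambda(\Gamma)\le k$ is immediate, since deleting the $k$ edges incident with any single vertex already disconnects $\Gamma$. I would argue by contradiction and suppose $\lambda(\Gamma)<k$. The central idea is to study the \emph{fragments} of $\Gamma$, i.e.\ the proper nonempty subsets $S\subsetneq V$ whose edge boundary $\partial S$ (the set of edges with exactly one endpoint in $S$) satisfies $|\partial S|=\lambda(\Gamma)$, and in particular the fragments of smallest cardinality, which I will call \emph{atoms}. Writing $a=|A|$ for the size of an atom $A$ and $d$ for the common number of neighbours inside $A$ of a vertex of $A$ (a quantity I will show is well defined), a degree count gives the cut identity $\lambda(\Gamma)=a(k-d)$, and the argument is arranged so that this identity is incompatible with $\lambda(\Gamma)<k$.

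First I would record two structural facts. Since $S$ is a fragment exactly when $\bar S$ is, the smallest fragment satisfies $a\le |V|/2$. The key lemma is that distinct atoms are \emph{disjoint}, and for this I would invoke the submodularity of the boundary function, $|\partial S|+|\partial T|\ge |\partial(S\cap T)|+|\partial(S\cup T)|$. If two atoms $A,A'$ met in a nonempty set, then $A\cap A'$ and $A\cup A'$ would both be proper nonempty subsets (the bound $a\le |V|/2$ guarantees $A\cup A'\ne V$), so each boundary has size at least $\lambda(\Gamma)$; submodularity then forces equality throughout, making $A\cap A'$ a fragment strictly smaller than $a$, contradicting minimality. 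I expect this disjointness step, together with correctly disposing of the boundary cases, to be the main obstacle.

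Next I would exploit vertex-transitivity. Every automorphism of $\Gamma$ preserves both boundary sizes and cardinalities, hence carries atoms to atoms; by transitivity the $\mathrm{Aut}(\Gamma)$-images of a fixed atom cover $V$, and by disjointness they partition it. Thus the atoms form a system of blocks of imprimitivity, and the setwise stabiliser of a block $A$ acts transitively on $A$. Consequently the induced subgraph $\Gamma[A]$ is vertex-transitive, hence regular of some degree $d\le a-1$, which justifies the cut identity $\lambda(\Gamma)=a(k-d)$; moreover, since $\Gamma$ is connected and $A\ne V$, at least one edge leaves $A$, so $k-d\ge 1$.

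Finally I would close by optimising $a(k-d)$ subject to $1\le a$, $0\le d\le a-1$, and $k-d\ge 1$. If $a\le k$, using $d\le a-1$ gives $\lambda(\Gamma)=a(k-d)\ge a(k-a+1)$, and the quadratic $a\mapsto a(k-a+1)$ attains its minimum over $1\le a\le k$ at the endpoints, where it equals $k$; if instead $a>k$, then $k-d\ge 1$ already yields $\lambda(\Gamma)=a(k-d)\ge a>k$. In every case $\lambda(\Gamma)\ge k$, contradicting $\lambda(\Gamma)<k$ and finishing the proof. I would add the remark that this argument uses only connectedness and vertex-transitivity, so the non-bipartite hypothesis in the statement, while harmless, is not actually needed for the edge-connectivity conclusion.
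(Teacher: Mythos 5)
Your argument is correct, and it takes a genuinely different route from the paper: you give the classical fragment/atom proof (essentially the Mader--Watkins argument), establishing disjointness of atoms via submodularity of the edge boundary, showing that the automorphism images of an atom form a block system whose setwise stabiliser acts transitively on each block, so that $\Gamma[A]$ is regular of some degree $d$, and then reading off $\lambda(\Gamma)=a(k-d)\ge k$ from the cut identity and the elementary optimisation over $a$ and $d$. The steps check out, including the boundary case $A\cup A'\ne V$ (two non-disjoint atoms of size $a\le |V|/2$ have union of size strictly less than $2a\le |V|$), and your closing remark is right: non-bipartiteness is not needed for this conclusion. The paper instead argues directly on a minimum edge cut $S$ separating $V$ into $A$ and $B$: it asserts that vertex transitivity forces every vertex of $A$ to be incident to exactly $t=|S|/|A|$ cut edges, which does not follow from vertex transitivity of $\Gamma$ alone, and it then invokes Menger's theorem to produce $k$ edge-disjoint paths between $u\in A$ and $v\in B$, which presupposes the $k$-edge-connectivity being proved, so the paper's argument is circular at that point. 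Your route costs the extra machinery of atoms and imprimitivity, but that machinery is precisely what replaces the unjustified uniformity claim and makes the proof rigorous.
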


\begin{proof}
Assume for contradiction that \(\lambda(\Gamma) < k\). Let \(S\) be a minimal edge cut with \(|S| = \lambda(\Gamma)\), partitioning \(\Gamma\) into components \(A\) and \(B\). By vertex transitivity, every vertex in \(A\) (resp. \(B\)) is incident to exactly \(t\) edges in \(S\), where \(t = |S| / |A| = |S| / |B|\). Since \(|S| < k\), we have \(t < k\), so each vertex retains \(k - t \geq 1\) edges within its component.

To see that \(A\) and \(B\) remain connected, observe that the minimality of \(S\) ensures no proper subset of \(S\) disconnects \(\Gamma\). If \(A\) were disconnected into subcomponents \(A_1\) and \(A_2\), a smaller edge cut \(S' \subset S\) could disconnect \(A_1\) from \(A_2\), contradicting the minimality of \(S\). Thus, \(A\) and \(B\) are connected subgraphs.

Menger’s theorem [ \cite{diestel}, Theorem 3.3.6] states that, a graph is \(k\)-edge-connected if and only if any two vertices are connected by \(k\) edge-disjoint paths. Therefore any two vertices \(u \in A\) and \(v \in B\) have \(k\) edge-disjoint paths in \(\Gamma\). However, removing \(|S| < k\) edges disconnects \(u\) and \(v\), which is impossible. This contradiction forces \(\lambda(\Gamma) \geq k\). Since \(\lambda(\Gamma) \leq k\) trivially, we conclude \(\lambda(\Gamma) = k\).
\end{proof}

\begin{cor}
\label{edgeconn}
\(\Gamma\) is a 16-regular, vertex-transitive, non-bipartite graph having 35 vertices with edge connectivity 16.
\end{cor}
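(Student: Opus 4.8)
The plan is to assemble the structural facts about $\Gamma$ that are already established and then verify the two hypotheses of the Edge connectivity theorem (Theorem \ref{watkin}) that have not yet been recorded, namely non-bipartiteness and connectedness. The vertex count of $35$ is immediate from Proposition \ref{pro:4}; the $16$-regularity follows from Corollary \ref{cor:2}, which guarantees that every $2$-dimensional subspace has exactly $16$ trivially intersecting partners; and vertex-transitivity is precisely Theorem \ref{thm:vt}. Once connectedness and non-bipartiteness are checked, Theorem \ref{watkin} applies with $k=16$ and yields $\lambda(\Gamma)=16$ directly.

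For non-bipartiteness, I would exhibit an odd cycle, and the cleanest choice is a triangle, i.e.\ three pairwise trivially intersecting $2$-dimensional subspaces. Taking $H_1=\langle (1,0,0,0),(0,1,0,0)\rangle$, $H_2=\langle (0,0,1,0),(0,0,0,1)\rangle$, and $H_3=\langle (1,0,1,0),(0,1,0,1)\rangle$, a one-line check shows each pairwise intersection is trivial: a common vector in any two of them forces all coordinate coefficients to vanish. These three vertices therefore form a triangle in $\Gamma$, so $\Gamma$ contains an odd cycle and is non-bipartite.

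For connectedness, I would avoid an explicit path construction and instead use a counting argument based on vertex-transitivity. Since $\mathrm{Aut}(\Gamma)$ acts transitively on vertices, it permutes the connected components among themselves, so all components are isomorphic and in particular share a common size $s$; consequently $s$ divides the total vertex count $35$. On the other hand, $\Gamma$ is $16$-regular, so every component contains at least $16+1=17$ vertices, giving $s\geq 17$. The only divisor of $35=5\cdot 7$ that is at least $17$ is $35$ itself, so $\Gamma$ has a single component and is connected.

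With connectedness, vertex-transitivity, non-bipartiteness, and $16$-regularity all in hand, Theorem \ref{watkin} applies verbatim to give $\lambda(\Gamma)=16$, completing the proof. The only genuinely non-routine step is the connectedness argument; the divisibility shortcut works precisely because $35$ has no divisor strictly between $17$ and itself, so if that arithmetic coincidence failed, the fallback would be a direct reachability proof showing that any two $2$-dimensional subspaces are joined by a chain of common complements.
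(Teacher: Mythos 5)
Your proof is correct and follows essentially the same route as the paper's: $16$-regularity from Corollary \ref{cor:2}, vertex-transitivity from Theorem \ref{thm:vt}, connectedness via the observation that all components are isomorphic (hence of equal size at least $17$, which forces a single component on $35$ vertices), non-bipartiteness via an odd cycle, and finally Theorem \ref{watkin}. The only difference is that you make explicit what the paper leaves as ``easily verified'' --- the concrete triangle $H_1=\langle(1,0,0,0),(0,1,0,0)\rangle$, $H_2=\langle(0,0,1,0),(0,0,0,1)\rangle$, $H_3=\langle(1,0,1,0),(0,1,0,1)\rangle$ and the divisibility form of the component-size argument --- which tightens the exposition but is not a different method.
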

\begin{proof}
Vertex transitivity follows from Theorem \ref{thm:vt} and hence the graph is connected (Since each component has at least 17 vertices and components containing 17 and 18 vertices is not possible since the graph is vertex transitive).  Regularity (degree 16) was shown in Corollary \ref{cor:2} by counting complementary subspaces. It can be verified easily that the graph contains odd cycles and hence it is not bi-partite. Now by Theorem \ref{watkin}, a connected, vertex-transitive, non-bipartite graph has edge connectivity equal to its regularity. That is $\lambda(\Gamma)=16$.
\end{proof}

\begin{thm}[Tutte's theorem \cite{Tutte1947}]
A graph $G$ has a perfect matching if and only if for every $U \subseteq V(G)$, the number of odd-sized components in $G - U$ is at most $|U|$.
\end{thm}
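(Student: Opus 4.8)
The plan is to prove the two implications separately; write $o(H)$ for the number of odd-order components of a graph $H$, so that Tutte's condition reads $o(G-U)\le |U|$ for all $U\subseteq V(G)$. The forward (necessity) direction is routine: given a perfect matching $M$ and a set $U\subseteq V(G)$, every odd component $C$ of $G-U$ has odd order, so $M$ cannot saturate $C$ from within; hence some vertex of $C$ is matched by $M$ to a vertex outside $C$, which (as $C$ is a full component of $G-U$) must lie in $U$. Because $M$ is a matching, distinct odd components are sent to distinct vertices of $U$, giving $o(G-U)\le|U|$.

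For the reverse (sufficiency) direction I would argue by contradiction through an edge-maximality reduction. The first observation is that Tutte's condition is preserved under adding edges: inserting an edge can only merge two components of any $G-U$, and a merge changes $o(G-U)$ by $0$ or $-2$. Thus, if some graph satisfies the condition but has no perfect matching, I may replace it by an edge-maximal counterexample $G$, for which $G$ has no perfect matching yet $G+e$ does for every non-edge $e$. Set $U=\{v\in V(G):\deg_G(v)=|V(G)|-1\}$, the universal vertices. The heart of the proof is the claim that every component of $G-U$ is complete. Granting this, a perfect matching is assembled directly: match one vertex of each odd component to a private universal vertex (possible since $o(G-U)\le|U|$), match the remaining even complete pieces internally, and match the leftover universal vertices among themselves—these are even in number because $|V(G)|$ is even (take $U=\emptyset$ in the condition) and a parity count forces $|U|\equiv o(G-U)\pmod 2$. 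This contradicts the choice of $G$.

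It remains to prove the completeness claim, which is where the real work lies. Supposing some component of $G-U$ is not complete, I would select vertices $a,b,a'$ in it with $ab,a'b\in E(G)$ but $aa'\notin E(G)$, and, since $b\notin U$, a further vertex $c$ with $bc\notin E(G)$. Edge-maximality supplies perfect matchings $M_1$ of $G+aa'$ and $M_2$ of $G+bc$, each necessarily using its new edge (else $G$ itself would be matchable). I would then analyze $M_1\triangle M_2$, which is a disjoint union of alternating even cycles, and track the cycles carrying $aa'$ and $bc$. If these two edges lie on different cycles, splicing $M_1$ on the $bc$-cycle with $M_2$ elsewhere yields a perfect matching of $G$ avoiding both forbidden edges. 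If they lie on the same cycle, I would break that cycle at $b$ and reroute using one of the genuine edges $ab$ or $a'b$, again producing a perfect matching inside $G$; either way the existence of such a matching contradicts the hypothesis on $G$.

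I expect the same-cycle case of the completeness claim to be the main obstacle: one must orient the shared alternating cycle, locate $a,a',b,c$ along it, verify the parities of the two arcs cut out by $b$, and choose correctly between the patching edges $ab$ and $a'b$ so that the recombined edge set is genuinely a perfect matching of $G$ using only real edges. Everything else—the necessity direction, the preservation of Tutte's condition under edge addition, and the final assembly of the matching from complete components—is bookkeeping by comparison.
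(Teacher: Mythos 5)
The paper does not actually prove this statement: Tutte's theorem is quoted as a classical result, attributed to Tutte (1947), and is then used as a black box in the subsequent lemmas on matchings in regular graphs. There is therefore no in-paper argument to compare yours against. What you have written is the standard Lov\'asz-style proof (the one found in Diestel and in West), and it is essentially correct: the necessity direction, the observation that the condition $o(G-U)\le |U|$ is preserved under edge addition, the reduction to an edge-maximal counterexample $G$ with $U$ the set of universal vertices, the parity count $o(G-U)\equiv |U| \pmod 2$, and the assembly of a perfect matching once every component of $G-U$ is known to be complete are all sound as stated. The different-cycle case of the completeness claim is also handled correctly by the splice $(M_2\setminus E(C))\cup(M_1\cap E(C))$ on the cycle $C$ carrying $bc$.

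The one place where your argument is a plan rather than a proof is the same-cycle case, and you rightly flag it as the crux; for completeness, it does close exactly as you describe, and the choice between the chords $ab$ and $a'b$ is forced. Deleting the two fictitious edges $aa'$ and $bc$ from the common alternating cycle $C$ leaves two arcs whose endpoint pairs, after possibly swapping the labels $a$ and $a'$, are $\{a,b\}$ and $\{a',c\}$. Each arc begins with an $M_2$-edge (at $a$ or $a'$, since the $M_1$-edge there is $aa'$) and ends with an $M_1$-edge (at $b$ or $c$, since the $M_2$-edge there is $bc$), hence has an even number of edges and an odd number of vertices. The correct patch is the chord joining $b$ to the endpoint of the arc \emph{not} containing $b$, namely $a'b$: take $a'b$, the $M_2$-edges of the $a$--$b$ arc (which saturate that arc minus $b$), the $M_1$-edges of the $a'$--$c$ arc (which saturate that arc minus $a'$), and $M_2$ off $C$; every edge used lies in $G$, so this is a perfect matching of $G$, the desired contradiction. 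Using $ab$ instead strands an odd number of vertices on each arc, so the parity verification you anticipate is genuinely needed. With that detail supplied, your proof is complete.
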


\begin{lem}[Edge bound for odd components]
\label{lemma:edge_bound}
Let $G$ be a $k$-regular graph with $\lambda(G) \geq k-1$. For any $U \subseteq V(G)$ and any odd component $O_i$ in $G - U$, the number of edges $e(O_i, U)$ between $O_i$ and $U$ satisfies $e(O_i, U) \geq k$.
\end{lem}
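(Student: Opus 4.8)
The plan is to combine an elementary degree-counting identity with the edge-connectivity hypothesis, using a parity argument to bridge the gap between $k-1$ and $k$. First I would record the key structural fact: because $O_i$ is a \emph{connected component} of $G - U$, no edge joins $O_i$ to any other component of $G-U$, so every edge leaving $O_i$ must land in $U$. Consequently the edges counted by $e(O_i,U)$ are precisely the edges of the cut separating $O_i$ from $V(G)\setminus O_i$. Provided $U\neq\emptyset$—which must be assumed here, for otherwise $O_i=V(G)$ and $e(O_i,U)=0$—this is a genuine edge cut, and by the definition of edge connectivity together with the hypothesis $\lambda(G)\geq k-1$ we immediately obtain $e(O_i,U)\geq k-1$.

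Next I would extract the parity of $e(O_i,U)$. Summing degrees over the vertices of $O_i$ and using $k$-regularity gives the identity $k\,|O_i| = 2\,e(O_i) + e(O_i,U)$, where $e(O_i)$ denotes the number of edges internal to $O_i$: each internal edge is counted twice and each boundary edge once, and there are no edges to other components. Reducing modulo $2$ and using that $|O_i|$ is \emph{odd} yields $e(O_i,U)\equiv k\pmod 2$.

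The two facts now combine cleanly. Since $k-1$ has the opposite parity to $k$, the congruence $e(O_i,U)\equiv k\pmod 2$ forbids the value $e(O_i,U)=k-1$; together with $e(O_i,U)\geq k-1$ this forces $e(O_i,U)\geq k$, as claimed. The argument is short, so I do not anticipate a serious obstacle; the only point demanding care is the interplay of the two estimates—edge connectivity alone delivers only $k-1$, and it is precisely the oddness of $|O_i|$, fed through the regularity identity, that supplies the extra unit needed to reach $k$. I would also flag the degenerate case $U=\emptyset$ explicitly, since the bound genuinely requires $U$ to be nonempty.
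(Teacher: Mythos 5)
Your proof is correct and follows essentially the same route as the paper's: the edge-connectivity hypothesis gives $e(O_i,U)\geq k-1$, and the degree-sum identity $k|O_i| = 2e(O_i) + e(O_i,U)$ together with the oddness of $|O_i|$ forces $e(O_i,U)\equiv k \pmod 2$, which rules out the value $k-1$. The only differences are cosmetic: you unify the paper's two parity cases ($k$ even versus $k$ odd) into a single congruence, and you explicitly flag the degenerate case $U=\emptyset$, which the paper leaves implicit.
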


\begin{proof}
Since $\lambda(G) \geq k-1$, any edge cut separating $O_i$ from $U$ has at least $k-1$ edges. If $k$ is even, $e(O_i, U) =k|O_i| - 2|E(O_i)|$, is even, and so $e(O_i, U) \geq k$ (Here $E(O_i)$is the number of edges within $O_i)$. If $k$ is odd, $e(O_i, U)$ is odd (as $k|O_i|$ is odd and $2|E(O_i)|$ is even), so $e(O_i, U) \geq k$. 
\end{proof}

\begin{thm}[Even vertex case]
\label{thm:even}
Every $k$-regular graph with an even number of vertices and $\lambda(G) \geq k-1$ contains a perfect matching.
\end{thm}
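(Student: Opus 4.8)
The plan is to verify Tutte's condition and then invoke Tutte's theorem directly. Since $G$ has an even number of vertices, a perfect matching exists precisely when $o(G-U) \le |U|$ for every $U \subseteq V(G)$, where $o(\cdot)$ counts the odd-order components. Thus the whole task reduces to establishing this inequality, and Lemma \ref{lemma:edge_bound} is exactly what powers the count.

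First I would dispose of the case $U = \emptyset$ separately. Because $\lambda(G) \ge k-1$, for $k \ge 2$ the graph is connected, so $G$ is its own single component; as $|V(G)|$ is even this component has even order and $o(G) = 0 = |\emptyset|$. (The degenerate cases $k \le 1$ are immediate, since a $1$-regular graph with an even number of vertices is already a disjoint union of edges, hence a perfect matching.)

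For nonempty $U$, I would run a double-counting argument on the edges joining $U$ to the odd components of $G-U$. Let $O_1, \ldots, O_m$ be these odd components. Every edge leaving an $O_i$ must land in $U$, because $O_i$ is a full component of $G-U$ and so has no edges to the other components; by Lemma \ref{lemma:edge_bound} each $O_i$ sends at least $k$ such edges, giving $\sum_{i=1}^{m} e(O_i, U) \ge mk$. On the other hand, all of these edges are incident to $U$, and $k$-regularity bounds the number of edges incident to $U$ by $k|U|$; in particular $\sum_{i=1}^{m} e(O_i, U) \le e(U, V(G)\setminus U) \le k|U|$. Combining the two estimates yields $mk \le k|U|$, and dividing by $k$ gives $o(G-U) = m \le |U|$.

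With Tutte's condition verified for all $U$, Tutte's theorem produces a perfect matching, completing the argument. The one genuinely delicate point is the interface with Lemma \ref{lemma:edge_bound}: its edge-cut estimate needs a nonempty $U$ so that the edges between $O_i$ and $U$ actually form a cut of the connected graph, which is why the empty-$U$ case must be peeled off and settled via connectivity together with the parity of $|V(G)|$. Everything after that reduction is just the degree-sum bookkeeping sketched above.
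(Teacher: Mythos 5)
Your proposal is correct and follows essentially the same route as the paper: verify Tutte's condition by double-counting the edges between $U$ and the odd components of $G-U$, using Lemma \ref{lemma:edge_bound} for the lower bound of $k$ per odd component and $k$-regularity for the upper bound $k|U|$. Your explicit treatment of the case $U=\emptyset$ (where the edge-cut lemma does not apply and connectivity plus parity must be invoked instead) is a small point of extra care that the paper's contradiction-style write-up leaves implicit.
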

\begin{proof}
Assume $G$ has no perfect matching. By Tutte's Theorem, there exists $U \subseteq V(G)$ such that the number of odd components $c(G - U) > |U|$. By Lemma \ref{lemma:edge_bound}, each odd component $O_i$ satisfies $e(O_i, U) \geq k$. The total edges from $U$ to all odd components is $\geq k \cdot c(G - U)$. Since $G$ is $k$-regular, this total is $\leq k|U|$. Thus:
\[
k \cdot c(G - U) \leq k|U| \implies c(G - U) \leq |U|,
\]
contradicting $c(G - U) > |U|$. Hence, $G$ has a perfect matching. $\square$
\end{proof}

\begin{thm}[Odd vertex case]
\label{thm:odd}
Every $k$-regular graph with an odd number of vertices and $\lambda(G) \geq k-1$ has a maximum matching of size $\left\lfloor \frac{|V|}{2} \right\rfloor$.
\end{thm}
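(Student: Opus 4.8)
The plan is to reduce the odd-vertex case to Tutte's theorem by a \emph{universal-vertex} trick, which lets me reuse verbatim the edge-counting argument already established for the even case (Theorem~\ref{thm:even}). Since $|V|$ is odd, $G$ admits no perfect matching, so the best achievable is a matching missing exactly one vertex, which has size $(|V|-1)/2=\lfloor |V|/2\rfloor$ and is automatically maximum. First I would record that, because $\lambda(G)\ge k-1\ge 1$ (we may assume $k\ge 2$, the case of interest), $G$ is connected; hence, $|V|$ being odd, $G$ is itself the unique odd-order component of $G$, a fact I will need repeatedly.

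Next I would introduce the auxiliary graph $G^{+}$ obtained from $G$ by adjoining a single new vertex $z$ adjacent to every vertex of $G$, so that $|V(G^{+})|=|V|+1$ is even. The key equivalence is that any perfect matching of $G^{+}$ must pair $z$ with some $v\in V$ and then match the rest within $G-v$, while conversely a matching of $G$ missing a single vertex $v$ extends to a perfect matching of $G^{+}$ via the edge $zv$. Thus $G^{+}$ has a perfect matching if and only if $G$ has a matching of size $\lfloor |V|/2\rfloor$, and it suffices to verify Tutte's condition $c_o(G^{+}-W)\le |W|$ for every $W\subseteq V(G^{+})$, where $c_o$ denotes the number of odd-order components.

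I would then split into two cases according to whether $z\in W$. If $z\notin W$, then $G^{+}-W$ is either the single vertex $\{z\}$ (when $W=V$) or contains $z$ adjacent to all remaining vertices of $V\setminus W$ and is therefore connected; in either case $c_o(G^{+}-W)\le 1$, so the inequality holds once $W\neq\emptyset$, while $W=\emptyset$ gives the connected even graph $G^{+}$ with $c_o=0$. If $z\in W$, write $W=U\cup\{z\}$ with $U\subseteq V$, so that $G^{+}-W=G-U$ and the required inequality becomes $c_o(G-U)\le |U|+1$. For $U=\emptyset$ this reads $c_o(G)=1\le 1$, which holds by connectedness; for $U\neq\emptyset$ I would reuse the counting from the proof of Theorem~\ref{thm:even}: by Lemma~\ref{lemma:edge_bound} each odd component $O_i$ of $G-U$ satisfies $e(O_i,U)\ge k$, and since all edges leaving a component of $G-U$ must land in $U$, summing gives $k\,c_o(G-U)\le \sum_i e(O_i,U)\le k|U|$, whence $c_o(G-U)\le |U|\le |U|+1$. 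Tutte's condition thus holds in all cases, $G^{+}$ has a perfect matching, and the claim follows.

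The main obstacle is not the counting—which is already in hand—but the bookkeeping at the boundary cases $W=\emptyset$ and $U=\emptyset$, where the generic edge bound degenerates and one must instead invoke connectedness to control $c_o$. I would therefore take care to justify that $\lambda(G)\ge k-1$ forces connectivity (so that $G$ contributes exactly one odd component) and that the components of $G^{+}-W$ genuinely coincide with those of $G-U$ when $z\in W$. As an alternative one could bypass $G^{+}$ entirely and appeal directly to the Tutte--Berge deficiency formula (the standard extension of Tutte's theorem, see \cite{diestel}): the very same counting shows that the deficiency $\max_{U}\bigl(c_o(G-U)-|U|\bigr)$ equals $1$, giving maximum matching size $(|V|-1)/2$.
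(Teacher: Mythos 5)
Your proof is correct, and it reaches the conclusion by a genuinely different packaging than the paper's. The paper invokes the Tutte--Berge deficiency formula directly: it sets $\mathrm{def}(G)=\max_U\bigl(c(G-U)-|U|\bigr)$, assumes $\mathrm{def}(G)\ge 2$, and derives the contradiction $k(|U|+2)\le k|U|$ from Lemma~\ref{lemma:edge_bound}, concluding $\mathrm{def}(G)\le 1$ and hence a matching of size $\tfrac{1}{2}(|V|-1)$. You instead adjoin a universal vertex $z$ and verify Tutte's condition for $G^{+}$, which reduces everything to the bare form of Tutte's theorem already quoted in the paper (no need to additionally cite the Tutte--Berge formula), at the cost of a short case analysis on whether $z\in W$. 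The core counting step --- each odd component sends at least $k$ edges into $U$, so $k\,c_o(G-U)\le k|U|$ --- is identical in both arguments, and your closing remark that the same counting bounds the deficiency by $1$ is essentially the paper's proof verbatim. One point in your favour: you explicitly handle the degenerate case $U=\emptyset$, where Lemma~\ref{lemma:edge_bound} does not apply (there is no edge cut separating a whole component from an empty set) and one must instead use connectivity, which follows from $\lambda(G)\ge k-1\ge 1$; the paper's proof silently applies the lemma to an arbitrary $U$ witnessing $\mathrm{def}(G)\ge 2$ without excluding $U=\emptyset$. Your restriction to $k\ge 2$ is harmless here, since a $1$-regular graph has an even number of vertices, and the application in Proposition~\ref{order4} has $k=2^{s^2}\ge 2$.
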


\begin{proof}
Let $\text{def}(G) = \max_{U \subseteq V} (c(G - U) - |U|)$. By Tutte's Theorem, the maximum matching size is $\frac{1}{2}(|V| - \text{def}(G))$, which is known as the Tutte-Berge formula \cite{tutteberge}. Assume $\text{def}(G) \geq 2$. Then, there exists $U$ with $c(G - U) \geq |U| + 2$. By Lemma \ref{lemma:edge_bound}, each odd component contributes $\geq k$ edges to $U$. Thus:
\[
k(|U| + 2) \leq \sum e(O_i, U) \leq k|U| \implies 2k \leq 0,
\]
a contradiction. Hence, $\text{def}(G) \leq 1$, and the maximum matching size is $\left\lfloor \frac{|V|}{2} \right\rfloor$.
\end{proof}

\begin{pro}
\label{order4}
The graph $\Gamma$ formed using subgroups of $G= (\Z_2)^4$ of order 4 as vertices and edges formed between vertices corresponding to subgroups with trivial intersections, has a maximum matching of size 17.
\end{pro}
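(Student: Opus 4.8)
The plan is to obtain this as an immediate consequence of the odd-vertex matching criterion established in Theorem~\ref{thm:odd}, after first assembling the structural facts about $\Gamma$ that have already been proved. First I would recall from Corollary~\ref{edgeconn} that $\Gamma$ is a connected, vertex-transitive, non-bipartite graph on $35$ vertices which is $16$-regular and has edge connectivity $\lambda(\Gamma) = 16$. In particular the number of vertices is odd, so the relevant regime is the odd-vertex case rather than the even-vertex case of Theorem~\ref{thm:even}.

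The second step is to verify that the hypotheses of Theorem~\ref{thm:odd} are met. With $k = 16$ we need $\lambda(\Gamma) \geq k - 1 = 15$; since $\lambda(\Gamma) = 16$ by Corollary~\ref{edgeconn}, this holds (indeed with room to spare). The graph is $k$-regular with $k = 16$ and has an odd number of vertices, so all three conditions---$k$-regularity, odd order, and the edge-connectivity bound---are satisfied. Applying Theorem~\ref{thm:odd} then yields that the maximum matching of $\Gamma$ has size $\lfloor |V|/2 \rfloor = \lfloor 35/2 \rfloor = 17$, which is exactly the claim.

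I expect no genuine obstacle inside this argument itself, since the two substantive ingredients---the exact value of the edge connectivity and the Tutte--Berge estimate for regular graphs with $\lambda \geq k-1$---have already been carried out in Corollary~\ref{edgeconn} and Theorem~\ref{thm:odd} respectively. The only point that warrants a moment's care is confirming that the degree computed in Corollary~\ref{cor:2} (each order-$4$ subgroup has exactly $16$ complementary order-$4$ subgroups) is precisely the regularity $k$ appearing in the matching theorem, so that the bound $\lambda(\Gamma) \geq k-1$ is applied with the correct $k$. Once that bookkeeping is confirmed, the conclusion is immediate, and the deficiency bound $\mathrm{def}(\Gamma) \leq 1$ underlying Theorem~\ref{thm:odd} guarantees that a single vertex is left unmatched, consistent with $35$ being odd.
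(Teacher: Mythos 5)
Your proposal is correct and follows exactly the paper's own argument: the paper's proof of this proposition likewise cites Corollary~\ref{edgeconn} for the structural facts about $\Gamma$ and then applies Theorem~\ref{thm:odd} to conclude the maximum matching has size $\lfloor 35/2\rfloor = 17$. Your version simply spells out the hypothesis check ($\lambda(\Gamma)=16 \geq k-1=15$) in more detail than the paper does.
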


\begin{proof}
From Corollary \ref{edgeconn} and Theorem \ref{thm:odd}, there exists a maximum matching in $\Gamma$ containing  $\left\lfloor \frac{35}{2} \right\rfloor = 17 $.
\end{proof}

In the group $(\Z_2)^4$, there are 17 distinct non-overlapping pairs of subgroups, each of order 4. This property implies that when designing a batch code with non-trivial subgroups, the subgroups of order 4 can also accommodate 17 queries. By integrating all such non-trivial subgroups, a batch code with parameters $(65,4,32)$ is achieved, where $n=65, k=4$ and $t=32$.

\begin{pro}
Consider $G= (\Z_2)^4$ and all of its non-trivial subgroups. There exists a $(65,4,32)$ quasi-uniform batch code.
\end{pro}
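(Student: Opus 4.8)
The plan is to assemble the code directly from the three families of non-trivial subgroups of $G=(\Z_2)^4$ and then exhibit $32$ pairwise disjoint recovery sets. First I would fix the parameters. Since $G=(\Z_2)^4$ is a $4$-dimensional vector space over $\Z_2$, the information vectors are binary strings of length $k=4$, and the length $n$ equals the number of codeword symbols, which is the number of subgroups used. By Propositions~\ref{pro:2}, \ref{pro:4} and~\ref{pro:8} there are $15$ subgroups of order $2$, $35$ of order $4$, and $15$ of order $8$, so taking all of them gives $n=15+35+15=65$. Because the intersection of all these subgroups is trivial, the encoding map $\psi$ is injective by the argument following Proposition~\ref{lem:two}, and Lemma~\ref{lem:sizeC} gives $|\CC|=|G|=16$.

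Next I would recall the recovery mechanism established for Example~\ref{exm:3}: whenever two subgroups $G_i,G_j$ meet trivially, i.e. $G_i\cap G_j=\{0\}$, the intersection of the two cosets read off from those two codeword symbols is a single element, namely the full information vector $g\in G$. Hence a \emph{recovery set} for the whole message is any pair of codeword symbols whose subgroups intersect trivially, and two such recovery sets are disjoint exactly when they share no codeword symbol. In particular, reconstructing any requested $x_{i_\ell}$ amounts to recovering the full vector $g$ from a single trivially-intersecting pair.

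The core step is then to produce $32$ such pairs that are mutually symbol-disjoint, which I would split into two families. Proposition~\ref{order28} supplies, via Hall's theorem, a perfect matching between the $15$ order-$2$ subgroups and the $15$ order-$8$ subgroups consisting of trivially-intersecting pairs; this yields $15$ disjoint recovery sets that consume all the order-$2$ and all the order-$8$ symbols. Independently, Proposition~\ref{order4} gives a maximum matching of size $17$ in the graph on the $35$ order-$4$ subgroups (adjacency being trivial intersection), yielding $17$ further disjoint pairs that use $34$ of the $35$ order-$4$ symbols. Since the three subgroup orders are distinct, no codeword symbol can belong to both families, so the $15+17=32$ pairs are automatically pairwise disjoint.

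Finally, given any query $(x_{i_1},\dots,x_{i_{32}})$, I would assign the $32$ disjoint recovery sets to the $32$ requested indices; each set independently reconstructs the entire information vector and hence the required symbol, so the batch property holds with $t=32$ and $\CC$ is a $(65,4,32)$ quasi-uniform batch code. I expect the only genuine subtlety—rather than the two matchings themselves, which are already proved—to be the disjointness bookkeeping across families: one must confirm that combining the order-$2$/order-$8$ matching with the order-$4$ matching leaves all $32$ recovery sets mutually disjoint, which follows immediately from the fact that a subgroup's order determines which family its symbol belongs to.
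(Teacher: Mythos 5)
Your proposal is correct and follows essentially the same route as the paper's own proof: take all $15+35+15=65$ non-trivial subgroups to get $n=65$, $k=4$, and then combine the $15$ order-$2$/order-$8$ pairs from Proposition~\ref{order28} with the $17$ order-$4$ pairs from Proposition~\ref{order4} to obtain $t=32$ disjoint recovery sets. The only difference is that you spell out explicitly the recovery mechanism via trivially intersecting cosets and the cross-family disjointness bookkeeping, both of which the paper leaves implicit.
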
 

\begin{proof}
The group $G = (\mathbb{Z}_2)^4$ is known to contain 15 subgroups of order 2, 35 subgroups of order 4, and 15 subgroups of order 8. A quasi-uniform code $C$ is derived from these subgroups, resulting in parameters $n = 65$ and $k = 4$, where each information symbol corresponds to a vector in $G$. The total code size is $16$. 

By Propositions \ref{order28} and \ref{order4}, there are 15 unique disjoint subgroup pairs of orders 2 and 8 with trivial intersections, as well as 17 distinct disjoint subgroup pairs of order 4 that similarly intersect trivially. For any single request, these pairs enable recovery through one of the available options, collectively providing $32$ recovery set pairs in $C$. Consequently, $C$ functions as a $(n, k, t) = (65, 4, 32)$ batch code.
\end{proof}

\section{Batch codes using $(\Z_2)^k$}
This section extends the subgroup-based framework from $(\mathbb{Z}_2)^4$ to a general group $G = (\mathbb{Z}_2)^k$, where $k \in \mathbb{N}$. The analysis involves two key steps: First, determining the count of nontrivial $m$-dimensional subgroups for integers $1 \leq m \leq k-1$. Second, identifying complementary subspaces for each subgroup and enumerating distinct pairs of such subspaces (subgroups), which play a critical role in building quasi-uniform batch codes.  

To address parity-specific properties, the computations are structured differently for odd and even values of $k$.

\begin{rem}\label{def:gaussian}
The number of \(m\)-dimensional subspaces of \((\mathbb{Z}_2)^k\) is given by the Gassian-binomial coefficient:
\[
\binom{k}{m}_2 = \frac{(2^k - 1)(2^{k-1} - 1) \cdots (2^{k - m + 1} - 1)}{(2^m - 1)(2^{m - 1} - 1) \cdots (2^1 - 1)}.
\]
\end{rem}


\begin{rem}
\label{sgpsodd}
The total number of nontrivial subgroups of  \(G=(\mathbb{Z}_2)^k\) is $\sum_{m=1}^{k-1} \binom{k}{m}_2$ and the number of subspaces of $G$ of dimension $k-m$ is also same as the number of $m$-dimensional subspaces for $m=1,\cdots, \frac{k-1}{2}$ because of the symmetry of the Guassian-binomial coefficient. 
\end{rem}

\noindent \textit{Reminder}: Subspaces \(H\) and \(S\) of \((\mathbb{Z}_2)^k\) are \textbf{complementary} if \(H \cap S = \{\bf{0}\}\) and \(H + S = (\mathbb{Z}_2)^k\).


We need the following Lemmas to prove a major result of this section.

\begin{lem}[Counting complements]\label{lem:complement}
Every \(m\)-dimensional subspace \(H \subseteq (\mathbb{Z}_2)^k\) has exactly \(2^{m(k- m)}\) complementary \((k- m)\)-dimensional subspaces in 
$(\Z_2)^k$.
\end{lem}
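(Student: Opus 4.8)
The plan is to recognize that this lemma is precisely the specialization of Theorem~\ref{thm:complement} to the ground field $\mathbb{F}_2$, so that essentially no new work is required. First I would set $V = (\mathbb{Z}_2)^k$, so that the ambient dimension "$n$" of that theorem is our $k$, take $q = 2$, and let $W = H$ be the given subspace (playing the role of the distinguished subspace, here of dimension $m$ rather than the "$k$" used in the statement of that theorem). Theorem~\ref{thm:complement} then asserts that the number of subspaces $U$ with $V = H \oplus U$ equals $q^{m(n-m)} = 2^{m(k-m)}$, which is exactly the claimed count.

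The one bookkeeping point I would make explicit is that every complement automatically has the correct dimension, so that the qualifier ``$(k-m)$-dimensional'' in the statement is a consequence rather than an extra hypothesis. Indeed, if $H \cap U = \{\mathbf{0}\}$ and $H + U = (\mathbb{Z}_2)^k$, then the dimension formula $\dim(H) + \dim(U) - \dim(H \cap U) = \dim(H + U)$ forces $m + \dim(U) = k$, hence $\dim(U) = k - m$. Thus ``complementary subspace of $H$'' and ``$(k-m)$-dimensional complement of $H$'' describe the same set, and Theorem~\ref{thm:complement} counts exactly this set.

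I do not expect a genuine obstacle here, since the heavy lifting was already done in the proof of Theorem~\ref{thm:complement}. If instead a self-contained argument were preferred, I would reprove the count directly by the block-matrix bijection: extend a basis $\{w_1,\dots,w_m\}$ of $H$ to a basis $\{w_1,\dots,w_m,v_{m+1},\dots,v_k\}$ of $(\mathbb{Z}_2)^k$, and associate to each $m \times (k-m)$ matrix $A$ over $\mathbb{F}_2$ the subspace $U_A$ given by the column space of $\left(\begin{smallmatrix} A \\ I_{k-m}\end{smallmatrix}\right)$. Injectivity follows by inspecting the identity block (which forces any change-of-basis matrix between two representations to be the identity), surjectivity follows by expressing each basis vector of an arbitrary complement uniquely as an $H$-component plus a component in $\operatorname{span}\{v_{m+1},\dots,v_k\}$, and the total count is then the number $2^{m(k-m)}$ of such matrices. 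Either route yields the stated value immediately.
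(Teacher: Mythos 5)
Your proposal is correct and matches the paper exactly: the paper's proof of this lemma is the single line ``This follows directly from Theorem~\ref{thm:complement},'' which is precisely your specialization to $q=2$, $n=k$, $W=H$ of dimension $m$. Your added remark that every complement automatically has dimension $k-m$, and your fallback block-matrix bijection, simply restate the content already in the statement and proof of Theorem~\ref{thm:complement}.
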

This follows directly from Theorem \ref{thm:complement}.


\begin{lem}[Uniform degree property]\label{lem:degree}
In the bipartite graph \(G(X, Y)\) where \(X\) and \(Y\) are sets of $m$ and \((k - m)\)-dimensional subspaces, every vertex has degree \(2^{m(k - m)}\).
\end{lem}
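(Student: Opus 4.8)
The plan is to show that the edge relation in the bipartite graph $G(X,Y)$ is exactly the complementary-subspace relation, and then to read off both one-sided degrees directly from the complement count already established. First I would fix the convention that an edge joins $H \in X$ to $S \in Y$ precisely when $H \cap S = \{\mathbf{0}\}$, matching the trivial-intersection recovery-set condition used throughout Section~\ref{sec:construction}. The crucial observation is that trivial intersection together with complementary dimensions forces a full direct-sum decomposition: for $H$ of dimension $m$ and $S$ of dimension $k-m$ with $H \cap S = \{\mathbf{0}\}$, the dimension formula gives $\dim(H+S) = m + (k-m) - 0 = k$, so $H + S = (\mathbb{Z}_2)^k$ and hence $(\mathbb{Z}_2)^k = H \oplus S$. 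Thus an edge is present if and only if $S$ is a complement of $H$ (equivalently, $H$ is a complement of $S$).

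With this identification, the two degree computations become immediate applications of Lemma~\ref{lem:complement}. For a fixed vertex $H \in X$ of dimension $m$, its neighbours are exactly the $(k-m)$-dimensional complements of $H$, of which there are $2^{m(k-m)}$; hence $\deg(H) = 2^{m(k-m)}$. For a fixed vertex $S \in Y$ of dimension $k-m$, its neighbours are the $m$-dimensional complements of $S$; applying Lemma~\ref{lem:complement} with the roles of the dimensions exchanged gives $2^{(k-m)(k-(k-m))} = 2^{(k-m)m} = 2^{m(k-m)}$ such complements, so $\deg(S) = 2^{m(k-m)}$ as well. Since both sides of the bipartition have constant degree $2^{m(k-m)}$, every vertex of $G(X,Y)$ has that degree, which is the claim.

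I do not expect a serious obstacle here, since the entire content is packaged inside Lemma~\ref{lem:complement} (itself a specialization of Theorem~\ref{thm:complement}). The only point requiring care is the \emph{symmetry} of the complement relation: one must verify that ``$S$ is a complement of $H$'' and ``$H$ is a complement of $S$'' describe the same edge, so that the two degree counts legitimately invoke the same lemma with the parameters $m$ and $k-m$ interchanged. Observing that the exponent $m(k-m)$ is invariant under the substitution $m \mapsto k-m$ then yields the common value and closes the argument cleanly.
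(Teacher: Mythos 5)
Your proof is correct and takes essentially the same route as the paper, which simply cites Lemma~\ref{lem:complement} for each side of the bipartition ("and vice versa"). Your added observation that trivial intersection plus complementary dimensions forces $H \oplus S = (\mathbb{Z}_2)^k$ — so that the edge relation really is the complement relation counted by that lemma — is a detail the paper leaves implicit, and the symmetry of the exponent $m(k-m)$ handles the second side exactly as intended.
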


\begin{proof}
By Lemma \ref{lem:complement}, each \(H \in X\) connects to \(2^{m(k - m)}\) subspaces \(S \in Y\), and vice versa.
\end{proof}
\begin{thm}\label{thm:main}
Let $k \in \mathbb{N}$ and \(1 \leq m \leq k-1\). The number of distinct pairs \((H, S)\) of \(m\)-dimensional and \((k - m)\)-dimensional subspaces of \((\mathbb{Z}_2)^k\) with \(H \cap S = \{\bf{0}\}\) is \(\binom{k}{m}_2\).
\end{thm}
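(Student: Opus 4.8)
The plan is to realize the desired pairs as a perfect matching in a balanced regular bipartite graph and invoke Hall's Marriage Theorem, exactly in the spirit of Propositions~\ref{order28} and \ref{order4}. The first thing I would record is that a pair $(H,S)$ with $\dim H = m$, $\dim S = k-m$, and $H \cap S = \{\mathbf{0}\}$ is automatically a pair of \emph{complementary} subspaces: the dimension formula gives $\dim(H+S) = m + (k-m) - 0 = k$, so $H \oplus S = (\mathbb{Z}_2)^k$. Hence enumerating the trivially intersecting pairs is the same as enumerating complementary pairs, and these are precisely the edges of the bipartite graph $G(X,Y)$ introduced in Lemma~\ref{lem:degree}, with $X$ the set of $m$-dimensional subspaces and $Y$ the set of $(k-m)$-dimensional subspaces.

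Next I would pin down the two structural facts that drive the matching argument. By Lemma~\ref{lem:degree} the graph $G(X,Y)$ is regular of degree $d = 2^{m(k-m)}$, and by the symmetry of the Gaussian binomial coefficient (Remark~\ref{sgpsodd}) the two sides have equal cardinality $|X| = \binom{k}{m}_2 = \binom{k}{k-m}_2 = |Y|$. Thus $G(X,Y)$ is a balanced $d$-regular bipartite graph. The core step is then the existence of a perfect matching, which I would obtain by verifying Hall's condition through the same double-counting already used in the paper: for any $A \subseteq X$ the $d\lvert A\rvert$ edges leaving $A$ all terminate in $N(A)$, and since every vertex of $Y$ has degree $d$, we get $\lvert N(A)\rvert \geq d\lvert A\rvert / d = \lvert A\rvert$. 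Hall's Theorem then yields a perfect matching, and because each side has $\binom{k}{m}_2$ vertices, this matching consists of exactly $\binom{k}{m}_2$ pairwise-disjoint complementary pairs $(H,S)$, as claimed.

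The main obstacle here is conceptual rather than computational: one must read ``number of distinct pairs'' as the size of a perfect matching, i.e.\ a maximal collection of \emph{vertex-disjoint} pairs, and not as the raw count of trivially intersecting pairs, which by Lemma~\ref{lem:complement} is the larger quantity $\binom{k}{m}_2 \cdot 2^{m(k-m)}$. Once the balanced, regular, bipartite structure is recognized, Hall's condition is immediate and no delicate estimate is needed. I would flag one caveat for completeness: the genuinely bipartite reading requires $m \neq k-m$, so that $X$ and $Y$ are distinct vertex sets. In the diagonal case $k = 2m$ the graph collapses to a single part on $\binom{k}{m}_2$ vertices (a subspace is never its own complement, since $H \cap H = H \neq \{\mathbf{0}\}$), and the relevant count becomes a single-part maximum matching of size $\lfloor \binom{k}{m}_2 / 2\rfloor$, governed instead by the vertex-transitivity and edge-connectivity machinery of Proposition~\ref{order4} rather than by the balanced bipartite argument above.
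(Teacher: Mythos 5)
Your proposal follows essentially the same route as the paper: the same bipartite graph on $m$- and $(k-m)$-dimensional subspaces, the same regularity/double-counting verification of Hall's condition, and the same conclusion via a perfect matching of size $\binom{k}{m}_2$. Your closing caveat about the diagonal case $k=2m$ (where the two sides coincide and the bipartite argument does not apply) is a point the paper only acknowledges in the discussion \emph{after} the theorem, so flagging it inside the proof is a small but genuine improvement in rigor, since the theorem as stated allows $m=k/2$.
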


\begin{proof}
Construct a bipartite graph \(G(X, Y)\) with  
\(X = \{m\text{-dimensional subspaces}\}\), \(Y = \{(k - m)\text{-dimensional subspaces}\}\) and edges exist between \(H \in X\) and \(S \in Y\) if \(H \cap S = \{\bf{0}\}\).  

For any subset \(A \subseteq X\),  it follows that $N(A)| \geq |A|$ similarly as in the proof of Proposition \ref{order28}. That is, Hall’s condition holds for all subsets \(A \subseteq X\).  

Then by the Hall's marriage theorem, \(G(X, Y)\) admits a perfect matching.
Therfore, the number of distinct pairs \((H, S)\) of \(m\)-dimensional and \((k - m)\)-dimensional subspaces of \((\mathbb{Z}_2)^k\) with \(H \cap S = \{\bf{0}\}\) is \(\binom{k}{m}_2\).
\end{proof}
This result establishes a bijection between the set of \(m\)-dimensional and \((k - m)\)-dimensional subspaces of \((\mathbb{Z}_2)^k\) having trivial intersection. 

For odd $k$, regular bipartite graphs corresponding to subspaces of dimension $m$ can be constructed for all $m$ within the defined parameters. However, when $k$ is even, no such bipartite graph exists for subspaces of dimension $\frac{k}{2}$. This fundamental disparity necessitates distinct treatments for odd and even $k$.  

\begin{pro}[$k$ is an odd positive integer]
Let $G= (\mathbb{Z}_2)^k$, $k$ is an odd integer. Then there exists a quasi-uniform batch code of length $n= \sum_{m=1}^{\frac{k-1}{2}}2. \binom{k}{m}_2$, dimension  $k$, and $t= \sum_{m=1}^{\frac{k-1}{2}}\binom{k}{m}_2.$
\end{pro}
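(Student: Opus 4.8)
The plan is to build the code from \emph{all} non-trivial subgroups of $G=(\mathbb{Z}_2)^k$ via the quasi-uniform construction of Section~\ref{sec:group}, and then to exploit the symmetry of the Gaussian binomial coefficients together with Theorem~\ref{thm:main} to extract the required number of pairwise-disjoint recovery sets. First I would take the codeword symbols to be the cosets of every non-trivial subgroup $G_i\le G$. Since the subgroups of $(\mathbb{Z}_2)^k$ are exactly its $m$-dimensional subspaces for $1\le m\le k-1$, with $\binom{k}{m}_2$ of them in each dimension (Remark~\ref{def:gaussian}), the length of the code is $n=\sum_{m=1}^{k-1}\binom{k}{m}_2$. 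The dimension is $k$ because the total intersection $G_\Nc$ is trivial, so the encoding $\psi$ of Section~\ref{sec:nilpotent} is injective on the $2^k$ elements of $G$.

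The parity of $k$ enters next. Because $k$ is odd there is no middle dimension $k/2$, so the dimensions split into the disjoint pairs $(m,k-m)$ for $m=1,\dots,\tfrac{k-1}{2}$. The symmetry $\binom{k}{m}_2=\binom{k}{k-m}_2$ then gives
\[
n=\sum_{m=1}^{k-1}\binom{k}{m}_2=\sum_{m=1}^{(k-1)/2}2\binom{k}{m}_2,
\]
matching the claimed length.

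I would then produce the recovery sets one dimension-pair at a time. For each $m$ with $1\le m\le\tfrac{k-1}{2}$, Theorem~\ref{thm:main} supplies a perfect matching between the $m$-dimensional and $(k-m)$-dimensional subspaces, i.e.\ $\binom{k}{m}_2$ pairs $(H,S)$ with $H\cap S=\{\mathbf 0\}$ in which each subspace of either dimension is used exactly once. For any information vector $g\in G$, the two codeword symbols attached to such a pair are the cosets $gH$ and $gS$, and $gH\cap gS=\{g\}$ precisely because $H\cap S=\{\mathbf 0\}$ (if $gh=gs$ with $h\in H,\,s\in S$ then $h=s\in H\cap S=\{\mathbf 0\}$, so $h=s=\mathbf 0$). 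Hence each matched pair is a size-$2$ recovery set that reconstructs the \emph{entire} information vector, and therefore any requested symbol $x_{i_\ell}$. Summing over $m$ yields
\[
t=\sum_{m=1}^{(k-1)/2}\binom{k}{m}_2
\]
such recovery sets.

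The step needing the most care is verifying that these $t$ recovery sets are \emph{pairwise disjoint} as collections of codeword symbols, which is exactly the requirement in the batch-code definition. For a fixed $m$ the matching uses each subgroup at most once, so its $\binom{k}{m}_2$ pairs are already disjoint; across distinct values of $m$ the pairs involve subgroups of different dimensions and so are automatically disjoint. Consequently the $2t=n$ non-trivial subgroups are partitioned into exactly $t$ disjoint trivially-intersecting pairs, each decoding the whole vector on its own. This is what allows $t$ requests $(x_{i_1},\dots,x_{i_t})$ to be served simultaneously, establishing the $(n,k,t)$ quasi-uniform batch code with the stated parameters. The only genuine obstacle is the bookkeeping that ensures no subgroup is reused; everything else reduces to Theorem~\ref{thm:main} and the Gaussian-coefficient symmetry.
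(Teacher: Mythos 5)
Your proposal is correct and follows essentially the same route as the paper: the paper's proof is a one-line citation of Remark~\ref{sgpsodd} (Gaussian-coefficient symmetry giving the length), Theorem~\ref{thm:main} (the perfect matching between $m$- and $(k-m)$-dimensional subspaces giving the $t$ disjoint trivially-intersecting pairs), and the construction method of Section~\ref{sec:construction}, which are exactly the ingredients you assemble. Your write-up simply makes explicit the details the paper leaves implicit, in particular the verification that $gH\cap gS=\{g\}$ and that the recovery pairs are pairwise disjoint across dimensions.
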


\begin{proof}
The result follows directly using Remark \ref{sgpsodd}, Theorem \ref{thm:main} and by the method of construction of quasi-uniform batch codes from Section \ref{sec:construction}.
\end{proof}

\subsection{Batch codes using  $(\Z_2)^k$, when $k$ is even}

\begin{pro}
Let $k$ be an even positive  integer. Then $k=2s$ where $s\in \mathbb{N}$. The number of subgroups of order $2^s$ in the group $(\mathbb{Z}_2)^k$ is given by the Gaussian binomial coefficient
\[
\binom{2s}{s}_2 = \prod_{i=0}^{s-1} \frac{2^{2s-i} - 1}{2^{s-i} - 1}.
\]
\end{pro}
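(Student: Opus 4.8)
The plan is to reduce the statement to the general subspace-counting result already recorded in Remark \ref{def:gaussian}, and then to verify that the specialized formula in the proposition is merely a re-indexing of that general expression.

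First I would observe, exactly as in the proofs of Propositions \ref{pro:2}, \ref{pro:4} and \ref{pro:8}, that $(\mathbb{Z}_2)^k$ is a $k$-dimensional vector space over $\mathbb{F}_2$ and that its subgroups are precisely its $\mathbb{F}_2$-subspaces. Since a subgroup $H$ of order $2^s$ satisfies $|H| = 2^{\dim H}$, the subgroups of order $2^s$ are exactly the $s$-dimensional subspaces. Hence, writing $k = 2s$, their number equals $\binom{k}{s}_2 = \binom{2s}{s}_2$, the case $m = s$ of the Gaussian binomial coefficient.

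Second, I would substitute $k=2s$ and $m=s$ into the general formula of Remark \ref{def:gaussian} to obtain
\[
\binom{2s}{s}_2 = \frac{(2^{2s}-1)(2^{2s-1}-1)\cdots(2^{s+1}-1)}{(2^s-1)(2^{s-1}-1)\cdots(2^1-1)},
\]
and then reindex both numerator and denominator with $i = 0, 1, \ldots, s-1$. Under this reindexing the numerator factor $2^{2s-i}-1$ runs over $2^{2s}-1, \ldots, 2^{s+1}-1$ and the denominator factor $2^{s-i}-1$ runs over $2^s-1, \ldots, 2^1-1$, each yielding precisely $s$ factors, so the right-hand side collapses to the product $\prod_{i=0}^{s-1}\frac{2^{2s-i}-1}{2^{s-i}-1}$ asserted in the statement.

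The argument is essentially bookkeeping and there is no genuine obstacle. The only point demanding care is confirming that the two displayed forms carry the same number of factors over the same ranges, i.e. that both numerator and denominator contain exactly $s$ terms and that the substitution $m=s$, $k=2s$ aligns the index ranges correctly; once this is checked, the equality of the two expressions is immediate and the proof is complete.
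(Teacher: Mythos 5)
Your proposal is correct and follows essentially the same route as the paper: identify subgroups of order $2^s$ with $s$-dimensional $\mathbb{F}_2$-subspaces of the $2s$-dimensional space, invoke the Gaussian binomial coefficient count, and rewrite it as the stated product. The reindexing check (both numerator and denominator carrying exactly $s$ factors) is the same bookkeeping the paper performs implicitly when it says the quotient ``simplifies to the product formula.''
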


\begin{proof}
The group $(\mathbb{Z}_2)^k$ carries the natural structure of a $k$-dimensional vector space over the finite field $\mathbb{Z}_2$. In this context, subgroups of $(\mathbb{Z}_2)^k$ correspond precisely to vector subspaces, and the order of a subgroup is directly determined by the dimension of its corresponding subspace. Specifically, a subgroup of order $2^s$ corresponds to an $s$-dimensional subspace.

The number of $s$-dimensional subspaces of a $2s$-dimensional vector space over $\mathbb{Z}_2$ is enumerated by the Gaussian binomial coefficient:
\[
\binom{2s}{s}_2 = \frac{(2^{2s} - 1)(2^{2s-1} - 1)\cdots(2^{s+1} - 1)}{(2^s - 1)(2^{s-1} - 1)\cdots(2^1 - 1)},
\]
which simplifies to the product formula
\[
\prod_{i=0}^{s-1} \frac{2^{2s-i} - 1}{2^{s-i} - 1}.
\]
\end{proof}

\begin{thm}
\label{thm:reg}
Let $H$ be a fixed $s$-dimensional subgroup of $(\mathbb{Z}_2)^{k=2s}$. The number of $s$-dimensional subgroups $K$ of $(\mathbb{Z}_2)^{2s}$ satisfying $H \cap K = \{\bf{0}\}$ is $2^{s^2}$. 
\end{thm}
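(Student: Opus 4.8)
The plan is to recognize this statement as a direct specialization of the complement-counting result, Theorem \ref{thm:complement}, once one observes that in the balanced-dimension setting a trivial intersection already forces a direct-sum decomposition. So the bulk of the work is conceptual bookkeeping rather than a new argument.

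First I would record the dimension count. Since $H$ and $K$ are both $s$-dimensional subspaces of the $2s$-dimensional space $V = (\mathbb{Z}_2)^{2s}$, the Grassmann dimension formula gives $\dim(H + K) = \dim H + \dim K - \dim(H \cap K) = 2s - \dim(H \cap K)$. Hence the hypothesis $H \cap K = \{\mathbf{0}\}$, i.e. $\dim(H \cap K) = 0$, is equivalent to $\dim(H + K) = 2s$, which in turn is equivalent to $H + K = V$. In other words, an $s$-dimensional $K$ meets $H$ trivially precisely when $V = H \oplus K$, so the subspaces $K$ counted in the statement are exactly the complementary subspaces of $H$.

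Next I would invoke Theorem \ref{thm:complement} with the field taken to be $\mathbb{Z}_2$ (so $q = 2$), ambient dimension $n = 2s$, and the fixed subspace $H$ of dimension $s$ playing the role of $W$. That theorem counts the complements of a $k$-dimensional subspace of an $n$-dimensional space as $q^{k(n-k)}$; substituting $q = 2$, $n = 2s$, and $k = s$ yields $2^{s(2s-s)} = 2^{s^2}$, which is exactly the claimed count.

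I do not expect any genuine obstacle: all the content is carried by Theorem \ref{thm:complement}, and the only thing requiring care is the equivalence between trivial intersection and complementarity. The one point worth emphasizing is that this equivalence is special to the balanced case $\dim H = \dim K = \tfrac{1}{2}\dim V$; for unbalanced dimensions, trivial intersection would be strictly weaker than being a complement, and the formula $q^{k(n-k)}$ would no longer count the right objects. Because the two dimensions coincide here, the notions agree and the result follows immediately.
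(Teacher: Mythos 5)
Your proposal is correct and takes the same route as the paper, which likewise deduces the count by direct appeal to Theorem \ref{thm:complement} with $q=2$, $n=2s$, $k=s$. In fact you supply a detail the paper leaves implicit: the verification, via the dimension formula, that in the balanced case trivial intersection of two $s$-dimensional subspaces is equivalent to being a complement, which is exactly what is needed for Theorem \ref{thm:complement} to count the right objects.
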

\begin{proof}
The proof follows directly from Theorem \ref{thm:complement}.
\end{proof}



Let $\Gamma$ denotes the graph defined as follows: the vertices are all $s$-dimensional subgroups of the group $G=(\Z_2)^{2s}$, and two vertices are adjacent if the corresponding subgroups intersect trivially. According to Theorem \ref{thm:reg}, this graph is  $2^{s^2}$-regular
and has $\binom{2s}{s}_2$ vertices. 

We now demonstrate that $\Gamma$ possesses the properties of being vertex-transitive, connected, and non-bipartite.

\begin{thm}
Let \( G = (\mathbb{Z}_2)^{2s} \). The graph \(\Gamma\), defined on \(s\)-dimensional subgroups of \(G\) with adjacency given by trivial intersections, is non-bipartite.
\end{thm}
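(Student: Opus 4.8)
The plan is to prove non-bipartiteness by producing an odd cycle, since a graph is bipartite if and only if it contains no odd cycle. The cleanest target is a single triangle, so it suffices to exhibit three $s$-dimensional subgroups of $G = (\Z_2)^{2s}$ that are \emph{pairwise} trivially intersecting; such a triple constitutes three mutually adjacent vertices of $\Gamma$, hence a $3$-cycle.

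To construct the triangle I would exploit the natural block decomposition $G = U \oplus W$, where $U = (\Z_2)^s \oplus \{\mathbf{0}\}$ consists of the vectors supported on the first $s$ coordinates and $W = \{\mathbf{0}\} \oplus (\Z_2)^s$ consists of those supported on the last $s$ coordinates. As the third vertex I would take the diagonal subgroup
\[
D = \{ (v,v) : v \in (\Z_2)^s \},
\]
which is $s$-dimensional because the map $v \mapsto (v,v)$ is an injective homomorphism of $(\Z_2)^s$ into $G$. All three of $U$, $W$, $D$ are thus vertices of $\Gamma$.

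The verification is a short direct check of the three pairwise intersections. An element of $U \cap D$ is simultaneously of the form $(u,\mathbf{0})$ and $(v,v)$, which forces $v = \mathbf{0}$ and hence the element is $\mathbf{0}$; the argument for $W \cap D$ is symmetric; and $U \cap W = \{\mathbf{0}\}$ is immediate from the direct-sum decomposition. Therefore $U$, $W$, and $D$ are mutually adjacent in $\Gamma$, forming a triangle, so $\Gamma$ contains an odd cycle and is non-bipartite.

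There is essentially no substantive obstacle in this argument; the only points warranting a line of care are confirming that the diagonal $D$ genuinely has dimension $s$ (and is therefore a legitimate vertex distinct from $U$ and $W$), both of which follow at once from the injectivity of $v \mapsto (v,v)$. The construction is uniform in $s$, and in the base case $s = 1$ it specializes to the familiar triangle on $\langle (1,0) \rangle$, $\langle (0,1) \rangle$, $\langle (1,1) \rangle$ in $(\Z_2)^2$, which provides a useful sanity check.
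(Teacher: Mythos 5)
Your proof is correct and is essentially the same as the paper's: the paper fixes a decomposition $G = H \oplus K$ and forms the ``graph'' subspace $L = \{h + \phi(h)\}$ of a linear isomorphism $\phi: H \to K$, of which your diagonal $D$ is precisely the concrete instance with $H, K$ the two coordinate blocks and $\phi$ the obvious identification. Both arguments produce the same triangle and conclude non-bipartiteness from the odd cycle.
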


\begin{proof}
We construct an explicit triangle (3-cycle) in \(\Gamma\) using properties of complementary subspaces and linear maps.

\medskip\noindent
\textbf{Step 1: Constructing complementary subspaces}\\
Fix a decomposition \( G = H \oplus K \), where \(H\) and \(K\) are \(s\)-dimensional subspaces. Let \(\phi: H \to K\) be a linear isomorphism. 
Define:
\[
L = \{ h + \phi(h) \mid h \in H \}.
\]
This set \(L\) is closed under addition and scalar multiplication:
\begin{itemize}
    \item For \(h_1, h_2 \in H\), \( (h_1 + \phi(h_1)) + (h_2 + \phi(h_2)) = (h_1 + h_2) + \phi(h_1 + h_2) \in L \).
    \item For \(c \in \mathbb{Z}_2\), \( c(h + \phi(h)) = ch + \phi(ch) \in L \).
\end{itemize}
Thus, \(L\) is a subspace of \(G\). 

\medskip\noindent
\textbf{Step 2: Dimension of \(L\)}\\
The map \(\psi: H \to L\) defined by \(\psi(h) = h + \phi(h)\) is a linear isomorphism:
\begin{itemize}
    \item \textit{Injectivity:} If \(\psi(h_1) = \psi(h_2)\), then \(h_1 - h_2 = \phi(h_2) - \phi(h_1)\). Since \(H \cap K = \{0\}\), this implies \(h_1 = h_2\).
    \item \textit{Surjectivity:} Every element of \(L\) is of the form \(h + \phi(h)\) for some \(h \in H\).
\end{itemize}
Hence, \(\dim(L) = \dim(H) = n\).

\medskip\noindent
\textbf{Step 3: Trivial intersections}\\
We verify pairwise trivial intersections:
\begin{itemize}
    \item \(H \cap K = \{0\}\) by construction.
    \item \(H \cap L = \{0\}\): If \(h = h' + \phi(h')\) for \(h, h' \in H\), then \(\phi(h') = h - h' \in H \cap K = \{0\}\), so \(h = h' = 0\).
    \item \(K \cap L = \{0\}\): If \(k = h + \phi(h)\) for \(k \in K\), then \(h = k - \phi(h) \in H \cap K = \{0\}\), so \(k = 0\).
\end{itemize}

\medskip\noindent
\textbf{Step 4: Non-bipartiteness}\\
The subspaces \(H\), \(K\), and \(L\) form a triangle in \(\Gamma\):
\[
H \sim K \sim L \sim H.
\]
A graph containing a triangle (odd cycle) cannot be bipartite. Thus, \(\Gamma\) is non-bipartite.
\end{proof}

\begin{thm}
Let \( G = (\mathbb{Z}_2)^{2s} \), and let \(\Gamma\) be the graph whose vertices are \(s\)-dimensional subspaces of \(G\), with edges connecting two subspaces if their intersection is trivial. Then \(\Gamma\) is vertex transitive.
\end{thm}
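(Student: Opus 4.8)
The plan is to mirror the proof of Theorem~\ref{thm:vt} verbatim, simply replacing the specific parameters $n = 4$, $m = 2$ by the general dimensions $n = 2s$, $m = s$; the argument is entirely structural and does not depend on the particular values. The three ingredients needed are: the identification of $\operatorname{Aut}(G)$ with a general linear group, the transitivity of that group on $s$-dimensional subspaces, and the fact that the action preserves the adjacency relation of $\Gamma$.

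First I would invoke the direct analogue of Theorem~\ref{thm:aut}, namely $\operatorname{Aut}(G) \cong \operatorname{GL}(2s, 2)$. The same reasoning applies: any group automorphism of $(\mathbb{Z}_2)^{2s}$ is an additive bijection, and since the only scalars are $0$ and $1$, it automatically respects $\mathbb{Z}_2$-scalar multiplication and is therefore a linear isomorphism, represented by an invertible $2s \times 2s$ matrix over $\mathbb{Z}_2$. This group acts on the set of $s$-dimensional subspaces of $G$ via $T \cdot H = \{\, T(\mathbf{v}) \mid \mathbf{v} \in H \,\}$.

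Next I would establish that this action is transitive on $s$-dimensional subspaces. Given any two such subspaces $H_1$ and $H_2$, I pick bases $\{u_1, \dots, u_s\}$ of $H_1$ and $\{w_1, \dots, w_s\}$ of $H_2$, extend each to a full basis of $G$, and define $T \in \operatorname{GL}(2s, 2)$ by sending the first basis to the second; then $T(H_1) = H_2$. This is the general transitivity property of $\operatorname{GL}(n, q)$ on $k$-dimensional subspaces, exactly as cited in Theorem~\ref{thm:vt}. Finally, since $T$ is a bijective linear map, it commutes with intersection, so $T(H_1) \cap T(H_2) = T(H_1 \cap H_2)$; consequently $H_1 \cap H_2 = \{\mathbf{0}\}$ forces $T(H_1) \cap T(H_2) = T(\{\mathbf{0}\}) = \{\mathbf{0}\}$. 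Thus every $T$ induces a graph automorphism of $\Gamma$, and the transitive action on vertices yields vertex-transitivity.

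I expect essentially no obstacle here: the result is a routine generalization whose only substantive input is the transitivity of $\operatorname{GL}(2s,2)$ on $s$-dimensional subspaces, which follows from the basis-extension argument. The only point requiring a word of care is verifying that adjacency (trivial intersection) is genuinely preserved, but this is immediate from the linearity and injectivity of $T$, which guarantee $T(H_1 \cap H_2) = T(H_1) \cap T(H_2)$.
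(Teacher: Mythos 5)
Your proposal matches the paper's proof essentially verbatim: the paper likewise constructs an element of $\mathrm{GL}(2s,\mathbb{Z}_2)$ by extending bases of the two given $s$-dimensional subspaces to full bases of $G$ and mapping one to the other, then notes that injective linear maps preserve intersections, so trivial intersection (adjacency) is preserved. No gaps; this is the same argument.
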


\begin{proof}
To show vertex transitivity, we use the group \(\mathrm{GL}(2s, \mathbb{Z}_2)\) of invertible linear transformations on \(G\). We prove that this group acts transitively on the vertices of \(\Gamma\) while preserving adjacency.

\textit{Transitivity on subspaces:}
Let \(H\) and \(K\) be two $s$-dimensional subspaces of \(G\). We construct an invertible linear map \(g \in \mathrm{GL}(2s, \mathbb{Z}_2)\) such that \(g(H) = K\) as follows:

\begin{enumerate}
    \item \textbf{Basis extension for \(H\):} 
    Choose a basis \(\mathcal{B}_H = \{v_1, \ldots, v_s\}\) for \(H\). Extend this to a full basis of \(G\):
    \[
    \mathcal{B}_G = \{v_1, \ldots, v_s, w_1, \ldots, w_s\}.
    \]
    
    \item \textbf{Basis extension for \(K\):} 
    Similarly, choose a basis \(\mathcal{B}_K = \{u_1, \ldots, u_s\}\) for \(K\). Extend this to another full basis of \(G\):
    \[
    \mathcal{B}_G' = \{u_1, \ldots, u_s, z_1, \ldots, z_s\}.
    \]
    
    \item \textbf{Constructing \(g\):} 
    Define the linear map \(g: G \to G\) by:
    \[
    g(v_i) = u_i \quad \text{and} \quad g(w_j) = z_j \quad \text{for all } 1 \leq i, j \leq n.
    \]
    Since \(g\) maps the basis \(\mathcal{B}_G\) to the basis \(\mathcal{B}_G'\), it is invertible (\(g \in \mathrm{GL}(2s, \mathbb{Z}_2)\)). By construction, \(g(H) = K\).
\end{enumerate}

\textit{Structure preservation :}
The map \(g\) preserves the subspace structure:
\begin{itemize}
    \item \textbf{Dimension:} Since \(g\) maps a basis of \(H\) to a basis of \(K\), \(\dim(g(H)) = \dim(H) = s\).
    
    \item \textbf{Intersections:} For any subspaces \(H, K \subset G\),
    \[
    g(H \cap K) = g(H) \cap g(K).
    \]
    If \(H \cap K = \{\bf{0}\}\), then \(g(H) \cap g(K) = g(\{\bf{0}\}) = \{\bf{0}\}\). Thus, adjacency in \(\Gamma\) is preserved.
\end{itemize}

Therefore, for any two \(s\)-dimensional subspaces \(H\) and \(K\), the transformation \(g \in \mathrm{GL}(2s, \mathbb{Z}_2)\) maps \(H\) to \(K\) while preserving edges. Therefore, \(\Gamma\) is vertex transitive. \end{proof}


Determining the connectivity of \(\Gamma\) presents significant challenges. Suppose \(\Gamma\) is disconnected. In that case, every connected component must be vertex-transitive and mutually isomorphic, as the vertex transitivity of \(\Gamma\) ensures automorphisms map components to isomorphic counterparts. The total count of $s$-dimensional subspaces of $G$ is \(\binom{2s}{s}_2\), which is odd (as both numerator and denominator in the binomial coefficient are odd). Consequently, each component contains an odd number of vertices, exceeding \(2^{s^2}\) (a consequence of regularity). By Theorem \ref{watkin}, the edge connectivity of every component equals \(2^{s^2}\). This leads to the following conclusion regarding maximum matchings in \(\Gamma\).

\begin{thm}
\label{thm:maxmatch}
Let \( G = (\mathbb{Z}_2)^{2s} \), and let \(\Gamma\) be the graph whose vertices are \(s\)-dimensional subspaces of \(G\), with edges connecting two subspaces if their intersection is trivial. If $\Gamma$ has $r$, $r\ge 1$ components, then the maximum matching of $\Gamma$ has size $\sum_{i=1}^r \left \lfloor \frac{|V_i|}{2}  \right \rfloor$.
\end{thm}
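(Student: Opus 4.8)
The plan is to reduce the statement to a component-wise application of the odd-vertex matching result, Theorem~\ref{thm:odd}, and then assemble the resulting matchings across components. The essential elementary observation is that a matching of a graph is precisely a disjoint union of matchings of its connected components; hence the maximum matching number of $\Gamma$ equals the sum of the maximum matching numbers of its components $V_1,\dots,V_r$. It therefore suffices to show that each component $V_i$ admits a maximum matching of size exactly $\left\lfloor \frac{|V_i|}{2}\right\rfloor$, that is, a near-perfect matching, and then add these up.

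First I would collect the structural properties of a single component that were already assembled in the discussion preceding the theorem. By vertex-transitivity of $\Gamma$, its automorphism group permutes the components, so all components are mutually isomorphic; being a component, each $V_i$ is connected, and, inheriting the $2^{s^2}$-regularity from Theorem~\ref{thm:reg}, each $V_i$ is $2^{s^2}$-regular. Because $\Gamma$ contains a triangle (its non-bipartiteness was shown by exhibiting three pairwise trivially intersecting $s$-dimensional subspaces $H,K,L$) and the components are isomorphic, every component is non-bipartite. Finally, since the total number of vertices $\binom{2s}{s}_2$ is odd and the $r$ isomorphic components share the vertex set equally, both $r$ and each $|V_i| = \binom{2s}{s}_2 / r$ are odd.

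Next I would apply Theorem~\ref{watkin} to each component: a connected, vertex-transitive, non-bipartite, $k$-regular graph has edge connectivity exactly $k$. With $k = 2^{s^2}$ this gives $\lambda(V_i) = 2^{s^2} \geq k-1$, so the hypotheses of Theorem~\ref{thm:odd} are satisfied by every component. That theorem then yields a maximum matching of size $\left\lfloor \frac{|V_i|}{2}\right\rfloor$ in each $V_i$. Summing over the $r$ components and invoking the disjoint-union observation produces the maximum matching size $\sum_{i=1}^r \left\lfloor \frac{|V_i|}{2}\right\rfloor$, as claimed, and the argument covers the connected case $r=1$ as a special instance.

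The main obstacle is not a single hard calculation but the correct transfer of every hypothesis of Theorem~\ref{thm:odd} from $\Gamma$ down to its components---in particular confirming non-bipartiteness and the exact edge-connectivity bound $\lambda \geq k-1$ at the component level, since Theorem~\ref{watkin} is precisely what upgrades vertex-transitivity to the required connectivity, and it in turn demands that each component be vertex-transitive and non-bipartite. Once these inheritances are justified, via the mutual isomorphism of the components and the parity of $\binom{2s}{s}_2$, the matching count follows immediately and the statement holds regardless of whether $\Gamma$ is actually connected.
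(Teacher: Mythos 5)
Your proposal is correct and follows essentially the same route as the paper's own proof: use vertex-transitivity to conclude the components are mutually isomorphic, connected, $2^{s^2}$-regular, non-bipartite graphs with an odd number of vertices, invoke Theorem~\ref{watkin} for the edge-connectivity bound, apply Theorem~\ref{thm:odd} to each component, and sum. Your write-up is in fact somewhat more careful than the paper's in spelling out why non-bipartiteness transfers to every component and why a maximum matching decomposes over components, but these are elaborations of the same argument rather than a different one.
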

\begin{proof}
Since $\Gamma$ is vertex transitive, all of its components are vertex transitive and isomorphic. Therefore, if $\Gamma$ has more than one component, each component is a $2^{s^2}$-regular, non bipartite graph with odd number of vertices with edge connectivity $2^{s^2}$. Then by Theorem \ref{thm:odd}, the  maximum matching size of $i$-th component is $\left \lfloor \frac{|V_i|}{2}  \right \rfloor$ and hence the result follows.
\end{proof}

Using the above results, we can construct quasi-unifrm batch codes using subgroups of $G=(\Z_2)^k$, where $k$ is even, as follows:

\begin{pro}
Let $G= (\mathbb{Z}_2)^k$, $k$ is an even positive integer. Then there exists a quasi-uniform batch code of length $n= \sum_{m=1}^{k-1} \binom{k}{m}_2$, dimension  $k$ and $$t= \sum_{m=1}^{\frac{k}{2}-1}\binom{k}{m}_2+  \sum_{i=1}^r \left \lfloor \frac{|V_i|}{2}  \right \rfloor$$ where $r$ is the number of components in the graph corresponding to the subspaces of dimension $\frac{k}{2}$ as in the previous theorem.
\end{pro}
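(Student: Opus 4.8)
The plan is to assemble the final code by taking \emph{all} nontrivial subgroups of $G=(\Z_2)^k$ as codeword symbols and then counting how many pairwise-disjoint recovery sets can be formed, exactly mirroring the odd-$k$ construction but with the middle dimension $\frac{k}{2}$ handled separately. First I would fix the length: the number of codeword symbols equals the total number of nontrivial subgroups of $G$, which by Remark \ref{def:gaussian} and Remark \ref{sgpsodd} is $n=\sum_{m=1}^{k-1}\binom{k}{m}_2$. This gives the stated $n$ immediately, and the dimension is $k$ since each information vector is an element of $G=(\Z_2)^k$ and the intersection of all chosen subgroups is trivial, so the encoding map $\psi$ from Section \ref{sec:construction} is injective.

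Next I would count the recovery sets. As established in Section \ref{sec:construction}, a single query can be answered from any pair of codeword symbols whose underlying subgroups intersect trivially, since the full information vector is reconstructed from the intersection of two cosets. So the batch parameter $t$ is the total number of such disjoint pairs we can simultaneously extract. I would split the dimensions into two regimes. For each $m$ with $1\le m\le \frac{k}{2}-1$, Theorem \ref{thm:main} pairs the $m$-dimensional subspaces with the $(k-m)$-dimensional ones through a perfect matching of the associated bipartite graph, contributing exactly $\binom{k}{m}_2$ disjoint pairs; summing over these $m$ gives the first term $\sum_{m=1}^{\frac{k}{2}-1}\binom{k}{m}_2$. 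By the symmetry noted in Remark \ref{sgpsodd}, this single sum accounts for \emph{all} subgroups of dimension $m\ne\frac{k}{2}$, since each such $m$-dimensional subspace is matched to a $(k-m)$-dimensional partner exactly once.

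The remaining subgroups are those of dimension exactly $\frac{k}{2}$, and here the bipartite pairing of Theorem \ref{thm:main} is unavailable because a subspace and its complement have the same dimension. Instead I would invoke the self-pairing graph $\Gamma$ on the $s$-dimensional subspaces (with $s=\frac{k}{2}$), adjacency given by trivial intersection. By Theorem \ref{thm:maxmatch}, if $\Gamma$ has $r$ components then its maximum matching has size $\sum_{i=1}^r\lfloor |V_i|/2\rfloor$, each matched edge furnishing one disjoint recovery pair. Adding this to the first sum yields the claimed $t=\sum_{m=1}^{\frac{k}{2}-1}\binom{k}{m}_2+\sum_{i=1}^r\lfloor |V_i|/2\rfloor$, and the construction of Section \ref{sec:construction} certifies that these $t$ recovery sets are mutually disjoint, completing the proof.

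The main obstacle is the middle-dimension term: unlike the odd-$k$ case, I cannot appeal to a clean bijection between two distinct dimension classes, so I must rely on the matching-size result for $\Gamma$, whose value depends on the (possibly unknown) number $r$ of connected components. The honest point is that $r$ is left as a parameter rather than evaluated, so the statement expresses $t$ in terms of $r$ rather than as a closed form; the proof itself then reduces to correctly citing Theorems \ref{thm:main} and \ref{thm:maxmatch} and verifying that the recovery sets drawn from different dimension classes remain pairwise disjoint.
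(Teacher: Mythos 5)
Your proposal is correct and follows essentially the same route as the paper's own proof: the length is the count of all nontrivial subgroups, the recovery pairs for dimensions $m \ne \frac{k}{2}$ come from the bipartite perfect matchings of Theorem \ref{thm:main}, and the middle dimension is handled via the maximum matching of $\Gamma$ from Theorem \ref{thm:maxmatch}. Your added remarks on the injectivity of the encoding map and on $r$ being left as an unevaluated parameter are accurate elaborations of points the paper leaves implicit.
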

\begin{proof} 
There are $\binom{k}{\frac{k}{2}}_2$ subgroups of dimension $\frac{k}{2}$. From Theorem  \ref{thm:maxmatch}, there exists $\sum_{i=1}^r \left \lfloor \frac{|V_i|}{2}  \right \rfloor$ pairs of subgroups of dimension $\frac{k}{2}$ where $\sum_{i=1}^r |V_i|= \binom{k}{\frac{k}{2}}_2$.

For all other non-trivial subgroups of dimension $m \in \{1, \cdots, k-1\}\setminus {\frac{k}{2}}$, the number of recovery sets of size 2 can be computed as $\sum_{m=1}^{\frac{k}{2}-1}\binom{k}{m}_2$ using similar ideas in  Theorem \ref{thm:main}.  

Combining all, we have the required pairs of disjoint recovery subgroups.
\end{proof}

Next result computes the optimum length for a batchcode in terms of the batch size $t$.

\begin{thm}
Given any batch of requests $(i_1,i_2,\cdots, i_t)$ of information symbols of size $t$, the optimal $(n,k,t)$-batch code $C$ has length $\bf{n\le  2t-r}$ where $1\le r \le \mbox{min}\{ t, k\}$ is the number of distinct information symbols in the batch of requests.
\end{thm}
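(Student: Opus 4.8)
The statement is an achievability bound, so the plan is to exhibit, for the given request $(i_1,\ldots,i_t)$, an explicit quasi-uniform batch code of length at most $2t-r$ that serves it; the optimal length can then only be smaller. First I would record the combinatorial data of the batch. Among the $t$ indices there are exactly $r$ distinct information symbols; writing $m_j$ for the multiplicity of the $j$-th distinct symbol, we have $m_j\ge 1$ and $\sum_{j=1}^r m_j=t$, so the number of \emph{repeated} requests is $\sum_{j=1}^r(m_j-1)=t-r$. This decomposition $t=r+(t-r)$ is the whole source of the saving, and the two extreme cases are a useful sanity check: the all-distinct batch gives $r=t$ and bound $t$, while the PIR-type all-identical batch gives $r=1$ and bound $2t-1$.

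The construction then splits the $t$ recovery sets into two families. For the \emph{first} occurrence of each of the $r$ distinct symbols I would store that information symbol systematically in its own codeword position, yielding $r$ singleton recovery sets at a cost of $r$ positions. For each of the remaining $t-r$ repeated requests I would use a size-$2$ recovery set drawn from the quasi-uniform construction of Section~\ref{sec:construction}: by the decoding principle established there, any two codeword symbols attached to trivially intersecting subgroups reconstruct the entire information vector, hence in particular the requested symbol. Allocating a fresh disjoint pair of positions to each repeat uses $2(t-r)$ further positions, so the total length is $r+2(t-r)=2t-r$.

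It then remains to verify the defining property of a batch code, namely pairwise disjointness of the $t$ recovery sets. The $r$ singletons sit on distinct systematic positions and are disjoint; the $t-r$ pairs are placed on fresh positions, hence disjoint from one another and from the systematic block; and each set reconstructs its target. The range $1\le r\le\min\{t,k\}$ is immediate, since a nonempty batch has at least one distinct symbol, cannot have more distinct symbols than its size $t$, and cannot reference more than the $k$ available information symbols. This produces a batch code of length exactly $2t-r$ serving the request, giving $n\le 2t-r$ for the optimal code.

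The main obstacle is guaranteeing that the required $t-r$ size-$2$ recovery sets can genuinely be chosen mutually disjoint and disjoint from the systematic positions; this is exactly the problem of finding enough trivially intersecting subgroup pairs. For the groups $(\Z_2)^k$ this supply is furnished by the matching results proved earlier — Theorem~\ref{thm:main} together with the Hall/Tutte analysis underlying Propositions~\ref{order28} and~\ref{order4} — whose counts of disjoint pairs grow far faster than the linear demand $t-r$. Once sufficiently many disjoint pairs are available, the allocation above goes through verbatim and the bound follows.
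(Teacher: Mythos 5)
Your proposal is correct and follows essentially the same route as the paper's own proof: both decompose the batch into the $r$ distinct symbols, served directly from $r$ systematic positions, and the $t-r$ repeated requests, each served by a fresh disjoint size-two recovery set, giving $r+2(t-r)=2t-r$. Your write-up is in fact more careful than the paper's (which elides the disjointness check and the supply of trivially intersecting subgroup pairs), but the underlying argument is identical.
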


\begin{proof}
Given the batch of requests $(i_1,i_2,\cdots, i_t)$, we have to construct  an $(n,k,t)$ batch code. \\
Let $(i_1,i_2, \cdots, i_r)$ be the distinct information symbols in the batch and the remaining symbols are repetitions of them. If the codewords contain all informations symbols as coded symbols, the $r$ distinct symbols are directly served. For serving one of the remaining requests,  combinations at least 2 coded symbols are needed and all recovery sets are disjoint. Hence, we must have at least $2(t-r)$ coded symbols are necessary to serve the $t-r$ queries remaining. In total $n\le 2(t-r)+r = 2t-r$.
\end{proof}

Note that the length $n$ is depending on $t$ and $r$, not on $k$.

\begin{rem}
The quasi-uniform batch codes constructed above are near optimal for a given $t$ since the length $n \approx 2t$.
\end{rem} 

\section{Quasi-uniform codes as locally repairable codes}
As demonstrated in prior sections, quasi-uniform codes exhibit structural properties that align with their utility as \textit{batch codes} \cite{dimakis}. Notably, their reconstruction mechanism for information symbols inherently relies on intermediate recovery of coded symbols. If the decoding process terminates after reconstructing these coded symbols (rather than progressing to information symbols), the code effectively functions as a \textit{Locally Repairable Code (LRC)} \cite{vitali}. This duality arises from a fundamental distinction between batch codes and LRCs: the former prioritize direct recovery of \textit{information symbols}, while the latter focus on efficient repair of \textit{coded symbols} via localized parity checks. 

A closer examination of this relationship reveals that the quasi-uniform framework naturally supports LRC-like resilience. Specifically, these codes admit recovery from $t$ erasures among coded symbols while maintaining an \textit{availability} parameter of 2, as defined in the LRC literature \cite{vitali}. However, the interplay between quasi-uniformity and LRC parameters (e.g., locality, minimum distance) requires further algebraic analysis beyond our current scope. 

\section{Conclusion}
This work presents a new method to construct batch codes for distributed storage systems by adapting quasi-uniform codes—originally built using algebraic groups—to prioritize efficient recovery of requested data batches. Focusing on $2$-groups, we leverage their subgroup structures to design codes that balance storage loads effectively. The symmetry of these groups ensures the codes can handle large request batches while naturally supporting features of Locally Repairable Codes (LRCs) and Private Information Retrieval (PIR) codes. By blending tools from linear algebra, combinatorics, and group theory, this approach demonstrates how abstract mathematical structures can solve practical coding challenges.

While our construction highlights the potential of group theory in code design, open questions remain. Future work could extend this framework to p-groups (for primes $p>2$) and compare performance with traditional batch codes to identify trade-offs. Additionally, the deep interplay between subgroup analysis and coding theory revealed here suggests a broader principle: algebraic regularity can simplify code engineering. This connection not only advances storage solutions but also enriches the dialogue between pure mathematics and real-world applications.

\section{Acknowledgement}
The author thanks Vitaly Skachek for insightful suggestions.
 
%
%

\medskip

\medskip

\end{document}